\theoremstyle{plain}
\newtheorem{numrem}[thm]{Remark}
\tikzset{p0/.style = {ellipse, draw, thick, minimum height = 0.7cm}}
\tikzset{p1/.style = {rectangle, minimum size=.7cm, draw, thick}}
\tikzset{>=stealth, shorten >=1pt}
\tikzset{every edge/.style = {thick, ->, draw}}
\tikzset{every loop/.style = {thick, ->, draw}}
\tikzset{
	weight/.style args={#1at#2anchor#3}{%
		postaction={decorate},%
		decoration={markings,mark=at position #2 with {\node[anchor=#3] {#1};}}
	}
}
\tikzset{
	elided/.style = {%
		postaction={decorate},%
		decoration={markings,mark=at position #1 with {\node[fill=white,transform shape] {$\cdots$};}}
	},
	elided/.default = .5
}
\tikzset{
	brace/.style args = {#1amplitude#2}{%
		draw,
		decorate,
		decoration={brace,amplitude=#2,mirror},
		postaction={decorate,decoration={markings,mark=at position .5 with {\node[anchor=north] {#1};}}},
	}
}
\newcommand{\parnode}[2]{$\nicefrac{#1}{#2}$}
\tikzset{circle split part fill/.style  args={#1,#2}{%
 alias=tmp@name, 
  postaction={%
    insert path={
     \pgfextra{%
     \pgfpointdiff{\pgfpointanchor{\pgf@node@name}{center}}%
                  {\pgfpointanchor{\pgf@node@name}{east}}%
     \pgfmathsetmacro\insiderad{\pgf@x}
      \fill[#1] (\pgf@node@name.base) ([xshift=-\pgflinewidth]\pgf@node@name.east) arc
                          (0:180:\insiderad-\pgflinewidth)--cycle;
      \fill[#2] (\pgf@node@name.base) ([xshift=\pgflinewidth]\pgf@node@name.west)  arc
                           (180:360:\insiderad-\pgflinewidth)--cycle;
         }}}}}
\definecolor{myred}{rgb}{1,0.604,0.604}
\definecolor{mydarkred}{rgb}{1,0.345,0.345}
\definecolor{myblue}{rgb}{0.635,0.675,0.966}
\definecolor{mydarkblue}{rgb}{0.412,0.475,0.957}
\definecolor{myyellow}{rgb}{1,0.976,0.604}
\definecolor{mydarkyellow}{rgb}{1,0.961,0.345}
\definecolor{mygreen}{rgb}{0.604,1,0.604}
\definecolor{mydarkgreen}{rgb}{0.345,1,0.345}
\tikzset{
	assign/.style = { fill=myblue },
	choice/.style = { fill=myred },
	check/.style  = { fill=myyellow }
}
\DeclareRobustCommand{\rvdots}{%
  \vbox{
    \baselineskip4\p@\lineskiplimit\z@
    \kern-\p@
    \hbox{.}\hbox{.}\hbox{.}
  }}
\newcommand{\myquot}[1]{``#1''}
\newcommand{\bigo}[0]{\mathcal{O}}
\newcommand{\sign}{\mathrm{Sgn}}
\newcommand{\abs}[1]{|#1|}
\newcommand{\size}[1]{|#1|}
\newcommand{\card}[1]{\size{#1}}
\newcommand{\set}[1]{\{ #1 \}}
\newcommand{\compl}[1]{\overline{#1}}
\newcommand{\nats}{\mathbb{N}}
\newcommand{\ints}{\mathbb{Z}}
\renewcommand{\epsilon}{\varepsilon}
\newcommand{\arena}{\mathcal{A}}
\newcommand{\game}{\mathcal{G}}
\newcommand{\thrarena}{\mathcal{A}'}
\newcommand{\thrgame}{\mathcal{G}_b}
\newcommand{\weight}{w}
\newcommand{\Cost}{\mathrm{Cost}}
\newcommand{\col}{\Omega}
\newcommand{\wincond}{\mathrm{Win}}
\newcommand{\winreg}{\mathcal{W}}
\newcommand{\parity}{\mathrm{Parity}}
\newcommand{\cp}{\mathrm{WeightParity}}
\newcommand{\bcp}{\mathrm{BndWeightParity}}
\newcommand{\energyparity}{\mathrm{EnergyParity}}
\newcommand{\mpp}{\mathrm{MeanPayoffParity}\xspace}
\newcommand{\countdown}{\text{Countdown}\xspace}
\newcommand{\mem}{\mathcal{M}}
\newcommand{\init}{\text{init}}
\newcommand{\update}{\mathrm{upd}}
\newcommand{\nxt}{\mathrm{Nxt}}
\newcommand{\ampl}{\mathrm{Ampl}}
\newcommand{\Cor}{\mathrm{Cor}}
\newcommand{\answer}[1]{\mathrm{Ans}({#1})} 
\newcommand{\att}[3]{\mathrm{Attr}_{#1}^{#2}(#3)} 
\newcommand{\np}{\textsc{NP}}
\newcommand{\conp}{\textsc{co-NP}}
\newcommand{\exptime}{\textsc{ExpTime}}
\newcommand{\ptime}{\textsc{PTime}}
\newcommand{\logspace}{\textsc{LogSpace}\xspace}
\newcommand{\pspace}{\textsc{PSpace}\xspace}
\newcommand{\epi}{\text{EPI}\xspace}
\newcommand{\epis}{\text{EPIs}\xspace}
\newcommand{\mrr}{\text{posMRR}\xspace}
\newcommand{\esi}{\text{ESI}\xspace}
\newcommand{\esis}{\text{ESIs}\xspace}
\newcommand{\req}{\text{req}}
\newcommand{\del}{\text{del}}
\newcommand{\ans}{\text{ans}}
\newcommand{\I}{\mathit{I}}
\newcommand{\II}{\mathit{II}}
\newcommand{\III}{\mathit{III}}
\newcommand{\IV}{\mathit{IV}}
\newcommand{\unpol}{\text{unpol}}
\newcommand{\reach}{\mathcal{R}}
\newcommand{\suffix}{\mathrm{sfx_\top}}
\keywords{Infinite Games, Quantitative Games, Parity Games}
\title{Parity Games with Weights}
\author[S.~Schewe]{Sven Schewe\rsuper{a}}
\address{\lsuper{a}University of Liverpool, Liverpool, United Kingdom}
\email{\{sven.schewe,martin.zimmermann\}@liverpool.ac.uk}
\thanks{Supported by the EPSRC projects `Energy Efficient Control' (EP/M027287/1) and `Solving Parity Games in Theory and Practice' (EP/P020909/1).}
\author[A.~Weinert]{Alexander Weinert\rsuper{b}}
\address{\lsuper{b}German Aerospace Center (DLR), Intelligent and Distributed Systems, Köln, Germany}
\email{alexander.weinert@dlr.de}
\thanks{Supported by the project ``TriCS'' (ZI 1516/1--1) of the German Research Foundation (DFG) and the Saarbrücken Graduate School of Computer Science. Major parts of this work were performed while the author was employed with the Reactive Systems Group at Saarland University, Germany.}
\author[M.~Zimmermann]{Martin Zimmermann\rsuper{a}}
\begin{document}

\begin{abstract}
Quantitative extensions of parity games have recently attracted significant interest.
These extensions include parity games with energy and payoff conditions as well as finitary parity games and their generalization to parity games with costs.
Finitary parity games enjoy a special status among these extensions, as they offer a native combination of the qualitative and quantitative aspects in infinite games:
The quantitative aspect of finitary parity games is a quality measure for the qualitative aspect, as it measures the limit superior of the time it takes to answer an odd color by a larger even one.
Finitary parity games have been extended to parity games with costs, where each transition is labeled with a nonnegative weight that reflects the costs incurred by taking it.
We lift this restriction and consider parity games with costs with arbitrary integer weights.

We show that solving such games is in $\np \cap \conp$, the signature complexity for games of this type.
We also show that the protagonist has finite-state winning strategies, and provide tight pseudo-polynomial bounds for the memory he needs to win the game.
Naturally, the antagonist may need infinite memory to win.
Moreover, we present tight bounds on the quality of winning strategies for the protagonist.

Furthermore, we investigate the problem of determining, for a given threshold~$b$, whether the protagonist has a strategy of quality  at most~$b$ and show this problem to be \exptime-complete.
The protagonist inherits the necessity of exponential memory for implementing such strategies from the special case of finitary parity games.
\end{abstract}


\maketitle

\section{Introduction}%
\label{sec:introduction}
Finite games of infinite duration offer a wealth of challenges and applications that has garnered a lot of attention.
The traditional class of games under consideration were games with a simple parity~\cite{Kozen/83/mu,Emerson+Lei/86/Parity,Emerson+Jutla/91/Memoryless,McNaughton/93/Games,Browne-all/97/fixedpoint,Zielonka/98/Parity,Jurdzinski/98/UP,Jurdzinski/00/ParityGames,DBLP:conf/cav/VogeJ00,Jurdzinski/08/subex,Schewe/08/improvement,ScheweTV15/symmetric,Schewe/17/parity,CaludeJKLS/17/qp,JL17,FearnleyJS0W/17/QP,Lehtinen/17/qp} or payoff~\cite{Puri/95/simprove,Zwick+Paterson/96/payoff,Jurdzinski/98/UP,BjorklundVorobyov/07/subexp,ScheweTV15/symmetric} objective.
These games form a hierarchy with very simple tractable reductions from parity games through mean-payoff games~\cite{Puri/95/simprove,Zwick+Paterson/96/payoff,Jurdzinski/98/UP,BjorklundVorobyov/07/subexp,ScheweTV15/symmetric} and discounted-payoff games~\cite{Zwick+Paterson/96/payoff,Jurdzinski/98/UP,ScheweTV15/symmetric} to simple stochastic games~\cite{Condon/93/onalgorithms}.

More recently, games with a mixture of the qualitative parity condition and further quantitative objectives have been considered, including mean-payoff parity games~\cite{ChatterjeeHenzingerJurdzinski05} and energy parity games~\cite{ChatterjeeDoyen12}.
Finitary parity games~\cite{ChatterjeeHenzingerHorn09} take a special role within the class of games with mixed parity and payoff objectives.
To win a finitary parity game, Player~$0$ needs to enforce a play with a bound~$b$ such that almost all occurrences of an odd color are followed by a higher even color within at most~$b$ steps.

This is interesting, because it provides a natural link between the qualitative and quantitative objective.
One aspect that attracted attention is that, as long as one is not interested in optimizing the bound~$b$, these games are the only games of the lot that are known to be tractable~\cite{ChatterjeeHenzingerHorn09}.
However, minimizing the bound~$b$ itself is also interesting:
As the bounds limit the response time, small bounds translate to high quality guarantees~\cite{WeinertZimmermann17}.

This property calls for a generalization to different cost models, and a first generalization has been made with the introduction of parity games with costs~\cite{FijalkowZimmermann14}.
In parity games with costs, the basic cost function of finitary parity games---where each step incurs the same cost---is replaced with different nonnegative costs for different edges.
In this article, we generalize this further to \emph{general} integer costs: We decorate the edges with integer weights.
The quantitative aspect in these parity games with weights consists of having to answer almost all odd colors by a higher even color, such that the
amplitude of the accumulated weight on the path to this even color is bounded by a bound~$b$.

In addition to their conceptual charm, we show that parity games with weights are \ptime\ equivalent to energy parity games.
This indicates that these games are part of a natural complexity class, whereas the games with a plain objective appear to form a hierarchy.
We use the reduction from parity games with weights to energy parity games to solve them.
This reduction goes through intermediate reductions to and from \emph{bounded} parity games with weights.
These games have the additional restriction that the limit superior of the absolute weight of initial sequences of unanswered requests in a play is finite.
These bounded parity games with weights are then reduced to energy parity games.
The other direction of the reduction is through simple gadgets that preserve the main elements of winning strategies in games that are extended in two steps by very simple gadgets.
As a result, we obtain the same complexity results for parity games with weights as for energy parity games, i.e., $\np \cap \conp$, the signature complexity for finite games of infinite duration with parity conditions and their extensions.
Thereby, we obtain an argument that these games might be representatives of a natural complexity class, lending a further argument for the relevance of two player games with mixed qualitative and quantitative winning conditions.
Furthermore, Daviaud et al.\ recently showed that parity games with weights can even be solved in pseudo-quasi-polynomial time~\cite{DaviaudJurdzinskiLazic18}.

Naturally, parity games with weights subsume parity games (as a special case where all weights are zero), finitary parity games (as a special case where all weights are positive), and parity games with costs (as a special case where all weights are nonnegative).

We show that the protagonist has finite-state winning strategies, and provide tight pseudo-polynomial bounds for the memory he needs to win the game.
We also present tight bounds on the quality of winning strategies for the protagonist.
Naturally, the antagonist may need infinite memory to win.

Solving parity games with weights amounts to determining whether there exists a bound~$b$ such that the protagonist is able to enforce a play in which the amplitude of the accumulated weight between almost all requests and their corresponding answer is bounded by~$b$.
The value of~$b$, however, may be arbitrarily large, subject to the bounds on the quality of winning strategies.
Hence, it is natural to consider the threshold problem for parity games with weights:
``Given a parity game with weights~$\game$ and a bound~$b \in \nats$, is the protagonist able to enforce that the paths between almost all requests and their respective answers have an amplitude of at most~$b$?''
It is known that the threshold problem for finitary parity games and parity games with costs is \pspace-complete~\cite{WeinertZimmermann17}.
In this work, we show that the complexity increases even further in the case of parity games with weights, as the threshold problem for such games is \exptime-complete.
The complexity of strategies witnessing the answer to the threshold problem, however, does not increase: Both players require exponential memory in order to ensure or violate the bound~$b$, respectively, if they are able to do so at all.

This paper is an extended version of work published at CSL 2018~\cite{ScheweWeinertZimmermann18csl}.


\section{Preliminaries}%
\label{sec:preliminaries}
We denote the nonnegative integers by~$\nats$, the integers by~$\ints$, and define~$\nats_\infty = \nats \cup \set{\infty}$.
As usual, we have~$\infty > n$,~$-\infty < n$,~$n + \infty = \infty$, and~$-\infty - n = -\infty$ for all~$n \in \ints$.

An \textbf{arena}~$\arena=(V, V_0, V_1, E)$ consists of a finite, directed graph~$(V, E)$ and a partition~$\{V_0, V_1\}$ of $V$ into the positions of Player~$0$
(drawn as ellipses) and Player~$1$ (drawn as rectangles).
For pronomial convenience, we refer to Player~$0$ as he, and to Player~$1$ as she.
The size of $\arena$, denoted by $\size{\arena}$, is defined as $\size{V}$.
A \textbf{play} in $\arena$ is an infinite path~$\rho = v_0 v_1 v_2 \cdots$ through $(V, E)$.
To rule out finite plays, we require every vertex to be nonterminal.
We define~$\card{\rho} = \infty$.
Dually, for a finite play prefix~$\pi = v_0\cdots v_j$ we define~$\card{\pi} = j+1$.

A \textbf{game}~$\game = ( \arena, \wincond )$ consists of an arena $\arena$ with vertex set~$V$ and a set~$\wincond \subseteq V^\omega$ of winning plays for Player~$0$.
The set of winning plays for Player~$1$ is $V^\omega \setminus \wincond$. A winning condition~$\wincond$ is $0$-extendable if,
for all $\rho \in V^\omega$ and all $w \in V^*$, $\rho \in \wincond$ implies $w\rho \in \wincond$. Dually, $\wincond$ is $1$-extendable if,
for all $\rho \in V^\omega$ and all $w \in V^*$, $\rho \notin \wincond$ implies $w\rho \notin \wincond$. Finally, $\wincond$ is prefix-independent, if it is both $0$-extendable and $1$-extendable.

A \textbf{strategy} for Player~$i \in \set{0,1}$ is a mapping $\sigma \colon V^*V_i \rightarrow V$ such that $(v, \sigma(wv)) \in E$ holds true for all $wv \in V^* V_i$.
We say that $\sigma$ is \textbf{positional} if $\sigma(wv) = \sigma(v)$ holds true for every $wv \in V^*V_i$.
A play $v_0 v_1 v_2 \cdots$ is \textbf{consistent} with a strategy~$\sigma$ for Player~$i$, if $v_{j+1} = \sigma( v_0 \cdots v_j)$ holds true for every~$j$ with $v_j \in V_i$. A strategy~$\sigma$ for Player~$i$ is a \textbf{winning strategy} for $\game$ from~$v \in V$ if every play that starts in~$v$ and is consistent with $\sigma$ is won by Player~$i$.
If Player~$i$ has a winning strategy from~$v$, then we say Player~$i$ wins $\game$ from~$v$.
The \textbf{winning region} of Player~$i$ is the set of vertices, from which Player~$i$ wins~$\game$; it is denoted by~$\winreg_i(\game)$.
\textbf{Solving} a game amounts to determining its winning regions.
If~$\winreg_0(\game) \cup \winreg_1(\game) = V$, then we say that~$\game$ is \textbf{determined}.

Let~$\arena = (V, V_0, V_1, E)$ be an arena and let~$X \subseteq V$.
The~$i$-attractor of~$X$ is defined inductively as $\att{i}{}{X} = \att{i}{\card{V}}{X}$, where $\att{i}{0}{X} = X$ and
\begin{multline*}
	\att{i}{j}{X} = \att{i}{j-1}{X} \cup \set{ v \in V_i \mid \exists v' \in \att{i}{j-1}{X}.\, (v,v') \in E} \\ \cup \set{ v \in V_{1-i} \mid \forall (v, v') \in E.\, v' \in \att{i}{j-1}{X}}   \enspace.
\end{multline*}
Hence, $\att{i}{}{X}$ is the set of vertices from which Player~$i$ can force the play to enter~$X$:
Player~$i$ has a positional strategy~$\sigma_X$ such that each play that starts in some vertex in $\att{i}{}{X}$ and is consistent with~$\sigma_X$ eventually encounters some vertex from~$X$.
We call~$\sigma_X$ an attractor strategy towards~$X$.
Moreover, the~$i$-attractor can be computed in time linear in~$\card{E}$~\cite{NerodeRemmelYakhnis96}.
When we want to stress the arena~$\arena$ the attractor is computed in, we write~$\att{i}{\arena}{X}$.

A set~$X \subseteq V$ is a trap for Player~$i$, if every vertex in $X \cap V_i$ has only successors in $X$ and every vertex in $X \cap V_{1-i}$ has at least one successor in $X$. In this case, Player~$1-i$ has a positional strategy~$\tau_X$ such that every play starting in some vertex in $X$ and consistent with~$\tau_X$ never leaves $X$. We call such a strategy a trap strategy.

\begin{numrem}%
\label{remark:traps}
\hfill
\begin{enumerate}
	\item\label{remark:traps:attractorcomplement} The complement of an $i$-attractor is a trap for Player~$i$.

	\item\label{remark:traps:trapattractors} If $X$ is a trap for Player~$i$, then $\att{1-i}{}{X}$ is also a trap for Player~$i$.

	\item\label{remark:traps:extendability} If $\wincond$ is $i$-extendable and $(\arena,\wincond)$ determined, then $\winreg_{1-i}(\arena,\wincond)$ is a trap for Player~$i$.

\end{enumerate}
\end{numrem}

\noindent
A \textbf{memory structure}~$\mem = (M, \init, \update)$ for an arena $(V, V_0, V_1, E)$ consists of a finite set~$M$ of memory states, an initialization function $\init\colon V \rightarrow M$, and an update function~$\update\colon M \times E \rightarrow M$.
The update function can be extended to finite play prefixes in the usual way: $\update^+(v) = \init(v)$ and $\update^+(w v v') = \update(\update^+(w v ), (v,v'))$ for $w \in V^*$ and $(v,v') \in E$.
A next-move function $\nxt \colon V_i \times M \rightarrow V$ for Player~$i$ has to satisfy $(v, \nxt(v, m)) \in E$ for all $v \in V_i$ and $m \in M$.
It induces a strategy~$\sigma$ for Player~$i$ with memory~$\mem$ via $\sigma(v_0\cdots v_j) = \nxt(v_j, \update^+(v_0 \cdots v_j))$.
A strategy is called \textbf{finite-state} if it can be implemented by a memory structure.
We define $\card{\mem} = \card{M}$.
Slightly abusively, we say that the size of a finite-state strategy is the size of a memory structure implementing it.



\section{Parity Games with Weights}%
\label{sec:parity-weights-def}
Fix an arena~$\arena = (V, V_0, V_1, E)$.
A {\bf weighting} for~$\arena$ is a function~$\weight\colon E \rightarrow \ints$.
We define~$\weight(\epsilon) = \weight(v) = 0$ for all~$v \in V$ and extend~$\weight$ to sequences of vertices of length at least two by summing up the weights of the traversed edges.
Given a play (prefix)~$\pi = v_0v_1v_2\cdots$, we define the amplitude of~$\pi$ as~$\ampl(\pi) = \sup_{j < \card{\pi}} \abs{\weight(v_0 \cdots v_j)} \in \nats_\infty$.

A {\bf coloring} of $V$ is a function~$\col \colon V \rightarrow \nats$.
The classical parity condition requires almost all occurrences of odd colors to be answered by a later occurrence of a larger even color.
Hence, let $\answer{c} = \set{c' \in \nats \mid c' \ge c \text{ and $c'$ is even}}$ be the set of colors that ``answer'' a ``request'' for color~$c$.
We denote a vertex~$v$ of color~$c$ by~\parnode{v}{c}.

Fijalkow and Zimmermann introduced a generalization of the parity condition and the finitary parity condition~\cite{ChatterjeeHenzingerHorn09}, the parity condition with costs~\cite{FijalkowZimmermann14}.
There, the edges of the arena are labeled with \emph{nonnegative weights} and the winning condition demands that there exists a bound~$b$ such that almost all requests are answered with weight at most $b$, i.e., the weight of the infix between the request and the response has to be bounded by $b$.

Our aim is to extend the parity condition with costs by allowing for the full spectrum of weights to be used, i.e., by also incorporating negative weights.
In this setting, the weight of an infix between a request and a response might be negative. Thus, the extended condition requires the weight of the infix to be bounded from above and from below.\footnote{We discuss other possible interpretations of negative weights in Section~\ref{sec:conclusion}.}
To distinguish between the parity condition with costs and the extension introduced here, we call our extension the parity condition with weights.

Formally, let $\rho = v_0 v_1 v_2 \cdots$ be a play.
We define the cost-of-response at position~$j \in \nats$ of $\rho$ by
\[
	\Cor(\rho, j) = \min\set{\ampl(v_j \cdots v_{j'}) \mid j' \geq j, \col(v_{j'}) \in \answer{\col(v_j)}} ,
\]
where we use $\min \emptyset = \infty$.
As the amplitude of an infix only increases by extending the infix, $\Cor(\rho, j)$ is the amplitude of the shortest infix that starts at position~$j$ and ends at an answer to the request posed at position~$j$.
We illustrate this notion in Figure~\ref{fig:cor}.

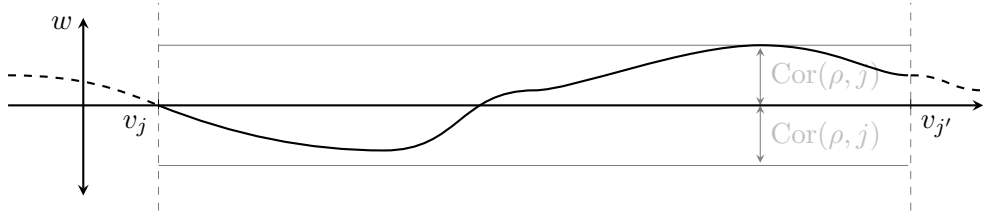
\begin{figure}
\centering
\begin{tikzpicture}[thick,yscale=.4]
	\draw[stealth-stealth] (-1,-3) -- (-1,3);
	\draw (-2,0) edge (11,0);
	
	\node [anchor=east] at (-1,2.75) {$\weight$};
	
	\draw[thick,dashed]
		(-2,1) .. controls (-1,1) and (-.5,.5) .. (0,0);
		
	\draw[thick]
		(0,0) .. controls (1,-1) and (2, -1.5) ..
		(3,-1.5) .. controls (4,-1.5) and (4, .5) ..
		(5,.5) .. controls (5.5,.5) and (7, 2) ..
		(8, 2) .. controls (9, 2) and (9.5,1) .. (10,1);
		
	\draw[thick,dashed]
		(10,1) .. controls (10.5,1) and (10.5,.5) .. (11,.5);
		
	\begin{pgfonlayer}{background}[thick]
		\draw[gray,dashed] (0,-3.5) -- (0,3.5);
		\draw[gray,dashed] (10,-3.5) -- (10,3.5);
		
		\draw[gray]
			(0,2) -- (10,2)
			(0,-2) -- (10,-2);
			
		\draw[stealth-stealth,gray]
			(8,0) -- node [anchor=west,lightgray] {$\Cor(\rho, j)$} (8,2);
		\draw[stealth-stealth,gray]
			(8,0) -- node [anchor=west,lightgray] {$\Cor(\rho, j)$} (8,-2);
	\end{pgfonlayer}{background}
		
	\draw (0,.2) -- (0,-.2);
	\node[anchor=north east] at (0,0) {$v_j$};
	\draw (10,.2) -- (10,-.2);
	\node[anchor=north west] at (10,0) {$v_{j'}$};
	
\end{tikzpicture}

\caption{The cost-of-response of some request posed by visiting vertex~$v_j$, which is answered by visiting vertex~$v_{j'}$.}%
\label{fig:cor}
\end{figure}

We say that a request at position~$j$ is answered with cost~$b$, if $\Cor(\rho, j) = b$.
Consequently, a request with an even color is answered with cost zero.
The cost-of-response of an unanswered request is infinite, even if the amplitude of the remaining play is bounded.
In particular, this means that an unanswered request at position~$j$ may be ``unanswered with finite cost~$b$'' (if the amplitude of the remaining play is~$b \in \nats$) or ``unanswered with infinite cost'' (if the amplitude of the remaining play is infinite).
In either case, however, we have~$\Cor(\rho, j) = \infty$.

We define the parity condition with weights as
\[\cp(\col, \weight) = \set{ \rho \in V^\omega \mid \limsup\nolimits_{j\rightarrow \infty} \Cor(\rho, j) \in \nats } \enspace.\]
I.e., $\rho$ satisfies the condition if and only if there exists a bound~$b \in \nats$ such that almost all requests are answered with cost less than $b$.
In particular, only finitely many requests may be unanswered, even with finite cost.
Note that the bound~$b$ may depend on the play $\rho$.

\begin{table}
\footnotesize
\centering
\begin{tabular}{lccccc}\toprule
	& Complexity & Mem.\ Pl.~$0$/Pl.~$1$ &  Bounds \\
	\midrule
	Parity Games~\cite{CaludeJKLS/17/qp} & quasi-poly. & pos./pos. & -- \\ 
	Energy Parity Games~\cite{ChatterjeeDoyen12,DaviaudJurdzinskiLazic18} &  pseudo-quasi-poly. & $\bigo(ndW)$/pos. & $\bigo(nW)$ \\ \midrule
	Finitary Parity Games~\cite{ChatterjeeHenzingerHorn09} & poly. & pos./inf. & $\bigo(nW)$ \\
	Parity Games with Costs~\cite{FijalkowZimmermann14,MogaveroMS15} &  quasi-poly. & pos./inf. & $\bigo(nW)$ \\ \bottomrule
\end{tabular}
\caption{Characteristic properties of variants of parity games.}%
\label{tab:characteristics-prev}
\end{table}

We call a game~$\game = (\arena, \cp(\col, \weight))$ a parity game with weights, and we define~$\card{\game} = \card{\arena} + \log(W)$, where~$W$ is the largest absolute weight assigned by~$\weight$; i.e., we assume weights to be encoded in binary.
If $\weight$ assigns zero to every edge, then $\cp(\col, \weight)$ is a classical (max-) parity condition, denoted by $\parity(\col)$.
Similarly, if $\weight$ assigns positive weights to every edge, then $\cp(\col, \weight)$ is equal to the finitary parity condition over~$\col$, as introduced by Chatterjee and Henzinger~\cite{ChatterjeeHenzinger06}.
Finally, if~$\weight$ assigns only nonnegative weights, then~$\cp(\col, \weight)$ is a parity condition with costs, as introduced by Fijalkow and Zimmermann~\cite{FijalkowZimmermann14}.
In these cases, we refer to~$\game$ as a parity game, a finitary parity game, or a parity game with costs, respectively. Dually, every parity game, finitary parity game, and parity game with costs is a parity game with weights.
We recall the characteristics of these special cases in Table~\ref{tab:characteristics-prev}.


\section{Solving Parity Games with Weights}%
\label{sec:parity-weights-solve}

We now show how to solve parity games with weights.
Our approach is inspired by the classic work on finitary parity games~\cite{ChatterjeeHenzingerHorn09} and parity games with costs~\cite{FijalkowZimmermann14}:
We first define a stricter variant of these games, which we call bounded parity games with weights, and then show two reductions:
\begin{itemize}
 \item parity games with weights can be solved in polynomial time with oracles that solve bounded parity games with weights (in this section); and
 \item bounded parity games with weights can be solved in polynomial time with oracles that solve energy parity games (Section~\ref{sec:bounded-parity-weights-solve}).
\end{itemize}

\noindent
Furthermore, we provide a polynomial time reduction from solving energy parity games to solving parity games with weights  in Section~\ref{sec:ep2(b)pw}.
We thereby show that parity games with weights, bounded parity games with weights, and energy parity games belong to the same complexity class.

The energy parity games that we reduce to are known to be efficiently solvable, as they are in $\np \cap \conp$ due to Chatterjee and Doyen~\cite{ChatterjeeDoyen12}.
Moreover, they are \logspace-equivalent to mean-payoff parity games as introduced by Chatterjee, Henzinger, and Jurdzi{\'n}ski~\cite{ChatterjeeHenzingerJurdzinski05}.
Hence they can be solved in pseudo-quasi-polynomial time, due to recent advances by Daviaud, Jurdzi{\'n}ski, and Lazi{\'c}~\cite{DaviaudJurdzinskiLazic18}.

We first introduce the \textbf{ bounded parity condition with weights}, which is a strengthening of the parity condition with weights. Hence, it is also induced by a coloring and a weighting:
\begin{multline*}
	\bcp(\col, \weight) = \cp(\col, \weight) \\
	\cap \set{ \rho \in V^\omega \mid \text{no request in } \rho \text{ is unanswered with infinite cost} } \enspace.
\end{multline*}
Note that this condition allows for a finite number of unanswered requests, as long as they are unanswered with finite cost.

We solve parity games with weights by repeatedly solving bounded parity games with weights.
To this end, we apply the following two properties of the winning conditions:
We have $\bcp(\col,\weight) \subseteq \cp(\col,\weight)$ as well as that $\cp(\col,\weight)$ is~$0$-extendable.
Hence, if Player~$0$ has a strategy from a vertex~$v$ such that every consistent play has a suffix in $\bcp(\col,\weight)$, then the strategy is winning for her from $v$ w.r.t.\ $\cp(\col,\weight)$.
Thus,
\[ \att{0}{}{\winreg_0(\arena, \bcp(\col, \weight))} \subseteq \winreg_0(\arena, \cp(\col,\weight)) \enspace . \]
The algorithm that solves parity games with weights repeatedly removes attractors of winning regions of the bounded parity game with weights until a fixed point is reached.
We will later formalize this sketch to show that the removed parts are a subset of Player~$0$'s winning region in the parity game with weights.

To show that the obtained fixed point covers the complete winning region of Player~$0$, we use the following lemma to show that the remaining vertices are a subset of Player~$1$'s winning region in the parity game with weights. The proof is very similar to the corresponding one for finitary parity games and parity games with costs.

\begin{lem}%
\label{lem:parity-to-bounded-parity:reduction}
Let~$\game = (\arena, \cp(\col, \weight))$ and let~$\game' = (\arena, \bcp(\col, \weight))$.
If $\winreg_0(\game') = \emptyset$, then $\winreg_0(\game) = \emptyset$.
\end{lem}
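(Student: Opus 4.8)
The plan is to prove the contrapositive direction at the level of strategies: I will show that if $\winreg_0(\game') = \emptyset$, then Player~$1$ wins $\game$ from \emph{every} vertex, which immediately yields $\winreg_0(\game) = \emptyset$ (a play cannot be won by both players). Since $\bcp(\col,\weight)$ is a Borel winning condition, $\game'$ is determined, so $\winreg_0(\game') = \emptyset$ gives a winning strategy $\tau_v$ for Player~$1$ in $\game'$ from each vertex $v$. From these building blocks I would construct, for an arbitrary starting vertex, a (generally infinite-memory) strategy $\tau$ for Player~$1$ in $\game$ that proceeds in \emph{phases}: phase~$k$ begins at the current vertex $v^{(k)}$, and during phase~$k$ Player~$1$ follows $\tau_{v^{(k)}}$ on the play suffix generated since the start of phase~$k$, until the following phase-change condition triggers: \emph{there is a currently unanswered request, posed during phase~$k$, whose amplitude measured from the position where it was posed up to the current position exceeds~$k$.} At that moment phase~$k+1$ starts at the current vertex.

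Next I would analyse a play $\rho$ consistent with $\tau$, distinguishing two cases. If some phase~$k$ never ends, then the play suffix $\rho^{(k)}$ produced during phase~$k$ is an entire play consistent with $\tau_{v^{(k)}}$ from $v^{(k)}$, hence $\rho^{(k)} \notin \bcp(\col,\weight)$. But since the phase-change condition is never triggered, every request in $\rho^{(k)}$ is either answered or remains unanswered with amplitude at most~$k$; in particular, no request in $\rho^{(k)}$ is unanswered with infinite cost. By the definition of $\bcp(\col,\weight)$ this forces $\rho^{(k)} \notin \cp(\col,\weight)$, and since $\cp(\col,\weight)$ is prefix-independent (prepending a finite prefix changes neither $\Cor$ at any existing position nor the $\limsup$) and $\rho$ is a finite prefix followed by $\rho^{(k)}$, we get $\rho \notin \cp(\col,\weight)$: Player~$1$ wins. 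If, on the other hand, every phase ends, then there are infinitely many phases, and the phase change out of phase~$k$ exhibits a request at some position $j_k$ lying between the start of phase~$k$ and the start of phase~$k+1$, so that $j_1 < j_2 < \cdots$, whose amplitude up to the phase-change position already exceeds~$k$. Any answer to this request in $\rho$ can occur only at or after the phase-change position, and the amplitude of an infix only grows as the infix is extended, so $\Cor(\rho, j_k) > k$. Hence $\limsup_{j\to\infty}\Cor(\rho,j) = \infty \notin \nats$, so again $\rho \notin \cp(\col,\weight)$ and Player~$1$ wins. Since this holds from every starting vertex, $\winreg_0(\game) = \emptyset$.

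The main point requiring care is the second case: the global play $\rho$ diverges from each phase strategy $\tau_{v^{(k)}}$ once a new phase begins, so a request counted during phase~$k$ may well be answered later, in a subsequent phase, and is therefore not literally an unanswered request of $\rho$. The argument nevertheless goes through because the cost-of-response is the amplitude of the infix up to the \emph{first} answer, and that infix necessarily contains the phase-$k$ segment on which the amplitude already exceeded~$k$; thus monotonicity of the amplitude under extension of infixes is exactly what makes the bound $\Cor(\rho,j_k) > k$ valid. A secondary subtlety, needed to turn these individual large costs into $\limsup = \infty$, is the restriction of the phase-change condition to requests \emph{posed during the current phase}: this guarantees that the witnessing positions $j_k$ are pairwise distinct, so the divergence is genuine and not merely an artefact of one stubborn unanswered request whose amplitude keeps triggering new phases.
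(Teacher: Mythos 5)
Your proposal is correct and follows essentially the same route as the paper's proof: determinacy of the Borel condition $\bcp(\col,\weight)$ yields per-vertex winning strategies for Player~$1$, which are stitched together via a reset counter/phase index, followed by the same two-case analysis (finitely many resets gives a suffix consistent with some $\tau_v$ whose violation of $\bcp$ must be a violation of $\cp$ since all costs are bounded by the phase index; infinitely many resets yields requests of unbounded cost-of-response, with the monotonicity of the amplitude ensuring the bound survives in the global play). The only cosmetic difference is that in the first case you invoke prefix-independence of $\cp(\col,\weight)$ directly, where the paper detours through the plain parity condition; both are valid.
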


\begin{proof}
As bounded parity conditions with weights are Borel, bounded parity games with weights are determined~\cite{Martin75}.
Hence, $\winreg_0(\game') = \emptyset$ implies that, for every vertex~$v$ of~$\arena$, Player~$1$ has a strategy~$\tau_v$ that is winning in $\game'$ from $v$.

We combine these strategies into a single strategy~$\tau$ for Player~$1$ that is winning in~$\game$ from every vertex of $\arena$.
This strategy is controlled by a vertex~$v^*$ (initialized with the starting vertex of the play) and a counter~$\kappa$ ranging over $\nats$ (initialized with zero).
The strategy~$\tau$ mimics the strategy~$\tau_{v^*}$ from $v^*$ until a request is followed by an infix without an answer and with amplitude $\kappa$. This implies that the cost-of-response of this request is at least~$\kappa$. If such a situation is encountered, then $v^*$ is set to the current vertex and $\kappa$ is incremented.
Furthermore, the history of the play is discarded at this point in the play, and~$\tau$ behaves henceforth like $\tau_{v^*}$ when starting at $v^*$ when this happens.

We now show that~$\tau$ is winning for Player~$1$ from every vertex in~$\game$.
Consider a play~$\rho$ that is consistent with this strategy.
If, on the one hand, $\kappa$ is updated infinitely often along~$\rho$, then~$\rho$ contains, for every $b \in \nats$, a request that has a cost-of-response that is larger than $b$.
Hence, it violates the parity condition with weights.

If, on the other hand, $\kappa$ is only updated finitely often, then $\rho$ has a suffix~$\rho'$ that starts in some $v$, which is consistent with $\tau_v$.
As $\tau_v$ is winning for Player~$1$ from $v$ in~$\game'$, $\rho'$ violates the bounded parity condition with weights.
Also, because $\kappa$ is updated only finitely often during the suffix, there is a bound~$b$ such that the amplitude of every suffix of $\rho'$ that starts at a request is bounded by $b$. Hence, the only way for $\rho'$ to violate the bounded parity condition with weights is to violate the parity condition.
Thus, the full play~$\rho$ also violates the parity condition, and therefore also the parity condition with weights, which is a strengthening of the parity condition.
Therefore, $\tau$ is indeed winning for Player~$1$ from every vertex in $\game$.
\end{proof}

Lemma~\ref{lem:parity-to-bounded-parity:reduction} implies that the algorithm for solving parity games with weights by repeatedly solving bounded parity games with weights (see Algorithm~\ref{algorithm_fixpoint}) is correct.
Note that we use an oracle for solving bounded parity games with weights.
We provide a suitable algorithm in Section~\ref{sec:bounded-parity-weights-solve}.

\begin{algorithm}
\begin{algorithmic}[1]
 \REQUIRE{Parity game with weights~$\game$ with arena~$\arena$, coloring~$\col$ and weighting~$\weight$}
 \STATE{$k = 0$}; $W_0^k = \emptyset$; $\arena_k = \arena$
 \REPEAT%
 \STATE{$k = k + 1$}
 \STATE{$X_k = \winreg_0(\arena_{k-1},\bcp(\col, \weight))$}
 \STATE{$W_0^k = W_0^{k-1} \cup \att{0}{\arena_{k-1}}{X_k}$}
 \STATE{$\arena_k = \arena_{k-1} \setminus \att{0}{\arena_{k-1}}{X_k}$}
 \UNTIL{$X_k = \emptyset$}
 \RETURN{$W_0^k$}
\end{algorithmic}
\caption{A fixed-point algorithm computing $\winreg_0(\arena, \cp(\col, \weight))$.}%
\label{algorithm_fixpoint}
\end{algorithm}

The loop terminates after at most $\size{\arena}$  iterations (assuming the algorithm solving bounded parity games with weights terminates), as, during each iteration, at least one vertex is removed from the arena. The correctness proof relies on Lemma~\ref{lem:parity-to-bounded-parity:reduction} and is similar to the ones for finitary parity games~\cite{ChatterjeeHenzingerHorn09} and for parity games with costs~\cite{FijalkowZimmermann14}.
The underlying argument only relies on a few properties of the winning condition;
it is very general---and fairly standard.
We use the same argument again later to establish the correctness of Algorithm~\ref{algorithm_fixpoint2}.

\begin{lem}%
\label{lem:algorithm-parity-weights-correctness}
Algorithm~\ref{algorithm_fixpoint} returns $\winreg_0(\arena, \cp(\col, \weight))$.
\end{lem}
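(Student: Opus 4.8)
The plan is to prove the two inclusions $W_0^k \subseteq \winreg_0(\arena, \cp(\col,\weight))$ and $\winreg_0(\arena,\cp(\col,\weight)) \subseteq W_0^k$ separately, where $W_0^k$ denotes the set returned by Algorithm~\ref{algorithm_fixpoint} upon termination (which happens after at most $\size{\arena}$ iterations, as already observed).

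For the first inclusion, I would argue by induction on the iteration count that each set $\att{0}{\arena_{j-1}}{X_j}$ added to $W_0^k$ lies inside $\winreg_0(\arena,\cp(\col,\weight))$. The key observation, already recorded in the excerpt, is that $\att{0}{}{\winreg_0(\arena,\bcp(\col,\weight))} \subseteq \winreg_0(\arena,\cp(\col,\weight))$, using that $\bcp(\col,\weight)\subseteq\cp(\col,\weight)$ and that $\cp(\col,\weight)$ is $0$-extendable. The subtlety is that $X_j$ is a winning region computed in the \emph{sub-arena} $\arena_{j-1}$, not in $\arena$. So I would show that Player~$0$'s strategy works in the full arena by combining: (i) on $W_0^{j-1}$, the (inductively obtained) winning strategy for $\cp(\col,\weight)$ in $\arena$; (ii) on $\att{0}{\arena_{j-1}}{X_j}\setminus X_j$, an attractor strategy in $\arena_{j-1}$ driving the play into $X_j$; and (iii) on $X_j$, the winning strategy for $\bcp(\col,\weight)$ in $\arena_{j-1}$. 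One must check that a play consistent with this combined strategy that stays in $\arena_{j-1}$ either eventually stays in $X_j$ (and hence has a suffix in $\bcp(\col,\weight)\subseteq\cp(\col,\weight)$, so wins by $0$-extendability) or eventually enters $W_0^{j-1}$ (and wins there); crucially $\arena_{j-1}\setminus X_j$ minus its lower attractors is a trap, so Player~$1$ cannot escape $\arena_{j-1}$ except into already-won territory. This gives $W_0^k\subseteq\winreg_0(\arena,\cp(\col,\weight))$.

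For the reverse inclusion, let $\arena_k$ be the final sub-arena, so that $\winreg_0(\arena_k,\bcp(\col,\weight))=X_{k+1}=\emptyset$. By Remark~\ref{remark:traps}\eqref{remark:traps:attractorcomplement}, $\arena_k = \arena\setminus W_0^k$ is a trap for Player~$0$ in $\arena$, hence Player~$1$ can confine any play starting in $\arena_k$ to $\arena_k$. Now apply Lemma~\ref{lem:parity-to-bounded-parity:reduction} to the game on the sub-arena $\arena_k$: since Player~$0$'s winning region for the bounded condition on $\arena_k$ is empty, her winning region for $\cp(\col,\weight)$ on $\arena_k$ is also empty, i.e.\ Player~$1$ wins the parity game with weights from every vertex of $\arena_k$. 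Composing the trap strategy with these winning strategies, Player~$1$ wins $\game$ from every vertex of $\arena_k = V\setminus W_0^k$. Hence $\winreg_0(\arena,\cp(\col,\weight))\subseteq W_0^k$, and together with the first inclusion this proves the claim.

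The main obstacle is the first inclusion: one must carefully manage the bookkeeping of which strategy is active in which region and verify that the winning condition is satisfied along plays that pass through several regions before settling. The prefix-independence / $0$-extendability of $\cp(\col,\weight)$ is what makes this go through — it lets us ignore the finite prefix spent in the attractor and in previously-processed layers. I expect this to be the standard ``layered attractor decomposition'' argument, and indeed the paper flags it as being ``very general — and fairly standard,'' reused later for Algorithm~\ref{algorithm_fixpoint2}; the only genuinely new point relative to the finitary/cost cases is making sure the sub-arena attractor computations interact correctly with the full-arena winning condition, which the trap structure handles.
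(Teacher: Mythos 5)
Your proposal is correct and follows essentially the same route as the paper's proof: the first inclusion is the layered composition of attractor strategies with the $\bcp$-winning strategies on the $X_j$ (using $1$-extendability of $\bcp$ to make each layer a trap for Player~$1$ inside its sub-arena, and $0$-extendability of $\cp$ to discard the finite prefix), and the second inclusion applies Lemma~\ref{lem:parity-to-bounded-parity:reduction} to the final sub-arena and uses that it is a trap for Player~$0$. The only cosmetic difference is that you organize the first inclusion as an induction on layers, whereas the paper argues via the eventually stabilizing, non-increasing sequence of layer indices along a play; these are equivalent formulations of the same argument.
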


\begin{proof}
Let~$\game = (\arena, \cp(\col, \weight))$ and let $k^*$ be the final iteration when running the algorithm on $\game$, i.e., its output is $W_0^{k^*} = \bigcup_{0 < k < k^*} \att{0}{\arena_{k-1}}{X_{k}}$.
First, we consider Player~$0$ and show $W_0^{k^*}  \subseteq \winreg_0(\game)$. For every vertex~$v$ that is in some~$X_{k}$, Player~$0$ has a strategy~$\sigma_v$ for $\game_{k} = (\arena_{k-1},\bcp(\col, \weight))$ that is winning from~$v$. Furthermore, for every attractor~$\att{0}{\arena_{k-1}}{X_k}$, he has a positional attractor strategy~$\sigma_k$. Now, we compose these strategies to a strategy~$\sigma$ for Player~$0$ in $\arena$ via
\[
\sigma(v_0 \cdots v_j) = \begin{cases}
\sigma_{k}(v_j) & \text{if } v_j \in \att{0}{\arena_{k-1}}{X_k} \setminus X_{k},\\
\sigma_{v_{j'}}(v_{j'} \cdots v_{j}) & \text{if } v_j \in X_k.\\
\end{cases}
\]
In the second case, $v_{j'} \cdots v_{j}$ is the longest suffix of $v_{0} \cdots v_{j}$ that only contains vertices from $X_{k}$, the set of vertices from which Player~$0$ has a winning strategy for~$\game_k$.

Consider a play~$\rho = v_0 v_1 v_2 \cdots$ in $\arena$ that starts in $W_0^{k^*}$ and that is consistent with $\sigma$. For every $j$ there is a unique $k_j$ in the range~$0 < k_j < k^*$ such that $v _j \in \att{0}{\arena_{k_j-1}}{X_{k_j}}$. As $\bcp(\col,\weight)$ is $1$-extendable, Items~\ref{remark:traps:trapattractors} and~\ref{remark:traps:extendability} of Remark~\ref{remark:traps} imply that each $\att{0}{\arena_{k-1}}{X_{k}}$ is a trap for Player~$1$ in $\arena_{k-1}$. Hence, we obtain~$k_0 > k_1 > k_2 > \cdots$. As the~$k_j$ are always greater than zero, the sequence stabilizes eventually. This implies that $\rho$ has a suffix~$\rho' = v_j v_{j+1} v_{j+2} \cdots$ that is consistent with $\sigma_{v_j}$.

Hence, due to $\sigma_{v_j}$ being a winning strategy for Player~$0$ in $\game_{k}$ from $v_j$, we obtain $\rho' \in \bcp(\col,\weight)$. Hence, $\rho \in \cp(\col,\weight)$, since the bounded parity condition with weights is a strengthening of its unbounded variant and due to $0$-extendability of $\cp(\col,\weight)$. Hence, $\sigma$ is indeed winning from $W_0^{k^*}$.

Now, consider Player~$1$. We show $V \setminus W_0^{k^*} \subseteq \winreg_1(\game)$. Then, determinacy of parity games with weights (due to their winning conditions being Borel~\cite{Martin75}) yields $W_0^{k^*} = \winreg_0(\game)$ and $V \setminus W_0^{k^*} = \winreg_1(\game)$.

Due to $X_{k^*}$ being empty and bounded parity games with weights being determined (again due to their winning conditions being Borel), Player~$1$ wins the bounded parity game with weights~$\game_{k^*}$ from every vertex. Applying Lemma~\ref{lem:parity-to-bounded-parity:reduction} shows that she also wins the parity game with weights~$(\arena_{{k^*}-1},\cp(\col,\weight))$ from every vertex. Finally, as $V \setminus W_0^{k^*} = \winreg_1(\arena_{{k^*}-1},\cp(\col,\weight))$ is a trap for Player~$0$ in $\arena$ by construction, he also wins $\game = (\arena, \cp(\col,\weight))$ from every vertex in $V \setminus W_0^{k^*}$.
\end{proof}

The winning strategy for Player~$0$ defined in the proof of Lemma~\ref{lem:algorithm-parity-weights-correctness} can be implemented by a memory structure of size~$\max_{k < k^*} s_{k}$, where~$s_k$ is the size of a winning strategy~$\sigma_k$ for Player~$0$ in the bounded parity game with weights solved in the~$k$-th iteration, and where~$k^*$ is the value of~$k$ at termination.
To this end, one uses the fact that the winning regions~$X_k$ are disjoint and are never revisited once left.
Hence, the implementations of the $\sigma_k$ can use the same states yielding the upper bound~$\max_{k < k^*} s_{k}$.

\section{Solving Bounded Parity Games with Weights}%
\label{sec:bounded-parity-weights-solve}
After having reduced the problem of solving parity games with weights to that of solving (multiple) bounded parity games with weights, we reduce solving bounded parity games with weights to solving (multiple) energy parity games~\cite{ChatterjeeDoyen12}.

Similar to a parity game with weights, in an energy parity game, the vertices are colored and the edges are equipped with weights.
It is the goal of Player~$0$ to satisfy the parity condition, while, at the same time, ensuring that the accumulated weight of every prefix, its so-called energy level, is bounded from below.
In contrast to a parity game with weights, however, the weights in an energy parity game are not ``tied'' to the requests and responses denoted by the coloring.

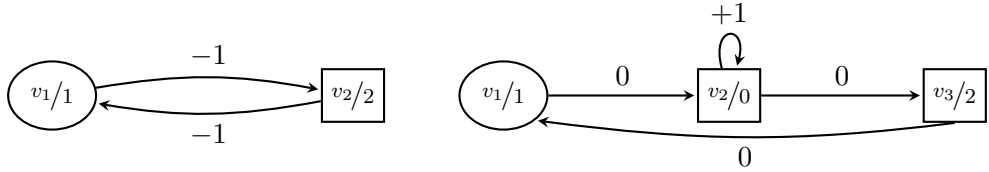
\begin{figure}
\centering
\begin{tikzpicture}
	\begin{scope}[shift={(-3,0)}]
		\node[p0] (v1) at (0,0) {$\nicefrac{v_1}{1}$};
		\node[p1] (v2) at (4,0) {$\nicefrac{v_2}{2}$};
		
		\path
			(v1) edge[bend left=10] node[anchor=south] {$-1$} (v2)
			(v2) edge[bend left=10] node[anchor=north] {$-1$} (v1);
	\end{scope}
	
	\begin{scope}[shift={(3,0)}]
		\node[p0] (v1) at (0,0) {$\nicefrac{v_1}{1}$};
		\node[p1] (v2) at (3,0) {$\nicefrac{v_2}{0}$};
		\node[p1] (v3) at (6,0) {$\nicefrac{v_3}{2}$};
		
		\path
			(v1) edge node[anchor=south] {$0$} (v2)
			(v2) edge[loop above] node[anchor=south] {$+1$} (v2)
			(v2) edge node[anchor=south] {$0$} (v3)
			(v3.south) edge[bend left=7.5] node[anchor=north] {$0$} (v1.south east);
	\end{scope}
\end{tikzpicture}

\caption{The difference between energy parity games and parity games with weights.}%
\label{fig:energyparity:difference}
\end{figure}
Consider, for example, the games shown in Figure~\ref{fig:energyparity:difference}.
In the game on the left-hand side, players only have a single, trivial strategy.
If we interpret this game as a parity game with weights, Player~$0$ wins from every vertex, as each request is answered with cost one.
If we, however, interpret that game as an energy parity game, Player~$1$ instead wins from every vertex, since the energy level decreases by one with every move.
In the game on the right-hand side, the situation is reversed:
When interpreting this game as a parity game with weights, Player~$1$ wins from every vertex, as she can easily unbound the costs of the requests for color one by staying in vertex~$v_2$ for an ever-increasing number of cycles.
Dually, when interpreting this game as an energy parity game, Player~$0$ wins from every vertex, since the parity condition is clearly satisfied in every play, and Player~$1$ is only able to increase the energy level, while it is never decreased.

In Section~\ref{subsec:bounded-parity-weights-solve:energy-parity-def}, we introduce energy parity games formally and present how to solve bounded parity games with weights via energy games in Section~\ref{subsec:bounded-parity-weights-solve:bpw2ep}.
Finally, Sections~\ref{subsec:bounded-parity-weights-solve:proof-player1} and~\ref{subsec:bounded-parity-weights-solve:proof-player0} are dedicated to the correctness proof of the construction.


\subsection{Energy Parity Games}%
\label{subsec:bounded-parity-weights-solve:energy-parity-def}
An energy parity game~$\game = (\arena, \col, \weight)$ consists of an arena~$\arena = (V, V_0, V_1, E)$, a coloring~$\col \colon V \rightarrow \nats$ of~$V$, and an edge weighting~$\weight\colon E \rightarrow \ints$ of~$E$.
Note that this definition is not compatible with the framework presented in Section~\ref{sec:preliminaries}, as we have not (yet) defined the winner of the plays.
This is because they depend on an initial credit, which is existentially quantified in the definition of winning the game~$\game$.
Formally, the set of winning plays with initial credit~$c_0 \in \nats$ is defined as
\[ \energyparity_{c_0}(\col, \weight) = \parity(\col) \cap \set{v_0v_1v_2\cdots \in V^\omega \mid \forall j \in \nats.\, c_0 + \weight(v_0\cdots v_j) \geq 0 } \enspace. \]
Now, we say that Player~$0$ wins~$\game$ from~$v$ if there exists some initial credit~$c_0 \in \nats$ such that he wins~$\game_{c_0} = (\arena, \energyparity_{c_0}(\col, \weight))$ from~$v$ (in the sense of the definitions in Section~\ref{sec:preliminaries}).
If this is not the case, i.e., if Player~$1$ wins $\game_{c_0}$ from $v$ for every $c_0$, then we say that Player~$1$ wins $\game$ from $v$. Note that the initial credit is uniform for all plays, unlike the bound on the cost-of-response in the definition of the parity condition with weights, which may, a priori, depend on the play.

Unraveling these definitions shows that Player~$0$ wins $\game$ from $v$ if there is an initial credit~$c_0$ and a strategy~$\sigma$, such that every play that starts in $v$ and is consistent with $\sigma$ satisfies the parity condition \emph{and} the accumulated weight over the play prefixes (the energy level) never drops below~$-c_0$.
We call such a strategy~$\sigma$ a winning strategy for Player~$0$ in~$\game$ from~$v$.
Dually, Player~$1$ wins $\game$ from $v$ if, for every initial credit~$c_0$, there is a strategy~$\tau_{c_0}$, such that every play that starts in $v$ and is consistent with $\tau_{c_0}$ violates the parity condition \emph{or} its energy level drops below~$-c_0$ at least once.
Thus, the strategy~$\tau_{c_0}$ may, as the notation suggests, depend on $c_0$.
However, Chatterjee and Doyen~\cite{ChatterjeeDoyen12} showed that using different strategies is not necessary:
There is a uniform strategy~$\tau$ that is winning from $v$ for every initial credit~$c_0$.

\begin{propC}[\cite{ChatterjeeDoyen12}]%
\label{prop:energyparity:player1}
Let $\game$ be an energy parity game. If Player~$1$ wins $\game$ from $v$, then she has a single positional strategy that is winning from $v$ in $\game_{c_0}$ for every~$c_0$.
\end{propC}

We call such a strategy as in Proposition~\ref{prop:energyparity:player1} a winning strategy for Player~$1$ from~$v$.
A play consistent with such a strategy either violates the parity condition, or the energy levels of its prefixes diverge towards~$-\infty$.

Furthermore, Chatterjee and Doyen obtained an upper bound on the initial credit necessary for Player~$0$ to win an energy parity game, as well as an upper bound on the size of a corresponding finite-state winning strategy.

\begin{propC}[\cite{ChatterjeeDoyen12}]%
\label{prop:energyparity:equivalence}
Let $\game$ be an energy parity game with $n$ vertices, $d$ colors, and largest absolute weight~$W$. The following are equivalent for a vertex~$v$ of $\game$:
\begin{enumerate}
	\item Player~$0$ wins $\game$ from~$v$.
	\item Player~$0$ wins $\game_{(n-1)W}$ from~$v$ with a finite-state strategy with at most $ndW$ states.
\end{enumerate}
\end{propC}

\noindent
The previous proposition yields that finite-state strategies of bounded size suffice for Player~$0$ to win.
Such strategies do not admit long expensive descents, which we show via a straightforward pumping argument.

\begin{lem}%
\label{lem:boundeddecreaseinenergyparity}
Let $\game$ be an energy parity game with $n$ vertices and largest absolute weight~$W$. Further, let $\sigma$ be a finite-state strategy of size~$s$, and let $\rho$ be a play that starts in some vertex, from which $\sigma$ is winning, and is consistent with $\sigma$. Every infix~$\pi$ of $\rho$ satisfies
\[
    \weight(\pi) > -(ns -1 )W -1 \enspace.
\]
\end{lem}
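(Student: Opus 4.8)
The plan is to argue by contradiction via a pumping argument on the finite-state strategy $\sigma$. Consider the product of the arena with the memory structure implementing $\sigma$; since $\sigma$ has size $s$ and the arena has $n$ vertices, this product has at most $ns$ vertices. A play $\rho$ consistent with $\sigma$ corresponds to an infinite path in this product, and an infix $\pi$ of $\rho$ corresponds to a finite path. Suppose for contradiction that some infix $\pi$ has $\weight(\pi) \leq -(ns-1)W - 1$. The strategy $\sigma$ is deterministic, so from any product-vertex the continuation under $\sigma$ is fixed: in particular, once the play (together with the memory state) returns to a product-vertex it has visited before, the weight change incurred on that loop will be repeated forever if Player~$1$ cooperates.

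First I would make the counting precise. Since $\abs{\weight(\pi)} > (ns-1)W$, and each edge changes the weight by at most $W$ in absolute value, the infix $\pi$ must traverse strictly more than $ns-1$ edges, hence visit strictly more than $ns$ product-vertices, hence (pigeonhole) repeat some product-vertex. Write $\pi = \pi_1 \pi_2 \pi_3$ where $\pi_2$ is a cycle in the product starting and ending at some repeated product-vertex $q$. The key point is to show that some such decomposition can be chosen with $\weight(\pi_2) < 0$: indeed, the sum of the weights of all the cycles obtained by a standard loop-decomposition of $\pi$ equals $\weight(\pi)$ minus the weight of the acyclic remainder, and the acyclic remainder has weight at least $-(ns-1)W$ since it uses at most $ns-1$ edges; since $\weight(\pi) < -(ns-1)W$, the cycles contribute negative total weight, so at least one cycle $\pi_2$ has $\weight(\pi_2) < 0$.

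Next I would build a play that defeats $\sigma$. Starting from the vertex from which $\sigma$ is winning, follow $\rho$ up to the first occurrence of $q$ in the decomposition of $\pi$, then have Player~$1$ play so as to repeat the loop $\pi_2$ forever (this is possible: on Player~$1$-vertices she chooses the successor dictated by $\pi_2$, and on Player~$0$-vertices $\sigma$ is forced to follow $\pi_2$ since the memory state at $q$ is the same each time around). Along this play the energy level is driven to $-\infty$ because each traversal of $\pi_2$ subtracts $\abs{\weight(\pi_2)} \geq 1$. Hence this play violates the energy condition for every initial credit, contradicting the assumption that $\sigma$ is winning. This yields $\weight(\pi) > -(ns-1)W - 1$ for every infix $\pi$, as claimed.

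**Main obstacle.** The one slightly delicate step is extracting a strictly \emph{negative} cycle from the loop-decomposition with the right constant; I expect to need to be careful that the acyclic part uses at most $ns - 1$ edges (at most $ns$ vertices, no repeats), which is exactly what makes the bound $-(ns-1)W - 1$ tight rather than off by a multiplicative or additive slack. The rest — the pigeonhole count and the observation that a deterministic finite-state strategy is forced around a repeated product-state — is routine.
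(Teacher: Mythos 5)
Your proof is correct, and it is the same pumping argument at heart: pass to the product of the arena with the memory structure (at most $ns$ states), use the pigeonhole principle to find a repeated product-state, extract a strictly negative loop, and pump it to drive the energy to $-\infty$, contradicting that $\sigma$ is winning. The one place where you genuinely diverge from the paper is in how the negative loop is located. The paper restricts attention to the positions of $\pi$ at which the running weight attains a new minimum: there are more than $ns$ of these (each step drops the minimum by at most $W$), so two of them share a product-state, and the infix between them is automatically a \emph{contiguous} loop of strictly negative weight, so consistency of the pumped play with $\sigma$ is immediate. You instead take a full cycle decomposition of the product path and argue by averaging that, since the acyclic remainder has weight at least $-(ns-1)W$, some cycle must be strictly negative. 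That works, but the cycle you obtain need not be a contiguous infix of $\pi$ (only the first cycle removed in the decomposition is), so writing $\pi = \pi_1\pi_2\pi_3$ is a slight imprecision; what saves the argument is that every edge of such a cycle is still an edge of the original product path, and $\sigma$'s choice at a Player~$0$ product-state depends only on that state, so the cycle can be traversed repeatedly while remaining consistent with $\sigma$ after any $\sigma$-consistent prefix reaching its base state. You should state this observation explicitly; with it, your route is a valid alternative, trading the paper's tailored choice of positions for a standard decomposition plus a one-line averaging step.
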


\begin{proof}
Let $\sigma$ be implemented by $\mem = (M, \init, \update)$ and let $\rho = v_0 v_1 v_2 \cdots$.
We assume towards a contradiction that there is an infix~$\pi = v_j \cdots v_{j'}$ with $\weight(\pi) \le -(ns -1 )W -1$.
We assume w.l.o.g.~$\pi$ to be minimal with this property, i.e., such that there exists no proper prefix~$\pi'$ of~$\pi$ with~$\weight(\pi') \leq -(ns -1 )W -1$.

We define a sequence~$j_0 < \cdots < j_k \leq j'$ of positions by starting with~$j_0 = j$.
Inductively, for each~$j_{k'}$, we define~$j_{k' + 1}$ to be the minimal position strictly greater than~$j_{k'}$ that satisfies~$\weight(v_{j} \cdots v_{j_{k' + 1}}) < \weight(v_{j} \cdots v_{j_{k'}})$.
Intuitively, the positions~$j_{k'}$ are those positions at which the weight of the infix~$v_j \cdots v_{j_{k'}}$ reaches a new lower bound.
Due to minimality of~$\pi'$, we obtain~$j_k = j'$.

We now show that there are two positions~$j_{k_\ell}$ and~$j_{k_{\ell'}}$ such that the weight along the infix from the former to the latter position decreases and such that we are able to pump that infix while retaining consistency with the strategy~$\sigma$.
To this end, we define~$w_\ell = \weight(v_j \cdots v_{j_\ell})$ for all~$\ell \in \set{0,\dots,k}$.
In particular, we have~$w_0 = 0$.
Since each edge has an absolute weight of at most~$W$, we obtain~$w_{\ell + 1} \geq w_\ell - W$ for all~$\ell \in \set{0,\dots,k-1}$, which in turn implies~$w_k \geq -kW$.
Moreover, since we have~$j_k = j'$, we additionally obtain~$w_k \leq -(ns - 1)W - 1$.
Rearranging the resulting inequality $-(ns - 1)W - 1 \geq -kW$ yields~$kW \geq (ns-1)W + 1$.

In order to obtain a lower bound for~$k$, we briefly have to argue that~$W > 0$ holds true.
First, we obtain~$W \geq 0$ by definition of~$W$.
Furthermore, if~$W = 0$ then we obtain~$\weight(\pi) = 0$, which contradicts~$\weight(\pi) \leq -(ns -1 )W -1 = -1$.
Hence, we indeed have~$W > 0$, which yields~$k > (ns-1)$ via the above inequality.
In particular, this implies that there exist at least~$ns + 1$ positions at which the accumulated weight of the infix so far attains a new minimum, since we start counting the number of positions at zero.

Due to the pigeon-hole-principle, we obtain that there exist indices~$\ell$,~$\ell'$ with~$0 \leq \ell < \ell' \leq k$ such that~$v_{j_\ell} = v_{j_{\ell'}}$ and such that~$\update^+(v_0\cdots v_{j_\ell}) = \update^+(v_0\cdots v_{j_{\ell'}})$.
Thus, the play~$v_0 \cdots v_j \cdots v_{j_\ell} {(v_{j_\ell + 1} \cdots v_{j_{\ell'}})}^\omega$ obtained by repeating the loop between $v_{j_\ell}$ and $v_{j_{\ell'}}$ ad infinitum is consistent with the strategy~$\sigma$ and violates the energy condition.
This, however, contradicts~$\sigma$ being a winning strategy from~$v_0$ for Player~$0$.
\end{proof}

Moreover, Chatterjee and Doyen gave an upper bound on the complexity of solving energy parity games, which was recently supplemented by Daviaud et al.~\cite{DaviaudJurdzinskiLazic18} with an algorithm solving them in pseudo-quasi-polynomial time.

\begin{propC}[\cite{ChatterjeeDoyen12,DaviaudJurdzinskiLazic18}]%
\label{prop:energyparity:complexity}
	The following problem is in~$\np \cap \conp$ and can be solved in pseudo-quasi-polynomial time:
	``Given an energy parity game~$\game$ and a vertex~$v$ in~$\game$, does Player~$0$ win~$\game$ from~$v$?''
\end{propC}



\subsection{From Bounded Parity Games with Weights to Energy Parity Games}%
\label{subsec:bounded-parity-weights-solve:bpw2ep}
Let $\game = (\arena, \bcp(\col, \weight))$ be a bounded parity game with weights with vertex set~$V$.
Without loss of generality, we assume~$\col(v) \geq 2$ for all~$v \in V$.
We construct, for each vertex~$v^*$ of~$\arena$, an energy parity game~$\game_{v^*}$ with the following property:

\begin{quotation}
Player~$1$ wins~$\game_{v^*}$ from some designated vertex induced by~$v^*$ if and only if she is able to unbound the amplitude for the request of the initial vertex of the play when starting from~$v^*$.
\end{quotation}

\noindent
This construction is the technical core of the fixed-point algorithm that solves bounded parity games with weights via solving energy parity games.

The main obstacle towards this is that, in the bounded parity game with weights~$\game$, Player~$1$ may win by unbounding the amplitude for a request from above or from below, while she can only win the energy parity game~$\game_{v^*}$ by unbounding the costs from below.
We model this in~$\game_{v^*}$ by constructing two copies of~$\arena$.
In one of these copies the edge weights are copied from~$\game$, while they are inverted in the other copy.
We allow Player~$1$ to switch between these copies arbitrarily.
To compensate for Player~$1$'s power to switch, Player~$0$ can increase the energy level in the resulting energy parity game during each switch.

First, we define the set of polarities $P = \set{+,-}$ as well as $\compl{+} = {-}$ and~$\compl{-} = {+}$.
Given a vertex~$v^*$ of $\arena$, define the ``polarized'' arena~$\arena_{v^*} = (V', V_0', V_1', E')$ of~$\arena = (V, V_0, V_1, E)$ with
\begin{itemize}
	\item $V' = (V \times P) \cup (E \times P \times \set{0,1}) $,
	\item 	$V_i' = (V_i \times P) \cup (E \times P \times \set{i}) $ for $ i \in \set{0,1} $, and
	\item  $E'$ contains the following edges for every edge~$e = (v,v') \in E$ with  $\col(v) \notin \answer{\col(v^*)}$ and every polarity~$p \in P$:
	 	\begin{itemize}
	 		\item $((v, p), (e, p, 1))$: The player whose turn it is at $v$ picks a successor~$v'$. The edge~$e = (v,v')$ is stored as well as the polarity~$p$.
	 		\item $((e,p,1),(v',p))$: Then, Player~$1$ can either keep the polarity~$p$ unchanged and execute the move to $v'$, or
	 		\item $((e,p,1),(e,p,0))$: she decides to change the polarity, and another auxiliary vertex is reached.
	 		\item $((e,p,0),(e,p,0))$: If the polarity is to be changed, then Player~$0$ is able to use a self-loop to increase the energy level (see below), before
	 		\item $((e,p,0), (v', \compl{p}))$: he can eventually complete the polarity switch by moving to $v'$.
	 		\end{itemize}
	 	\item Furthermore, for every vertex~$v$ with $\col(v) \in \answer{\col(v^*)}$ and every polarity~$p \in P$, $E'$ contains the self-loop~$((v,p),(v,p))$.\footnote{This definition introduces some terminal vertices, i.e., those of the form $((v,v'),p,i)$ with $\col(v) \in \answer{\col(v^*)}$. However, these vertices also have no incoming edges. Hence, to simplify the definition, we just ignore them.}
\end{itemize}
Thus, a play in $\arena_{v^*}$ simulates a play in $\arena$, unless Player~$0$ stops the simulation by using the self-loop at a vertex of the form~$(e,p,0)$ ad infinitum, and unless an answer to $\col(v^*)$ is reached.
We define the coloring and the weighting for $\arena_{v^*}$ so that Player~$0$ loses in the former case and wins in the latter case. Furthermore, the coloring is defined so that all simulating plays that are not stopped have the same color sequence as the simulated play (save for irrelevant colors on the auxiliary vertices in $E \times P \times \set{0,1}$).
Hence, we define
\[
	\col_{v^*}(v) = \begin{cases}
 		\col(v') &\text{if } v = (v', p) \text{ with } v' \notin \answer{\col(v^*)} \enspace ,\\
 		0 &\text{if } v = (v', p) \text{ with } v' \in \answer{\col(v^*)} \enspace ,\\
 		1 & \text{otherwise} \enspace .
	\end{cases}
\]
As desired, due to our assumption that~$\col(v) \geq 2$ for all~$v \in V$, the vertices from $E \times P \times \set{0,1}$ do not influence the maximal color visited infinitely often during a play, unless Player~$0$ opts to remain in some~$(e, p, 0)$ ad infinitum (and thereby violates the parity condition) or an answer to the color of $v^*$ is reached (and thereby satisfies the parity condition).

Moreover, recall that our aim is to allow Player~$1$ to choose the polarity of edges by switching between the two copies of~$\arena$ occurring in~$\arena_{v^*}$.
Intuitively, Player~$1$ should opt for positive polarity in order to unbound the costs incurred by the request posed by~$v^*$ from above, while she should opt for negative polarity in order to unbound these costs from below.
Since it is, broadly speaking, beneficial for Player~$1$ to move along edges of negative weight in an energy parity game, we negate the weights of edges in the copy of~$\arena$ with positive polarity.
Thus, we define
\[
	\weight_{v^*}(e) = \begin{cases}
		-\weight(v, v') &\text{if } e = ((v, +), ((v,v'), +, 1)) \enspace , \\
		\weight(v, v') &\text{if } e = ((v, -), ((v,v'), -, 1)) \enspace , \\
		1 &\text{if } e = ((e,p,0),(e,p,0)) \enspace , \\
		0 &\text{otherwise}  \enspace .
	\end{cases}
\]
This definition implies that the self-loops at vertices of the form~$(v,p)$ with $\col(v) \in \answer{\col(v^*)}$ have weight zero. Combined with the fact that these vertices have color zero, this allows Player~$0$ to win $\game_{v^*}$ by reaching such a vertex. Intuitively, answering the request posed at $v^*$ is beneficial for Player~$0$. In particular, if $\col(v^*)$ is even, then Player~$0$ wins $\game_{v^*}$ trivially from $(v^*, p)$, as we then have $\col(v^*) \in \answer{\col(v^*)}$.

Finally, define the energy parity game~$\game_{v^*} = ( \arena_{v^*}, \col_{v^*}, \weight_{v^*} )$. In the following, we are only interested in plays starting in vertex~$(v^*,+)$ in $\game_{v^*}$.

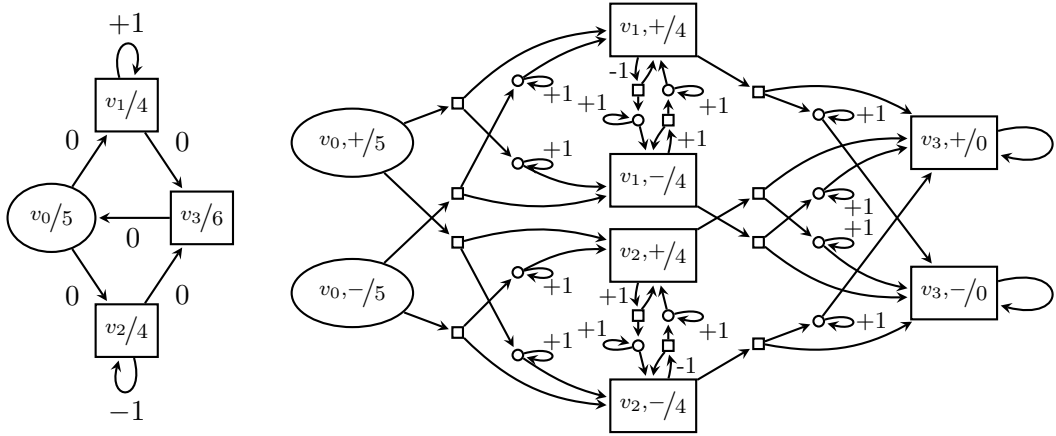
\begin{figure}[ht]
\begin{tikzpicture}
	\begin{scope}
		\node[p0] (v0) at (0,0) {\parnode{v_0}{5}};
		\node[p1] (v1) at (1,1.5) {\parnode{v_1}{4}};
		\node[p1] (v2) at (1,-1.5) {\parnode{v_2}{4}};
		\node[p1] (v3) at (2,0) {\parnode{v_3}{6}};
		
		\path
			(v0) edge node[anchor=south east] {$0$} (v1) edge node[anchor=north east] {$0$} (v2)
			(v1) edge [loop above] node[anchor=south] {$+1$} (v1) edge node[anchor=south west] {$0$} (v3)
			(v2) edge [loop below] node[anchor=north] {$-1$} (v2) edge node[anchor=north west] {$0$} (v3)
			(v3) edge node[anchor=north] {$0$} (v0);
	\end{scope}
	
	\begin{scope}[shift={(4,1)}]
		\begin{scope}
			\node[p0] (v0-p) at (0,0)     {\parnode{v_0, +}{5}};
			\node[p1] (v1-p) at (4,1.5)   {\parnode{v_1, +}{4}};
			\node[p1] (v2-p) at (4,-1.5)  {\parnode{v_2, +}{4}};
			\node[p1] (v3-p) at (8,0)     {\parnode{v_3, +}{0}};
		\end{scope}
		\begin{scope}[shift={(0,-2)}]
			\node[p0] (v0-n) at (0,0)     {\parnode{v_0, -}{5}};
			\node[p1] (v1-n) at (4,1.5)   {\parnode{v_1, -}{4}};
			\node[p1] (v2-n) at (4,-1.5)  {\parnode{v_2, -}{4}};
			\node[p1] (v3-n) at (8,0)     {\parnode{v_3, -}{0}};
		\end{scope}	
		
		\tikzstyle{1-trans}=[draw,thick,minimum size=4pt,inner sep=0pt]
		\tikzstyle{0-trans}=[1-trans,ellipse]
		
		\newcommand{\increment}{\footnotesize{+1}}
		\newcommand{\decrement}{\footnotesize{-1}}
		
		
		\node[1-trans] (v0-v1-p-1) at ($(v0-p) ! .35 ! (v1-p) - (0,0)$) {};
		\node[1-trans] (v0-v1-n-1) at ($(v0-n) ! .35 ! (v1-n) + (0,.8)$) {};
		\node[0-trans] (v0-v1-p-0) at ($(v0-p) ! .55 ! (v1-n)$) {};
		\node[0-trans] (v0-v1-n-0) at ($(v0-p) ! .55 ! (v1-p)$) {};
		
		\path
			(v0-p) edge (v0-v1-p-1)
			(v0-v1-p-1) edge[bend left=20] (v1-p) edge (v0-v1-p-0)
			(v0-v1-p-0) edge[loop right] node[pos=.2,inner sep=0,anchor=south west] {\increment} (v0-v1-p-0) edge[bend right=15] (v1-n);
		\path
			(v0-n) edge (v0-v1-n-1)
			(v0-v1-n-1) edge[bend right=15] (v1-n) edge (v0-v1-n-0)
			(v0-v1-n-0) edge[loop right] node[pos=.8,inner sep=0,anchor=north west] {\increment} (v0-v1-n-0) edge[bend left=10] (v1-p);
			
		
		\node[1-trans] (v0-v2-p-1) at ($(v0-p) ! .35 ! (v2-p) - (0,.8)$) {};
		\node[1-trans] (v0-v2-n-1) at ($(v0-n) ! .35 ! (v2-n) + (0,.0)$) {};
		\node[0-trans] (v0-v2-p-0) at ($(v0-n) ! .55 ! (v2-n)$) {};
		\node[0-trans] (v0-v2-n-0) at ($(v0-n) ! .55 ! (v2-p)$) {};
		
		\path
			(v0-p) edge (v0-v2-p-1)
			(v0-v2-p-1) edge[bend left=15] (v2-p) edge (v0-v2-p-0)
			(v0-v2-p-0) edge[loop right] node[pos=.2,inner sep=0,anchor=south west] {\increment} (v0-v2-p-0) edge[bend right=10] (v2-n);
		\path
			(v0-n) edge (v0-v2-n-1)
			(v0-v2-n-1) edge[bend right=20] (v2-n) edge (v0-v2-n-0)
			(v0-v2-n-0) edge[loop right] node[pos=.8,inner sep=0,anchor=north west] {\increment} (v0-v2-p-0) edge[bend left=15] (v2-p);
			
		
		\node[1-trans] (v1-v3-p-1) at ($(v1-p) ! .35 ! (v3-p) - (0,.3)$) {};
		\node[1-trans] (v1-v3-n-1) at ($(v1-n) ! .35 ! (v3-n) - (0,.3)$) {};
		\node[0-trans] (v1-v3-p-0) at ($(v1-p) ! .55 ! (v3-p) - (0,.3)$) {};
		\node[0-trans] (v1-v3-n-0) at ($(v2-p) ! .55 ! (v3-p)$) {};
		
		\path
			(v1-p) edge[] (v1-v3-p-1)
			(v1-v3-p-1) edge[bend left=20] (v3-p) edge (v1-v3-p-0)
			(v1-v3-p-0) edge[loop right] node[inner sep=1pt,anchor=west] {\increment} (v1-v3-p-0) edge (v3-n);
		\path
			(v1-n) edge (v1-v3-n-1)
			(v1-v3-n-1) edge[bend right=20] (v3-n) edge (v1-v3-n-0)
			(v1-v3-n-0) edge[loop right] node[pos=.8,inner sep=0,anchor=north west] {\increment} (v1-v3-n-0) edge[bend left=15] (v3-p);
			
		
		\node[1-trans] (v2-v3-p-1) at ($(v2-p) ! .35 ! (v3-p) + (0,.3)$) {};
		\node[1-trans] (v2-v3-n-1) at ($(v2-n) ! .35 ! (v3-n) + (0,.3)$) {};
		\node[0-trans] (v2-v3-p-0) at ($(v1-n) ! .55 ! (v3-n)$) {};
		\node[0-trans] (v2-v3-n-0) at ($(v2-n) ! .55 ! (v3-n) + (0,.3)$) {};
		
		\path
			(v2-p) edge (v2-v3-p-1)
			(v2-v3-p-1) edge[bend left=20] (v3-p) edge (v2-v3-p-0)
			(v2-v3-p-0) edge[loop right] node[pos=.2,inner sep=0,anchor=south west] {\increment} (v2-v3-p-0) edge[bend right=15] (v3-n);
		\path
			(v2-n) edge[] (v2-v3-n-1)
			(v2-v3-n-1) edge[bend right=20] (v3-n) edge (v2-v3-n-0)
			(v2-v3-n-0) edge[loop right] node[inner sep=1pt,anchor=west] {\increment} (v2-v3-n-0) edge (v3-p);
						
		\node[1-trans] (v1-v1-p-1) at ($(v1-p) ! .5 ! (v1-n) - (.2,-.2)$) {};
		\node[1-trans] (v1-v1-n-1) at ($(v1-p) ! .5 ! (v1-n) + (.2,-.2)$) {};
		\node[0-trans] (v1-v1-p-0) at ($(v1-p) ! .5 ! (v1-n) - (.2,.2)$) {};
		\node[0-trans] (v1-v1-n-0) at ($(v1-p) ! .5 ! (v1-n) + (.2,.2)$) {};
		
		\path
			(v1-p) edge[bend right=15] node[inner sep=2pt,anchor=east] {\decrement} (v1-v1-p-1)
			(v1-v1-p-1) edge[bend right=15] (v1-p) edge (v1-v1-p-0)
			(v1-v1-p-0) edge[loop left] node[pos=.75,inner sep=1pt,anchor=south east] {\increment} (v1-v1-p-0) edge (v1-n);
		\path
			(v1-n) edge[bend right=15] node[inner sep=2pt,anchor=west] {\increment} (v1-v1-n-1)
			(v1-v1-n-1) edge[bend right=15] (v1-n) edge (v1-v1-n-0)
			(v1-v1-n-0) edge[loop right] node[pos=.75,inner sep=1pt,anchor=north west] {\increment} (v1-v1-n-0) edge (v1-p);
			
		\node[1-trans] (v2-v2-p-1) at ($(v2-p) ! .5 ! (v2-n) - (.2,-.2)$) {};
		\node[1-trans] (v2-v2-n-1) at ($(v2-p) ! .5 ! (v2-n) + (.2,-.2)$) {};
		\node[0-trans] (v2-v2-p-0) at ($(v2-p) ! .5 ! (v2-n) - (.2,.2)$) {};
		\node[0-trans] (v2-v2-n-0) at ($(v2-p) ! .5 ! (v2-n) + (.2,.2)$) {};
		
		\path
			(v2-p) edge[bend right=15] node[inner sep=2pt,anchor=east] {\increment} (v2-v2-p-1)
			(v2-v2-p-1) edge[bend right=15] (v2-p) edge (v2-v2-p-0)
			(v2-v2-p-0) edge[loop left] node[pos=.75,inner sep=1pt,anchor=south east] {\increment} (v2-v2-p-0) edge (v2-n);
		\path
			(v2-n) edge[bend right=15] node[inner sep=2pt,anchor=west] {\decrement} (v2-v2-n-1)
			(v2-v2-n-1) edge[bend right=15] (v2-n) edge (v2-v2-n-0)
			(v2-v2-n-0) edge[loop right] node[pos=.75,inner sep=1pt,anchor=north west] {\increment} (v2-v2-n-0) edge (v2-p);
			
		\path
			(v3-p) edge [loop right] (v3-p)
			(v3-n) edge [loop right] (v3-n);
		
	\end{scope}

\end{tikzpicture}
\caption{A bounded parity game with weights~$\game$ (left) and the associated energy parity game~$\game_{v_0}$ (right). The unnamed vertices of Player~$1$ (Player~$0$) are of the form $((v,v'),p,1)$ (of the form~$((v,v'),p,0)$) when between the vertices~$(v,p)$ and $(v',p')$. All missing edge weights in $\game_{v_0}$ are $0$.}%
\label{fig:epexample}
\end{figure}

\begin{exa}
Consider the bounded parity game with weights depicted on the left hand side of Figure~\ref{fig:epexample} and the associated energy parity game~$\game_{v_0}$ on the right side. First, let us note that all  other $\game_v$ for $v \neq v_0$ are trivial in that they all consist of a single vertex (reachable from $(v,+)$), which has even color with a self-loop of weight zero. Hence, Player~$0$ wins each of these games from $(v,+)$.

Player~$1$ wins $\game$ from $v_0$, where a request for color~$5$ is opened, which is then kept unanswered with infinite cost by using the self-loop at $v_1$ or $v_2$ ad infinitum, depending on which successor Player~$0$ picks.

We show that Player~$1$ wins $\game_{v_0}$ from $(v_0, +)$: the outgoing edges of $(v_0, +)$ correspond to picking the successor $v_1$ or $v_2$ as in $\game$. Before this is executed, however, Player~$1$ gets to pick the polarity of the successor: she should pick $+$ for $v_1$ and $-$ for $v_2$. Now, Player~$0$ may  use the self-loop at her \myquot{tiny} vertices ad infinitum. These vertices have color one, i.e., Player~$1$ wins the resulting play. Otherwise, we reach the vertex~$(v_1, +)$ or $(v_2,-)$. From both vertices, Player~$1$ can enforce a loop of negative weight, which allows him to win by violating the energy condition.
\end{exa}

Note that the winning strategy for Player~$1$ for $\game$ from $v_0$ is very similar to that for her for $\game_{v_0}$ from $(v_0,+)$. We show that one direction holds in general: A winning strategy for Player~$0$ for $\game_{v}$ from $(v,+)$ is \myquot{essentially} one for him in $\game$ from $v$.

The other direction does, in general, not hold.
This can be seen by extending $\game$ in Figure~\ref{fig:epexample} by a vertex~$v_{-1}$ of color $3$ with a single outgoing edge to $v_0$ (with arbitrary weight) and no incoming edges.
We call this extended bounded parity game with weights~$\game'$.
In the resulting energy parity game~$\game'_{v_{-1}}$, vertices of the form $(v_i,p)$ with $i \in \set{1,2}$ in $\game_{v_{-1}}$ are winning sinks for Player~$0$.
Also, Player~$0$ has a strategy to ensure that such a sink is reached when starting in $(v_{-1}, +)$.
Hence, he wins $\game'_{v_{-1}}$ from $(v_{-1},+)$.
However, he does not win the extended bounded parity game with weights~$\game'$ from $v_{-1}$, as Player~$1$ can still win by remaining in either~$v_1$ or~$v_2$ ad infinitum.
By doing this, she keeps the request of color~$5$, which is opened when visiting~$v_0$ at the second position of every play starting in~$v_{-1}$, unanswered with infinite cost.

The reason for the other direction failing is the special role the initial request of the vertex~$v$ inducing $\game_v$ plays in the construction: It is the request Player~$1$ aims to keep unanswered with infinite cost.
To overcome this and to complete our construction, we show a statement reminiscent of Lemma~\ref{lem:parity-to-bounded-parity:reduction}: If Player~$0$ wins $\game_v$ from $(v,+)$ for every $v$, then she also wins $\game$ from every vertex.
With this relation at hand, one can again construct a fixed-point algorithm solving bounded parity games with weights using an oracle for solving energy parity games that is very similar to Algorithm~\ref{algorithm_fixpoint}.

Formally, we have the following lemma, which forms the technical core of our algorithm that solves bounded parity games with weights by solving energy parity games.

\begin{lem}%
\label{lem:bounded-cost-parity:reduction}
Let~$\game$ be a bounded parity game with weights with vertex set~$V$.
\begin{enumerate}
	\item\label{lem:bounded-cost-parity:reduction:player1} Let $v^* \in V$. If Player~$1$ wins~$\game_{v^*}$ from~$(v^*, +)$, then~$v^* \in \winreg_1(\game)$.
	\item\label{lem:bounded-cost-parity:reduction:player0} If Player~$0$ wins $\game_{v^*}$ from $(v^*,+)$ for all~$v^* \in V$, then $\winreg_1(\game) = \emptyset$.
\end{enumerate}
\end{lem}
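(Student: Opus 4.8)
The plan is to treat the two items separately, in the style of the proof of Lemma~\ref{lem:parity-to-bounded-parity:reduction}: Item~\ref{lem:bounded-cost-parity:reduction:player1} by transporting a winning strategy of Player~1 from the gadget $\game_{v^*}$ to $\game$, and Item~\ref{lem:bounded-cost-parity:reduction:player0} by a combining argument that assembles the gadget strategies of Player~0 (one per vertex) into a single winning strategy of Player~0 in $\game$, followed by an appeal to determinacy.

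For Item~\ref{lem:bounded-cost-parity:reduction:player1}, fix a positional winning strategy $\tau'$ for Player~1 in $\game_{v^*}$ from $(v^*,+)$, which exists by Proposition~\ref{prop:energyparity:player1}; note that $\col(v^*)$ is necessarily odd, since otherwise $(v^*,+)$ would be a winning sink for Player~0. Define a strategy $\tau$ for Player~1 in $\game$ that runs a simulation of $\game_{v^*}$ and, whenever Player~0 would move in an auxiliary vertex $(e,p,0)$, resolves this move by the fixed convention ``leave immediately, do not increment''; this is admissible because $\tau'$ wins against \emph{every} behaviour of Player~0. Let $\rho=v_0v_1\cdots$ be any play consistent with $\tau$, so $v_0=v^*$, and let $\hat\rho$ be the induced play of $\game_{v^*}$; then $\hat\rho$ is consistent with $\tau'$. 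If $\rho$ ever reaches a colour in $\answer{\col(v^*)}$, then $\hat\rho$ runs into the corresponding colour-$0$, weight-$0$ self-loop and is won by Player~0 --- impossible, since $\tau'$ is winning. Hence the request at position~$0$ is never answered by $\rho$. If $\rho$ violates the plain parity condition, it is outside $\bcp(\col,\weight)\subseteq\cp(\col,\weight)\subseteq\parity(\col)$ and we are done. Otherwise $\rho$ satisfies the parity condition; since all auxiliary vertices have colour $1$ while $\col(v)\ge2$ for every $v$, so does $\hat\rho$, and hence (Proposition~\ref{prop:energyparity:player1}) the energy of $\hat\rho$ tends to $-\infty$. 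I claim that then $\ampl(\rho)=\infty$: otherwise $\abs{\weight(v_0\cdots v_j)}\le B$ for all $j$ and some $B$, whence on any maximal infix of $\hat\rho$ on which the polarity stays constant the energy varies by at most $2B$ (the partial weights of $\rho$ stay inside a window of width $2B$); so Player~0, making exactly the $\arena$-moves of $\rho$ and topping the energy up at every polarity switch, keeps the energy bounded below by a constant $-B'$ depending only on $B$, contradicting that $\tau'$ is winning for every initial credit. Thus the request at position~$0$ is unanswered with infinite cost, so $\rho\notin\bcp(\col,\weight)$; as $\rho$ was arbitrary, $v^*\in\winreg_1(\game)$.

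For Item~\ref{lem:bounded-cost-parity:reduction:player0}, for each $v^*$ fix a finite-state winning strategy $\sigma_{v^*}$ for Player~0 in $\game_{v^*}$ from $(v^*,+)$ that wins already with the uniform initial credit and is of the uniform size provided by Proposition~\ref{prop:energyparity:equivalence}. From $\sigma_{v^*}$ I would extract a finite-state strategy $\hat\sigma_{v^*}$ for Player~0 in $\game$ that maintains the parity condition and, as long as the request of $v^*$ is unanswered, keeps the accumulated weight measured from $v^*$ inside a band of uniform width $B$: Player~0 runs a simulation of $\game_{v^*}$ in which he himself supplies Player~1's polarity choices, chosen so that an unboundedly growing accumulated weight is converted into an energy unbounded from below; since $\sigma_{v^*}$ keeps the energy bounded against every choice of Player~1, the weight stays bounded. (Equivalently, one can argue the contrapositive: from a winning strategy of Player~1 in $\game$ one reads off a colour $c$ and a vertex $u$ of colour $c$ from which Player~1 can keep the $c$-request's amplitude unbounded, and then wins $\game_u$ from $(u,+)$.) Finally, combine the $\hat\sigma_{v^*}$ into one strategy $\hat\sigma$: Player~0 always runs $\hat\sigma_{v^*}$ for the vertex $v^*$ at which the currently highest pending request was opened, restarting the simulation whenever this request changes. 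During the life of any single request this happens at most $d$ times, because no even colour at least as large as that request's colour occurs meanwhile, so the highest pending colour is non-decreasing and runs through the at most $d$ colours only; and between two such ``epochs'' only even colours occur. A case distinction on whether the highest pending request changes finitely or infinitely often then shows that every play consistent with $\hat\sigma$ satisfies the parity condition, has all but finitely many requests answered with cost $O(dB)$, and has its finitely many unanswered requests unanswered with finite cost --- so the play lies in $\bcp(\col,\weight)$. Hence Player~0 wins $\game$ from every vertex, and since $\bcp(\col,\weight)$ is Borel, determinacy~\cite{Martin75} yields $\winreg_1(\game)=\emptyset$.

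The step I expect to be the main obstacle is the extraction of $\hat\sigma_{v^*}$ from $\sigma_{v^*}$ in Item~\ref{lem:bounded-cost-parity:reduction:player0}: in $\game$ Player~0 cannot observe which polarity Player~1 ``intends'', and the naive simulations --- pinning the polarity, or letting it follow the sign of the accumulated weight --- either fail to bound the weight from below or hand Player~0 so much free energy at the switches that the simulated Player~0 strategy is permitted to let the weight roam, so the virtual polarity rule and the accompanying potential argument require genuine care. The epoch bookkeeping in the combining step, and verifying that it keeps all costs uniformly bounded, is the second, more routine, technical point; the whole argument also uses the standard facts $\bcp(\col,\weight)\subseteq\cp(\col,\weight)\subseteq\parity(\col)$ (a play in $\cp(\col,\weight)$ has infinitely many of its requests answered, forcing the largest colour seen infinitely often to be even) and the determinacy of Borel games.
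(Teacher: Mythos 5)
Your Item~\ref{lem:bounded-cost-parity:reduction:player1} is essentially correct and follows the same route as the paper: simulate the gadget play, observe that a play consistent with Player~$1$'s winning strategy never reaches an answering sink, and convert unboundedness of the energy into unboundedness of $\ampl(\rho)$. Your phrasing is contrapositive (if $\ampl(\rho)\le B$ then Player~$0$ could keep the energy above $-2B$ by recharging at polarity switches, contradicting that $\tau'$ wins for every initial credit), and it leans on positionality of $\tau'$ to compare two gadget plays that differ only in how long Player~$0$ loiters at the $(e,p,0)$ vertices; the paper instead builds the recharging directly into the simulation function so that the energy is nonnegative at the end of each equi-polarity infix and then reads the amplitude bound off that single play. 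Both work.

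Item~\ref{lem:bounded-cost-parity:reduction:player0}, however, has a genuine gap, and it is exactly the one you flag yourself: you never actually construct the band-bounded strategy $\hat\sigma_{v^*}$, and you express doubt that the polarity-follows-sign rule does the job. The paper's resolution is that this rule \emph{does} work, but the boundedness is not obtained by a potential or energy-bookkeeping argument at the switches (where, as you worry, Player~$0$ may be handed arbitrary free energy). Instead it comes from a pumping argument against the \emph{finite-state} strategy $\sigma_{v^*}$ of size $s=n'd'W'$ guaranteed by Proposition~\ref{prop:energyparity:equivalence}: Lemma~\ref{lem:boundeddecreaseinenergyparity} shows that no infix of a play consistent with such a strategy can have weight $\le -(n's-1)W'-1$. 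In the combined strategy, the virtual polarity within an epoch starting at the most relevant request's position $j'$ is set to $\sign(\weight(v_{j'}\cdots v_{j}))$; consequently, if the weight accumulated since $j'$ ever exceeds $d'{(n'W')}^2$ in absolute value, one extracts a maximal constant-polarity stretch on which the simulated gadget play loses more than $(n's-1)W'+1$ energy, contradicting Lemma~\ref{lem:boundeddecreaseinenergyparity} (this is Lemma~\ref{lem:epra:esi-cost-bound}). Your alternative parenthetical route via the contrapositive is also not a repair: extracting, from an arbitrary winning strategy of Player~$1$ in $\game$, a single vertex $u$ from which she can keep \emph{the request opened at $u$ itself} unanswered while unbounding its amplitude is precisely as hard as the statement being proved, and the paper does not take that route (the mismatch between ``Player~$1$ wins somewhere'' and ``Player~$1$ wins some $\game_{v^*}$ from $(v^*,+)$'' is absorbed by the outer fixed-point algorithm, not by this lemma). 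Your epoch bookkeeping (at most $d$ changes of the most relevant request per request, plus the case split on finitely versus infinitely many epochs) does match Remark~\ref{rem:epra:esi-bound} and Lemma~\ref{lem:epra:winning-strategy} and is fine as stated; only the per-epoch weight bound is missing, and it is the heart of the proof.
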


\noindent
Before we prove this lemma, we first note that it is the main building block for the algorithm that solves bounded parity games with weights by repeatedly solving energy parity games, which is very similar to Algorithm~\ref{algorithm_fixpoint}.
Indeed, we just swap the roles of the players:
We compute $1$-attractors instead of $0$-attractors and we change the definition of $X_k$. Hence, we obtain Algorithm~\ref{algorithm_fixpoint2}.

\begin{algorithm}
\begin{algorithmic}[1]
 \REQUIRE{Bounded parity game with weights~$\game$ with arena~$\arena$, coloring~$\col$ and weighting~$\weight$}
 \STATE{$k = 0$}; $W_1^k = \emptyset$; $\arena_k = \arena$
 \REPEAT%
 \STATE{$k = k + 1$}
\STATE{$X_k = \set{ v^* \mid \text{Player~$1$ wins the energy parity game~$({(\arena_{k-1})}_{v^*}, \col_{v^*},\weight_{v^*})$ from $(v^*,+)$} }$}
 \STATE{$W_1^k = W_1^{k-1} \cup \att{1}{\arena_{k-1}}{X_k}$}
 \STATE{$\arena_k = \arena_{k-1} \setminus \att{1}{\arena_{k-1}}{X_k}$}
 \UNTIL{$X_k = \emptyset$}
 \RETURN{$W_1^k$}
\end{algorithmic}
\caption{A fixed-point algorithm computing $\winreg_1(\arena, \bcp(\col, \weight))$.}%
\label{algorithm_fixpoint2}
\end{algorithm}

Algorithm~\ref{algorithm_fixpoint2} terminates after solving at most a quadratic number of energy parity games of polynomial size. 
Furthermore, the proof of correctness is analogous to the one for Algorithm~\ref{algorithm_fixpoint}, relying on Lemma~\ref{lem:bounded-cost-parity:reduction}. We only need two further properties: the $1$-extendability of $\bcp(\col,\weight)$, and an assertion that $\att{1}{\arena_{k-1}}{X_k}$ is a trap for Player~$0$ in~$\arena_{k-1}$.
Both are easy to verify.

After plugging Algorithm~\ref{algorithm_fixpoint2} into Algorithm~\ref{algorithm_fixpoint}, Proposition~\ref{prop:energyparity:complexity} yields our main theorem, settling the complexity of solving parity games with weights.

\begin{thm}%
\label{thm:bounded-parity-with-weights:complexity}
The following problem is in $\np \cap \conp$ and can be solved in pseudo-quasi-polynomial time:
\begin{quotation}
	``Given a parity game with weights~$\game$ and a vertex~$v$ in~$\game$, does Player~$0$ win~$\game$ from~$v$?''
\end{quotation}
\end{thm}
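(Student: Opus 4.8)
The plan is to obtain the algorithm for solving parity games with weights by substituting Algorithm~\ref{algorithm_fixpoint2} for the oracle used in Algorithm~\ref{algorithm_fixpoint}, and then to track the complexity through this composition. First I would argue correctness: Lemma~\ref{lem:algorithm-parity-weights-correctness} shows that Algorithm~\ref{algorithm_fixpoint} computes $\winreg_0(\arena,\cp(\col,\weight))$ provided the oracle it calls correctly solves bounded parity games with weights, and Algorithm~\ref{algorithm_fixpoint2} does exactly that: its correctness follows, by the same standard fixed-point argument used for Lemma~\ref{lem:algorithm-parity-weights-correctness}, from Lemma~\ref{lem:bounded-cost-parity:reduction} together with $1$-extendability of $\bcp(\col,\weight)$ and the fact that each computed $1$-attractor is a trap for Player~$0$ (Remark~\ref{remark:traps}). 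Since parity games with weights are determined, deciding whether Player~$0$ wins from a fixed vertex~$v$ — and indeed computing the whole winning region — reduces to running this composed procedure.

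Next I would bound the work done. Algorithm~\ref{algorithm_fixpoint} runs at most $\size{\arena}$ iterations, each invoking Algorithm~\ref{algorithm_fixpoint2} once; Algorithm~\ref{algorithm_fixpoint2} runs at most $\size{\arena}$ iterations, each of which solves at most $\size{V}$ energy parity games $\game_{v^*}$, one per vertex~$v^*$. Hence the procedure solves $\bigo(\size{\arena}^3)$ energy parity games, interleaved only with attractor computations and constructions of polarized arenas, all of which take polynomial time. The key observation is that each $\game_{v^*}$ is small: its arena $\arena_{v^*}$ has $\bigo(\size{V}+\size{E})$ vertices, it uses $\bigo(d)$ colors where $d$ is the number of colors of~$\game$, and its largest absolute weight is still~$W$, since $\weight_{v^*}$ only copies, negates, or sets to~$0$ or~$1$ the weights of~$\weight$. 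Thus $\size{\game_{v^*}} = \bigo(\size{\game})$; in particular the binary encoding of the weights is preserved up to a constant factor.

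Finally I would conclude. The above is a polynomial-time Turing reduction from deciding the winner of a parity game with weights to deciding the winner of an energy parity game. By Proposition~\ref{prop:energyparity:complexity} the latter problem lies in $\np \cap \conp$, and $\np \cap \conp$ is closed under polynomial-time Turing reductions: to place the composed problem in~$\np$ one guesses, for each of the polynomially many oracle calls, its answer together with an $\np$-witness (for a ``no'' answer this uses that the complement is in $\np$, since the problem is in $\conp$) and then verifies everything deterministically, and dually for $\conp$. Hence deciding parity games with weights is in $\np \cap \conp$. For the running time, Proposition~\ref{prop:energyparity:complexity} solves each of the $\bigo(\size{\arena}^3)$ energy parity games in pseudo-quasi-polynomial time in its size, which by the previous paragraph is pseudo-quasi-polynomial in $\size{\game}$; summing over all calls and adding the polynomial overhead leaves the overall running time pseudo-quasi-polynomial. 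The only step requiring genuine care is the size analysis of the games $\game_{v^*}$ — everything else is routine composition of results established above; a minor subtlety is that one must invoke closure of $\np \cap \conp$ under \emph{Turing} reductions rather than merely many-one reductions.
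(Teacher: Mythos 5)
Your proposal is correct and follows exactly the paper's route: the paper proves this theorem by plugging Algorithm~\ref{algorithm_fixpoint2} into Algorithm~\ref{algorithm_fixpoint} (correctness via Lemmas~\ref{lem:parity-to-bounded-parity:reduction}, \ref{lem:algorithm-parity-weights-correctness}, and~\ref{lem:bounded-cost-parity:reduction}) and then invoking Proposition~\ref{prop:energyparity:complexity} on the polynomially many polynomial-size energy parity games~$\game_{v^*}$. Your explicit remarks on the size of the~$\game_{v^*}$ and on closure of $\np\cap\conp$ under polynomial-time Turing reductions merely spell out details the paper leaves implicit.
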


\noindent
It remains to prove Lemma~\ref{lem:bounded-cost-parity:reduction}.
We do so in the following section.

\subsection{Proof of Lemma~\ref{lem:bounded-cost-parity:reduction}}%
\label{subsec:bounded-parity-weights-solve:proof}
We prove the two assertions of Lemma~\ref{lem:bounded-cost-parity:reduction} separately from each other:
We first show Item~\ref{lem:bounded-cost-parity:reduction:player1} of Lemma~\ref{lem:bounded-cost-parity:reduction}, before continuing to show Item~\ref{lem:bounded-cost-parity:reduction:player0}.
In order to prepare for this, however, we first introduce some notation.
Let~$v^* \in V$ and consider~$\game_{v^*}$.
We distinguish three types of plays in~$\game_{v^*}$:
\begin{description}
	\item[Type -1] Plays that have a suffix~${(e, p, 0)}^\omega$ for some~$e \in E$ and some~$p \in P$.
	\item[Type \hphantom{-}0] Plays that visit infinitely many vertices from both~$V \times P$ and~$E \times P \times \set{0,1}$.
	\item[Type \hphantom{-}1] Plays that have a suffix~${(v, p)}^\omega$. Note that this implies~$\col(v) \in \answer{\col(v^*)}$.
\end{description}

Clearly, plays of Type~$-1$ are losing for Player~$0$ due to the coloring of~$\game'_{v^*}$ labeling vertices of the form~$(e, p, 0)$ with the odd color one.
Dually, plays of Type~$1$ are losing for Player~$1$, since~$\col(v) \in \answer{\col(v^*)}$ implies that~$(v, p)$ carries color zero and its only outgoing edge is a self-loop of weight~$0$.
We formalize this observation in the following remark.

\begin{numrem}%
\label{rem:types}
	Let~$\rho'$ be a play in~$\game_{v^*}$ that starts in~$(v^*, p)$.
	\begin{enumerate}
		\item\label{rem:types:playerzero} If~$\rho'$ is consistent with a winning strategy for Player~$0$ from~$(v^*, p)$, then~$\rho'$ is not a play of Type~$-1$.
		\item\label{rem:types:playerone} If~$\rho'$ is consistent with a winning strategy for Player~$1$ from~$(v^*, p)$, then~$\rho'$ is not a play of Type~$1$.
	\end{enumerate}
\end{numrem}

\noindent
In order to remove the added vertices of the form~$E \times P \times \set{0,1}$ from plays in~$\game_{v^*}$, we define the homomorphism~$\unpol \colon {(V')}^* \cup {(V')}^\omega \rightarrow V^* \cup V^\omega$ induced by $\unpol(v, p) = v$ and~$\unpol(e, p, i) = \epsilon$ for~$v \in V$,~$e \in E$,~$p \in P$, and~$i \in \set{0,1}$.
Let~$\rho' \in {(V')}^* \cup {(V')}^\omega$.
We call $\unpol(\rho')$ the \textbf{unpolarization} of~$\rho'$.

\begin{numrem}%
\label{rem:unpolparity}
	Let~$\rho'$ be a play of Type~$0$ in some~$\game_{v^*}$.
	We have~$\rho' \in \parity(\col_{v^*})$ if and only if $\unpol(\rho') \in \parity(\col)$.
\end{numrem}

\subsubsection{Proof of Item~\ref{lem:bounded-cost-parity:reduction:player1} of Lemma~\ref{lem:bounded-cost-parity:reduction}}%
\label{subsec:bounded-parity-weights-solve:proof-player1}
Recall that we need to show that Player~$1$ wins the bounded parity game with weights~$\game$ from $v^*$ if she wins the energy-parity game~$\game_{v^*}$ from $(v^*,+)$.
Thus, let~$\tau_{v^*}$ be a winning strategy for Player~$1$ from~$(v^*, +)$ in~$\game_{v^*}$.
We define a winning strategy~$\tau$ for her from~$v^*$ in~$\game$ such that~$\tau$ mimics the moves made by~$\tau_{v^*}$.
To this end,~$\tau$ keeps track of a play prefix $\game_{v^*}$.
Formally, we define~$\tau$ together with a simulation function~$h$ that satisfies the following invariant:

\begin{quotation}
If~$\pi$ is a nonempty play prefix in~$\arena$ that starts in~$v^*$, is consistent with~$\tau$, and ends in some~$v$,
then~$h(\pi)$ is a play prefix in~$\arena_{v^*}$ that starts in~$(v^*, +)$, is consistent with~$\tau_{v^*}$, and ends in some~$(v, p)$. Furthermore, $\unpol(h(\pi)) = \pi$.
\end{quotation}
Recall that, if $h$ has the properties described above, then, due to the structure of~$\arena_{v^*}$, for each~$\pi$, given~$h(\pi)$, the strategy~$\tau_{v^*}$ prescribes a move to some vertex~$((v, v'), p, 1)$, where~$(v, v') \in E$.
We can mimic this choice by moving to~$v'$ in~$\game$.

We now define~$h$ and~$\tau$ formally and begin with~$h(v^*) = (v^*, +)$, which clearly satisfies the invariant.
Now let~$\pi = v_0 \cdots v_j$ be some nonempty play prefix in~$\arena$ beginning in~$v^*$ and consistent with~$\tau$ such that~$h(\pi)$ is defined.
Due to the invariant,~$h(\pi)$ ends in~$(v_j, p_j)$ for some $p_j \in P$.

If~$v_j \in V_1$, there is a unique vertex~$v_{j+1}$  such that $h(\pi) \cdot ((v_j, v_{j+1}), p_j, 1)$ is consistent with~$\tau_{v^*}$.
We define~$\tau(\pi) = v_{j+1}$.
Such a~$v_{j+1}$ exists, because $(v_j, p_j)$, the last vertex of $h(\pi)$, satisfies $\col(v_j) \notin \answer{\col(v^*)}$ due to the invariant, Item~\ref{rem:types:playerone} of Remark~\ref{rem:types}, and because the answering vertices are sinks.
If, however,~$v_j \in V_0$, then let~$v_{j+1}$ be an arbitrary successor  of~$v_j$ in~$\arena$.
In either case, it remains to define~$h(\pi \cdot v_{j+1})$.

Since we want to simulate the move from~$v_j$ to~$v_{j+1}$ in~$h(\pi \cdot v_{j+1})$, we first move from~$(v_j, p_j)$ to~$((v_j, v_{j+1}), p_j, 1)$.
Moreover, in order to satisfy the invariant, we aim to simulate the play prefix~$\pi \cdot v_{j+1}$ such that~$h(\pi \cdot v_{j+1})$ is consistent with~$\tau_{v^*}$.
This strategy may prescribe for Player~$1$ to either preserve the polarity~$p_j$, or to switch it during the simulated move from~$v_j$ to~$v_{j+1}$.

In the former case, i.e., if~$\tau_{v^*}(h(\pi) \cdot ((v_j, v_{j+1}), p_j, 1)) = (v_{j+1}, p_j)$, we define
\[
	h(\pi \cdot v_{j+1}) = h(\pi) \cdot ((v_j, v_{j+1}), p_j, 1) \cdot (v_{j+1}, p_j) \enspace .
\]
In the latter case, Player~$0$ gets an opportunity to recharge the energy by taking the self-loop of the vertex~$((v_j, v_{j+1}), p_j, 0)$ finitely often.
We opt to let her recover the energy lost so far in the play prefix, i.e., we pick~$c_j = \max\set{0, -\weight(h(\pi) \cdot ((v_j, v_{j+1}), p_j, 1))}$ and define
\[
	h(\pi \cdot v_{j+1}) = h(\pi) \cdot ((v_j, v_{j+1}), p_j, 1) \cdot {((v_j, v_{j+1}), p_j, 0)}^{c_j + 1} \cdot (v_{j+1}, \compl{p_j})
\]
in this case.
Since~$h(\pi \cdot v_{j+1})$ is consistent with~$\tau_{v^*}$ in either case, we satisfy the invariant in either case.
This completes the definition of~$\tau$ and~$h$.

It remains to show that~$\tau$ is indeed winning from~$v^*$ in~$\game$.
To this end, let~$\rho = v_0v_1v_2\cdots$ be a play in~$\arena$ that starts in~$v^*$ and that is consistent with~$\tau$.
We need to show~$\rho \notin \bcp(\col, \weight)$.

Note that $h(v_0 \cdots v_j)$ is a strict prefix of $h(v_0 \cdots v_{}j+1)$ for every $j$.
As each such $h(v_0 \cdots v_j)$ is a play prefix in $\arena_{v^*}$, there is a unique infinite play~$\rho'$ in $\arena_{v^*}$ such that each $h(v_0 \cdots v_j)$ is a prefix of $\rho'$, i.e, $\rho'$ is the limit of the~$h(v_0 \cdots v_j)$ for increasing prefixes~$v_0 \cdots v_j$ of~$\rho$.
Due to the invariant, $\rho'$ starts in $(v^*, +)$ and is consistent with $\tau_{v^*}$.
Moreover, due to the construction of~$h$, we obtain~$\unpol(\rho') = \rho$.
Finally, we have that~$\rho'$ is a play of Type~$0$ in~$\game_{v^*}$.
Hence, due to Remark~\ref{rem:unpolparity},~$\rho$ satisfies the parity condition if and only if~$\rho'$ satisfies the parity condition.

As the play $\rho'$ is consistent with the winning strategy~$\tau_{v^*}$ for Player~$1$, we have $ \rho' \notin \energyparity(\col_{v^*}, \weight_{v^*}) $, i.e.,~$\rho'$ either violates the parity condition or the energy condition.
Hence, as argued above, if~$\rho'$ violates the parity condition, then so does~$\rho$, i.e., $\rho$ is indeed winning for Player~$1$.

Now assume that~$\rho'$ violates the energy condition.
Due to the structure of~$\arena_{v^*}$ and the construction of~$h$ we have
\[ \rho' = \Pi_{j = 0,1,2,\dots} (v_j, p_j)\cdot ((v_j, v_{j+1}), p_j, 1)\cdot {((v_j, v_{j+1}), p_j, 0)}^{m_j} \]
for some~$m_j \in \nats$.
Since~$\rho'$ violates the energy condition, we have $\inf_{j \in \nats} \weight((v_0, p_0) \cdots (v_j, p_j) \cdot ((v_j, v_{j+1}), p_j, 1)) = -\infty$.
The restriction to play prefixes of this form suffices due to the structure of~$\arena_{v^*}$ and, in turn, the structure of~$\rho'$.
Moreover, since Player~$1$ wins~$\game_{v^*}$ from~$(v^*, +)$, the initial vertices~$v^*$ and~$(v^*, +)$ of~$\rho$ and~$\rho'$, respectively, have the same odd color.
Also, as~$\rho'$ is a play of Type~$0$, the request for the color~$\col(v^*)$ is never answered in~$\rho$ or~$\rho'$.
We show that the request for~$\col(v^*)$ in $\rho$ is unanswered with infinite cost, which concludes the proof.

To this end, we split~$\rho'$ into infixes of constant polarity.
Given a vertex~$v = (v', p)$ or~$v = ((v', v''), p, i)$, we call~$p$ the \textbf{polarity} of~$v$.
Let~$\rho' = \mu'_0\mu'_1\mu'_2\cdots$, where each~$\mu'_j$ is a maximal finite (or infinite) infix (or suffix) of~$\rho'$, such that all vertices in~$\mu'_j$ have the same polarity.
We call an infix~$\mu'_j$ of~$\rho'$ an \textbf{equi-polarity infix} (\epi) of~$\rho'$.

Since the polarity remains constant throughout each~$\mu_j'$, Player~$0$ only resets the energy via repeatedly traversing a self-loop of a vertex in~$\arena_{v^*}$ at the last vertex visited in~$\mu'_j$, if at all.
Hence, the energy levels attained during $\mu'_j$ and~$\unpol(\mu_j')$ are closely related.

\begin{numrem}%
\label{rem:induced-epra:epi:equivalence}
	Let~$\mu'$ be an \epi beginning in~$(v_j, p_j)$ and let~$\mu = \unpol(\mu') = v_{j}v_{j+1}v_{j+2}\cdots$.
	For each $j'$ with~$j \leq j' < j + \card{\mu}$, we have~$\card{\weight(v_j \cdots v_{j'})} = \card{\weight((v_j, p_j) \cdots (v_{j'}, p_{j'}))}$.
\end{numrem}

In particular, Remark~\ref{rem:induced-epra:epi:equivalence}, the structure of~$\arena_{v^*}$, and the definition of~$h$ imply that we have~$\ampl(\unpol(\mu')) = \ampl(\mu')$ for all \epis~$\mu'$ of~$\rho'$.
Thus, if there exist only finitely many~\epis of~$\rho'$, let~$\mu'$ be the infinite final~\epi of~$\rho'$, let~$\mu = \unpol(\mu')$, and note that, due to~$\ampl(\rho') = \infty$, we have~$\ampl(\mu') = \infty$.
Due to Remark~\ref{rem:induced-epra:epi:equivalence}, we obtain~$\ampl(\mu) = \infty$, which implies that the request posed at the initial position of~$\rho$ is unanswered with infinite cost due to the reasoning above, as~$\mu$ is a suffix of~$\rho$.

If, however, there exist infinitely many~\epis of~$\rho'$, assume towards a contradiction that the cost of answering the request posed at the initial position of~$\rho$ is finite.
By construction of~$\rho'$, the energy level is nonnegative at the end of each \epi.
Since~$\rho'$ violates the energy condition, for each bound~$b \in \nats$ there exists an \epi~$\mu'$ of~$\rho'$ with a prefix of weight strictly smaller than~$-b$.
We obtain~$\ampl(\unpol(\mu')) > b$ via Remark~\ref{rem:induced-epra:epi:equivalence}.
This contradicts the cost of answering the request posed at the initial position of~$\rho$ being bounded and concludes the proof of Item~\ref{lem:bounded-cost-parity:reduction:player1} of Lemma~\ref{lem:bounded-cost-parity:reduction}.


\subsubsection{Proof of Item~\ref{lem:bounded-cost-parity:reduction:player0} of Lemma~\ref{lem:bounded-cost-parity:reduction}}%
\label{subsec:bounded-parity-weights-solve:proof-player0}
To prove Item~\ref{lem:bounded-cost-parity:reduction:player0} of Lemma~\ref{lem:bounded-cost-parity:reduction}, we show that Player~$0$ wins the bounded parity game with weights~$\game$ from every vertex, if he wins each energy-parity game~$\game_{v^*}$ from $(v^*,+)$.
 To this end, we construct a strategy~$\sigma$ for Player~$0$ in~$\game$ that is winning for him from each vertex~$v \in V$.
As winning regions are disjoint, this implies the desired result.

For each energy parity game~$\game_v = (\arena_v, \col_v, \weight_v)$  we have~$n' = \card{\arena_v} \in \bigo(\card{\arena}^2)$, we have~$d' = \card{\col_v(V')} = \card{\col(V)} + 2$, and we have~$W' = \max ( \weight(E') ) = \max ( \weight(E) \cup \set{1} )$, where~$E$ and~$E'$ are the sets of edges in~$\arena$ and the~$\arena_v$, respectively.
Note that the values~$n'$,~$d'$, and~$W'$ of~$\game_v$ are independent of the vertex~$v$, which explains our notation.
Due to the assumption of the statement and Proposition~\ref{prop:energyparity:equivalence}, for each~$v \in V$, there exists a finite-state strategy~$\sigma_{v}$ with at most~$n'd'W'$ states that is winning for Player~$0$ from~$(v, +)$ in~$\game_{v}$.

We construct the winning strategy~$\sigma$ for Player~$0$ in~$\game$ by ``stitching together'' the individual~$\sigma_{v}$.
To this end, given a play prefix, we identify the request which should be answered most urgently.
Say this request was opened by visiting vertex~$v$.
The strategy~$\sigma$ then mimics the moves made by~$\sigma_{v}$ when starting in~$(v, +)$.
Once the request for~$\col(v)$ is answered,~$\sigma$ makes arbitrary moves until a new request is opened.

Formally, given a play prefix~$\pi = v_0\cdots v_j$, we say that a request for color~$c$ is open in~$\pi$ if there exists a position~$j'$ with~$0 \leq j' \leq j$ such that~$\col(v_{j'}) = c$ and, for all positions~$j''$ with~$j' \leq j'' \leq j$, we have~$\col(v_{j''}) \notin \answer{\col(v_{j'})}$.
Clearly there is never an open request for an even color.

Due to monotonicity, answering an open request of color~$c$ also answers all smaller open requests. Hence, the most relevant request, i.e., the largest color with an open request, is of special interest during a play. As alluded to above, it is this color that guides the strategy we are about to construct. To define it formally, we need to introduce some notation to refer to the position where the most relevant request has been opened.

If there is no open request in~$\pi$, the position of the most relevant request is undefined and we write~$\mrr(\pi) = \bot$.
Otherwise, let~$c$ be the maximal color for which there is an open request in~$\pi$.
We define~$\mrr(\pi)$ as the smallest position~$j'$, such that the request for color~$c$ is open in all prefixes of~$\pi$ of length greater than~$j'$.

As an example, consider the play prefix shown in Figure~\ref{fig:relevant-requests:example} using the notation \parnode{v}{\col(v)}.
We mark a position~$j$ with solid background if~$\mrr(v_0\cdots v_j) = j$ and with dashed background if~$\mrr(v_0 \cdots v_j) = \bot$.
Otherwise, i.e., if~$\bot \neq \mrr(v_0\cdots v_j) < j$, we leave~$j$ unmarked.
For those positions, $\mrr(v_0\cdots v_j)$ is equal to the largest (i.e., last visited) earlier position marked in solid background.
We furthermore denote the value of~$\mrr(v_0\cdots v_j)$ by an arrow going from position~$j$ to position~$\mrr(v_0\cdots v_j)$.

\begin{figure}
\centering
\begin{tikzpicture}
	\foreach \x in {3,6,11} { 
		\node[inner sep = 6pt, fill=lightgray,draw=black,minimum width=.75cm,minimum height=.75cm] (background-\x) at (\x, 0) {};
	}
	\foreach \x in {0,1,2,8,9,10} { 
		\node[inner sep = 6pt,draw=black,pattern=north east lines,pattern color=lightgray,minimum width=.75cm,minimum height=.75cm] (background-\x) at (\x, 0) {};
	}

	\foreach \label [count=\x from 0] in {0,0,2,1,0,1,3,1,4,0,2,1,1} {
		\node (col-\x) at (\x, 0) {\parnode{v_{\x}}{\label}};
	}
	\node[] (dots) at (13,0) {$\cdots$};
	
	\path
		(col-3) edge [loop above] (col-3)
		(col-4.north) edge [bend right] ($(col-3.north) + (.3cm,0)$)
		(col-5.north) edge [bend right] ($(col-3.north) + (.15cm,0)$)
		(col-6) edge [loop above] (col-6)
		(col-7.north) edge [bend right] ($(col-6.north) + (.2cm,0)$)
		(col-11) edge [loop above] (col-11)
		(col-12.north) edge [bend right] ($(col-11.north) + (.2cm,0)$);
	
	\path[draw,decorate,decoration={brace,amplitude=5pt,mirror},transform canvas={yshift=-3pt}]
		(col-0.south west) -- node[anchor=north,transform canvas={yshift=-5pt}] {$\mu_0$} (col-0.south east);
	\path[draw,decorate,decoration={brace,amplitude=5pt,mirror},transform canvas={yshift=-3pt}]
		(col-1.south west) -- node[anchor=north,transform canvas={yshift=-5pt}] {$\mu_1$} (col-1.south east);
	\path[draw,decorate,decoration={brace,amplitude=5pt,mirror},transform canvas={yshift=-3pt}]
		(col-2.south west) -- node[anchor=north,transform canvas={yshift=-5pt}] {$\mu_2$} ({col-2}.south east);
	\path[draw,decorate,decoration={brace,amplitude=5pt,mirror},transform canvas={yshift=-3pt}]
		(col-3.south west) -- node[anchor=north,transform canvas={yshift=-5pt}] {$\mu_3$} ({col-5}.south east);
	\path[draw,decorate,decoration={brace,amplitude=5pt,mirror},transform canvas={yshift=-3pt}]
		(col-6.south west) -- node[anchor=north,transform canvas={yshift=-5pt}] {$\mu_4$} ({col-8}.south east);
	\path[draw,decorate,decoration={brace,amplitude=5pt,mirror},transform canvas={yshift=-3pt}]
		(col-9.south west) -- node[anchor=north,transform canvas={yshift=-5pt}] {$\mu_5$} ({col-9}.south east);
	\path[draw,decorate,decoration={brace,amplitude=5pt,mirror},transform canvas={yshift=-3pt}]
		(col-10.south west) -- node[anchor=north,transform canvas={yshift=-5pt}] {$\mu_6$} ({col-10}.south east);
	
	\begin{scope}
		\clip ({col-11}.south west) rectangle ($({col-12}.south) + (.5,-20pt)$);
		\path[draw,decorate,decoration={brace,amplitude=5pt,mirror},transform canvas={yshift=-3pt}]
		(col-11.south west) -- node[anchor=north,transform canvas={yshift=-5pt}] {$\mu_7$} ($({col-11}.south east) + (1.5,0)$);
	\end{scope}
	\begin{scope}
		\clip ($({col-12}.south) + (.5,0)$) rectangle ($({col-12}.south) + (1.5,-20pt)$);
		\path[draw,dashed,decorate,decoration={brace,amplitude=5pt,mirror},transform canvas={yshift=-3pt}]
		(col-11.south west) -- node[anchor=north,transform canvas={yshift=-5pt}] {$\mu_7$} ($({col-11}.south east) + (1.5,0)$);
	\end{scope}

	\node at ($({col-12}.south) - (0,8pt)$) {};
\end{tikzpicture}
\caption{A play~$\rho$, its induced color sequence, its most relevant requests, and the \esis of~$\rho$.}%
\label{fig:relevant-requests:example}
\end{figure}
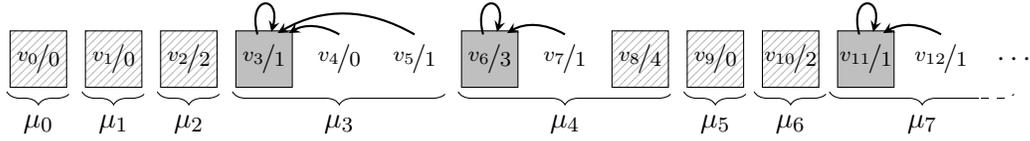

In order to leverage the moves made by the strategies for the~$\game_{v}$ in~$\game$, we need to simulate play prefixes in the latter game in the former ones.
To this end, we again define~$\sigma$ together with a simulation function~$h$.
This function~$h$ maps a play prefix consistent with~$\sigma$ to a sequence of vertices from~$V'$ (not necessarily a play prefix) such that we are able to leverage the choices made by the~$\sigma_{v}$ in order to define~$\sigma$.
Our aim is to define~$h$ such that it satisfies the following invariant:

\begin{quotation}
	Let~$\pi = v_0 \cdots v_j$ be a play prefix in~$\arena$ consistent with~$\sigma$.
	Then~$h(\pi)$ ends in some~$(v_j, p_j)$.
	Moreover, if~$\mrr(\pi) = j' \neq \bot$, then~$h(\pi)$ has a (unique) suffix~$\pi' = (v_{j'}, +)(v_{j'+1}, p_{j'+1}) \cdots (v_j, p_j)$ that is consistent with~$\sigma_{v_{j'}}$ and satisfies
	$\unpol(\pi') = v_{j'} \cdots v_j$.
\end{quotation}

\noindent
We define~$h$ and~$\sigma$ inductively and begin with~$h(v) = (v, +)$ for each~$v \in V$, which clearly satisfies the invariant.
Now let~$\pi = v_0\cdots v_j$ be a nonempty play prefix in~$\arena$ and consistent with~$\sigma$ such that $ h(\pi) $ is defined.
We again first determine a successor of~$v_j$, defining~$\sigma(\pi)$ along the way if~$v_j$ is a vertex of Player~$0$.

If~$v_j \in V_1$, let~$v_{j+1}$ be an arbitrary successor of~$v_j$ in~$\arena$, as Player~$1$ may move to any successor of $v_j$.
If, however,~$v_j \in V_0$, we distinguish two cases based on whether or not~$\mrr(\pi)$ is defined.
If~$\mrr(\pi) = \bot$, again let~$v_{j+1}$ be an arbitrary successor of~$v_j$.
This reflects the fact that Player~$0$ may move to an arbitrary successor if there is no open request.
If, however,~$\mrr(\pi) = j' \neq \bot$, then the invariant of~$h$ yields a suffix~$(v_{j'}, +) \cdots (v_j, p_j)$ of~$h(\pi)$ that is consistent with~$\sigma_{v_{j'}}$.
Let~$v_{j+1}$ be the unique vertex such that~$(v_{j'}, +) \cdots (v_j, p_j) \cdot ((v_j, v_{j+1}), p_j, 1)$ is consistent with~$\sigma_{v_{j'}}$.
Such a vertex~$v_{j+1}$ exists, because the request posed by visiting~$v_{j'}$ is open in~$\pi$ due to~$\mrr(\pi) = j'$ and since the color sequences induced by~$v_{j'} \cdots v_j$ and $(v_{j'}, +) \cdots (v_j, p_j)$ coincide, save for the irrelevant intermediate vertices of colors zero and one.
Hence,~$(v_j, p_j)$ is not an accepting sink in~$\game_{v_{j'}}$.
Since~$v_j \in V_0$, the vertex~$v_{j+1}$ is unique. In both cases, we define~$\sigma(\pi) = v_{j+1}$.
This concludes the definition of~$\sigma$.

It remains to define~$h(\pi \cdot v_{j+1})$ such that it satisfies the above invariant.
To this end, we use one of two operations.
Firstly, we define the \textbf{discontinuous extension of~\boldmath$h(\pi)$ with~\boldmath$v_{j+1}$} as \[h(\pi) \cdot (v_{j+1}, +) \enspace .\]
Secondly, we define a simulated extension of~$h(\pi)$ such that we obtain~$h(\pi \cdot v_{j+1})$ by simulating the move from~$v_j$ to~$v_{j+1}$ in some~$\game_{v}$.
Formally, we define the \textbf{simulated extension of~\boldmath$h(\pi)$ with~\boldmath$v_{j+1}$ and charge~\boldmath$m$} as
\[
h(\pi) \cdot ((v_j, v_{j+1}), p_j, 1) \cdot {((v_j, v_{j+1}), p_j, 0)}^m \cdot (v_{j+1}, p_{j+1}) \enspace , \]
where~$p_{j+1} = p_j$ if~$m = 0$ and~$p_{j+1} = \compl{p_j}$ otherwise.
This ensures that the extension is a play infix in some~$\game_{v}$.

In order to define~$h(\pi \cdot v_{j+1})$ we again distinguish whether or not~$\mrr(\pi)$ is defined.
If~$\mrr(\pi) = \bot$, we define~$h(\pi \cdot v_{j+1})$ to be the discontinuous extension of~$h(\pi)$ with~$v_{j+1}$.
This clearly satisfies the first condition of the invariant.
Moreover, the second condition of the invariant is satisfied as well:
If~$\mrr(\pi \cdot v_{j+1}) = \bot$, this condition holds true vacuously.
Otherwise, we have~$\mrr(\pi \cdot v_{j+1}) = j+1$ and observe that the suffix~$(v_{j+1}, +)$ of~$h(\pi \cdot v_{j+1})$ satisfies the second condition of the invariant.

If, however,~$\mrr(\pi) \neq \bot$, then let~$\mrr(\pi) = j'$.
By definition of~$\mrr$ we have~$\mrr(\pi \cdot v_{j+1}) \in \set{\bot, j', j+1}$.
We distinguish two sub-cases and first define~$h(\pi \cdot v_{j+1})$ for the case that~$\mrr(\pi \cdot v_{j+1}) \in \set{\bot, j+1}$.
In this case, the move to~$v_{j+1}$ either answers the most relevant request in~$\pi$, or the request posed by visiting~$v_{j+1}$ is itself the most relevant request of $\pi \cdot v_{j+1}$:
We have~$\mrr(\pi \cdot v_{j+1}) = \bot$ in the former case and~$\mrr(\pi \cdot v_{j+1}) = j+1$ in the latter one.
In either case, we define~$h(\pi \cdot v_{j+1})$ to be the discontinuous extension of~$h(\pi)$ with~$v_{j+1}$ and observe that the first condition of the invariant holds true.
If~$\mrr(\pi \cdot v_{j+1}) = \bot$, the second condition of the invariant holds true vacuously.
If, however,~$\mrr(\pi \cdot v_{j+1}) = j+1$, then the suffix~$(v_{j+1}, +)$ witnesses that the second condition of the invariant holds true.

Now assume that the move to~$v_{j+1}$ neither opens a new most relevant request, nor answers the existing one, i.e., that we have~$\mrr(\pi \cdot v_{j+1}) = j'$.
In this case, we extend the suffix of~$h(\pi)$ that is consistent with~$\sigma_{v_{j'}}$ by simulating the move from~$v_j$ to~$v_{j+1}$.
Recall that we picked the vertex~$v_{j+1}$ such that~$(v_{j'}, +) \cdots (v_j, p_j) \cdot ((v_j, v_{j+1}), p_j, 1)$ is consistent with $\sigma_{v_{j'}}$.
As we can freely choose whether or not Player~$1$ switches the polarity in the simulation, we follow the intuition stated during the construction of the polarized arena:
Recall that both players currently play ``with respect to'' the request for~$\col(v_{j'})$ opened by visiting~$v_{j'}$.
Hence, we opt to let Player~$1$ move to positive polarity if the cost of the request for~$\col(v_{j'})$ so far is nonnegative, and let him move to negative polarity otherwise.
To this end, we use the sign function~$\sign$, which is defined as
\[
	\sign(n) = \begin{cases}
 		+ & \text{if~$n \geq 0$} \enspace , \\
 		- & \text{otherwise} \enspace .
 \end{cases}
\]

If~$\sign(\weight(v_{j'} \cdots v_{j+1})) = p_j$, we define~$h(\pi \cdot v_{j+1})$ to be the simulated extension of~$h(\pi)$ with~$v_{j+1}$ and charge~$0$, thus implementing the intuition as described above.
Otherwise, i.e., if $\sign(\weight(v_{j'} \cdots v_{j+1})) = \compl{p_j}$, let~$m > 0 $ such that  $(v_{j'}, +) \cdots (v_{j}, p_{j}) \cdot ((v_j, v_{j+1}), p_j, 1) \cdot {((v_j, v_{j+1}), p_j, 1)}^{m} \cdot (v_{j+1}, \compl{p_j})$ is consistent with $\sigma_{v_{j'}}$.
Such an~$m$ exists, as otherwise the play $(v_{j'}, +) \cdots (v_{j}, p_{j}) \cdot ((v_j, v_{j+1}), p_j, 1) \cdot {((v_j, v_{j+1}), p_j, 0)}^\omega$ of Type~$-1$ that starts in~$(v_{j'}, +)$ would be consistent with the winning strategy~$\sigma_{v_{j'}}$ from~$(v_{j'}, +)$ for Player~$0$, contradicting Item~\ref{rem:types:playerzero} of Remark~\ref{rem:types}.
We define~$h(\pi \cdot v_{j+1})$ to be the simulated extension of~$h(\pi)$ with~$v_{j+1}$ and charge~$m$.
Since we have~$\mrr(v_0\cdots v_{j+1}) = j'$ by assumption, either definition of~$h(\pi \cdot v_{j+1})$ satisfies the invariant.
This completes the definition of~$h$.

It remains to show that the strategy~$\sigma$ is indeed winning for Player~$0$ from~$v^*$.
To this end, fix some play~$\rho = v_0v_1v_2\cdots$ consistent with~$\sigma$ starting in~$v^*$. Further, let $ \rho'$ be the limit of the~$h(\pi)$ for increasing prefixes~$\pi$ of~$\rho$, i.e., the unique infinite sequence with prefix~$h(\pi)$ for every prefix~$\pi$ of $\rho$.

By the construction of~$h$ and $\rho'$ we obtain~$\unpol(\rho') = \rho$.
Hence, $\rho'$ is of the form $(v_0, p_0)\cdots (v_1, p_1) \cdots (v_2, p_2) \cdots$.
We call a position~$j$ of~$\rho$ a \textbf{discontinuity} of~$\rho$ if either~$j = 0$ or if~$h(v_0 \cdots v_j)$ is the discontinuous extension of~$h(v_0 \cdots v_{j-1})$ by~$v_j$.

Let~$j$ and~$j'$ be adjacent discontinuities of~$\rho$.
We call the infix~$v_j \cdots v_{j'-1}$ of~$\rho$ an \textbf{equi-strategic infix} (\esi) of~$\rho$.
Moreover, if there only exist finitely many discontinuities of~$\rho$, let~$j^*$ be its final discontinuity.
We call the suffix~$v_{j^*}v_{j^*+1}v_{j^*+2}\cdots$ of~$\rho$ the \textbf{terminal} \esi of~$\rho$.

\begin{numrem}%
\label{rem:esi:equivalence}
Let~$\mu = v_{j}v_{j+1}v_{j+2}\cdots$ be an \esi of $\rho$.
\begin{enumerate}
	\item\label{rem:esi:equivalence:finite} If~$\mu$ is finite, then the infix~$\mu'$ of~$\rho'$ starting at position~$\card{h(v_0\cdots v_j)}-1$ and ending at position~$\card{h(v_0 \cdots v_{j + \card{\mu} - 1})}$ starts in~$(v_j, +)$, ends in some~$(v_{j + \card{\mu} - 1}, p)$, and is consistent with~$\sigma_{v_j}$.
	\item\label{rem:esi:equivalence:infinite} If~$\mu$ is infinite, then the suffix~$\mu'$ of~$\rho'$ starting at position~$\card{h(v_0\cdots v_j)}-1$ starts in~$(v_j, +)$ and is consistent with~$\sigma_{v_j}$.
\end{enumerate}
\end{numrem}

\noindent
For each position~$j$ of~$\rho$ we define~$\esi(j) = k$ if the~$k$-th \esi of~$\rho$ contains~$v_j$.
Moreover, if~$\mu = v_{j}v_{j+1}v_{j+2} \cdots$ is an \esi of~$\rho$, then we call~$\col(v_j)$ the \textbf{characteristic color} of~$\mu$.
By the construction of~$h$, if the characteristic color of an \esi~$\mu$ is even, then~$\mu$ consists only of a single vertex.
If, however, the characteristic color~$c$ of an \esi~$\mu$ is odd, then we have~$\col(v) \leq c$ for all vertices~$v$ in~$\mu$.
Moreover, let~$c'$ be the characteristic color of the \esi succeeding~$\mu$, if~$\mu$ is not the terminal \esi of~$\rho$.
Due to the construction of~$h$, we have~$c' > c$.
If~$c'$ is even, this observation implies~$c' \in \answer{\col(v)}$ for all vertices~$v$ in~$\mu$.
As the number of colors in~$\game$ is finite, this in turn implies that the number of \esis between a request and its response (if a response exists at all) is bounded.
\begin{numrem}%
\label{rem:epra:esi-bound}
Let~$j$ be some position in~$\rho$ and let~$k = \esi(j)$.
Moreover, let~$d$ be the number of colors in~$\game$.
\begin{enumerate}
	\item If the request at position~$j$ is first answered at position~$j'$, then~$\esi(j') < k+d$
	\item If the request at position~$j$ is unanswered in~$\rho$, then~$\rho$ contains less than~$k + d$ many \esis.
\end{enumerate}
\end{numrem}

\begin{proof}
\textbf{1.}
Let~$\esi(j') = k'$.
If~$\col(v_j)$ is even, we obtain~$k = k'$ and the claim is trivial.
Thus, assume that $\col(v_j)$ is odd.
Since the request at position~$j$ is answered in the~$k'$-th \esi, define the sequence~$j_0 < \cdots < j_{k' - k}$ inductively such that~$j_0 = \mrr(v_0\cdots v_j)$ and such that, for each~$\ell$ with~$0 \leq \ell \leq k'-k$ the position~$j_\ell$ is the unique position in~$\rho$ with~$\esi(j_\ell) \neq \esi(j_{\ell - 1})$.
We claim that the requests opened by visiting the~$v_{j_\ell}$ are answered at position~$j'$ at the earliest and show this by induction over~$\ell$.

First consider~$\ell = 0$.
Since~$\col(v_j)$ is odd, we obtain~$j_0 \neq \bot$ and, hence, that~$\col(v_{j_0})$ is odd.
Assume towards a contradiction that the request at position~$j_0$ is answered at position~$j'' < j'$.
If~$j'' < j$, then this contradicts our choice of~$j_0$ as the position of the most relevant request of the play prefix~$v_0 \cdots v_j$: Since~$\col(v_j)$ is odd, we would then obtain~$\mrr(v_0 \cdots v_j) > j'' > 0 = \mrr(v_0 \cdots v_j)$.
If, however,~$j < j'' < j'$, we obtain that the request at position~$j$ is answered at position~$j''$, since~$\col(v_{j_0}) \geq \col(v_j)$ by definition of the most relevant requests.
Hence, we obtain~$j'' \geq j'$.

Now consider~$j_\ell$ for~$0 < \ell \leq k' - k$: Since we have~$\col(v_{j_{\ell}}) > \col(v_{j_{\ell-1}})$, an answer to the request posed by~$v_{j_{\ell}}$ would answer the request posed by~$v_{j_{\ell - 1}}$ as well.
Via the induction hypothesis we obtain that~$v_{j_{\ell}}$ is answered at position~$j'$ at the earliest.

Thus, the only possibility to switch the mimicked strategy  and enter a new \esi is by visiting a vertex of larger odd color than the vertex at the initial position of the current \esi.
Hence, the color of the~$v_{j_\ell}$ is strictly monotonically increasing, which implies~$k' < k +d$.

\textbf{2.}
Since the request at position~$j$ is unanswered, we obtain that~$\col(v_j)$ is odd.
Again define the sequence~$j_0 < j_1 < j_2 < \cdots$ inductively such that~$j_0 = \mrr(v_0\cdots v_j)$ and such that, for each~$\ell > 0$, the position~$j_\ell$ is the unique position in~$\rho$ with~$\esi(j_\ell) \neq \esi(j_{\ell - 1})$.
Due to similar reasoning as in the previous case, we obtain that the request posed at each~$j_\ell$ is unanswered.
Thus, the only possibility to enter a new \esi is again by visiting a vertex of higher odd color than the vertex at the initial position of the current \esi, which again implies that the number of \esis in~$\rho$ is bounded by~$k + d$.
\end{proof}

Recall that the bounded parity condition with weights requires the play~$\rho$ to not only satisfy the parity condition, but also that the cost of almost all requests is bounded and that there exists no unanswered request with infinite cost in~$\rho$.
We first show that~$\rho$ satisfies the classical parity condition.
In a second step, we then show that there exists a bound on the cost of each (answered or unanswered) request in~$\rho$.
The former condition, i.e., that~$\rho$ satisfies the parity condition, is in large parts implied by Remark~\ref{rem:epra:esi-bound}.

\begin{lem}%
\label{lem:epra:winning-strategy}
The play~$\rho$ satisfies the parity condition.
\end{lem}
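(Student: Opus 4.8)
The plan is to distinguish two cases according to whether $\rho$ has finitely or infinitely many equi-strategic infixes (\esis), and in each case to transfer a parity argument from the energy parity games $\game_v$ to $\rho$ itself. Throughout I would let $c$ denote the largest color that occurs infinitely often in $\rho$, so that it suffices to show that $c$ is even.

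First I would handle the case of infinitely many \esis. Here the key observation is that $\rho$ can contain no unanswered request: if the request at some position $j$ were unanswered, then Remark~\ref{rem:epra:esi-bound} would bound the number of \esis of $\rho$ by $\esi(j) + d$, contradicting the assumption. So every request in $\rho$ is eventually answered. Now suppose towards a contradiction that $c$ is odd, and pick a position $j$ with $\col(v_j) = c$ that is so late that every later position carries a color at most $c$. The request opened at $j$ is answered at some position $j' \geq j$ by a vertex whose color lies in $\answer{c}$, hence is even and at least $c$; since $c$ is odd this color strictly exceeds $c$, contradicting the choice of $j$. Thus $c$ is even.

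Next I would handle the case of finitely many \esis. Let $\mu = v_{j^*} v_{j^*+1} \cdots$ be the terminal \esi; it is infinite, and since an \esi with even characteristic color consists of a single vertex, $\col(v_{j^*})$ is odd. By construction of $h$ the discontinuity at $j^*$ is the one that opens the request at $v_{j^*}$ as the most relevant request, and this request stays most relevant throughout $\mu$ (otherwise a further discontinuity would occur), so it is never answered along $\mu$. By Remark~\ref{rem:esi:equivalence} the corresponding suffix $\mu'$ of $\rho'$ starts in $(v_{j^*},+)$, is consistent with the winning strategy $\sigma_{v_{j^*}}$ for Player~$0$ in $\game_{v_{j^*}}$, and satisfies $\unpol(\mu') = \mu$. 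Remark~\ref{rem:types} then rules out that $\mu'$ is of Type~$-1$, and since no answering sink of $\game_{v_{j^*}}$ is ever reached along $\mu'$, it is not of Type~$1$ either; hence $\mu'$ is of Type~$0$. Because $\mu'$ is consistent with a winning strategy for Player~$0$ in the energy parity game $\game_{v_{j^*}}$, it satisfies $\parity(\col_{v_{j^*}})$, and Remark~\ref{rem:unpolparity} transfers this to $\mu = \unpol(\mu') \in \parity(\col)$. Since $\mu$ is a suffix of $\rho$ and the parity condition is prefix-independent, $\rho$ satisfies it.

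The bookkeeping facts used — that $\mu'$ is genuinely an infinite path in $\arena_{v_{j^*}}$ consistent with $\sigma_{v_{j^*}}$, and that $\mrr$ is constant along the terminal \esi — are immediate from the definition of $h$. I expect the only delicate point to be, in the finitely-many-\esis case, the verification that $\mu'$ is of Type~$0$ rather than Type~$1$: this is exactly where one exploits that the initial request of the energy parity game $\game_{v_{j^*}}$ is never answered along $\mu$. The infinitely-many-\esis case, by contrast, follows almost directly from Remark~\ref{rem:epra:esi-bound}, which is presumably what is meant by the parity condition being "in large parts implied" by that remark.
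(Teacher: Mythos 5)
Your proof is correct and follows essentially the same route as the paper's: the paper splits on whether $\rho$ contains an unanswered request (rather than on the number of \esis, but Remark~\ref{rem:epra:esi-bound} makes the two splits interchangeable), dispatches the all-answered case in one line, and handles the remaining case exactly as you do, via the terminal \esi, its Type-$0$ suffix $\mu'$ consistent with $\sigma_{v_{j^*}}$ in $\game_{v_{j^*}}$, and Remark~\ref{rem:unpolparity}. Your observation that excluding Type~$1$ is the delicate point (because the initial request of $\game_{v_{j^*}}$ is never answered along the terminal \esi) is exactly the content behind the paper's terser justification that $\mu'$ is of Type~$0$.
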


\begin{proof}
If~$\rho$ contains no unanswered requests, then it vacuously satisfies the parity condition.
Hence, let~$j$ be the position of such an unanswered request in~$\rho$.
Due to Remark~\ref{rem:epra:esi-bound}, we obtain that there exist only finitely many \esis in~$\rho$.
Let~$\mu = v_{j^*}v_{j^*+1}v_{j^*+2}\cdots$ be the terminal \esi of~$\rho$.
By the construction of~$h$, there exists a suffix~$\mu'$ of~$\rho'$ with~$\unpol(\mu') = \mu$.
Due to Item~\ref{rem:esi:equivalence:infinite} of Remark~\ref{rem:esi:equivalence}, the suffix~$\mu'$ begins in~$(v_{j^*}, +)$ and is consistent with the winning strategy~$\sigma_{v_{j^*}}$ for Player~$0$ from~$(v_{j^*}, +)$ in~$\game_{v_{j^*}}$.
Moreover,~$\mu'$ is a play of Type~$0$ due to~$\mu'$ being the terminal \esi of~$\rho$ and due to being consistent with~$\sigma_{v_{j^*}}$.
Hence, we obtain that~$\mu$ satisfies the parity condition via Remark~\ref{rem:unpolparity}, which in turn implies that~$\rho$ satisfies the parity condition.
\end{proof}

It remains to show that the costs of requests in~$\rho$ are bounded.
Recall that we defined $n' = \card{\arena_{v}}$,~$d'$ as the largest color of a vertex in the~$\game_{v}$, and~$W'$ as the largest absolute weight of an edge.
We claim that the costs of the most relevant requests in~$\rho$ are bounded by~${(n'd'W')}^2$.
This implies that the cost of all requests is bounded:
Due to Remark~\ref{rem:epra:esi-bound} we obtain that the number of \esis between a request and its response, if one exists, is bounded by~$d$.
Hence, it suffices to show that each~\esi contributes at most a bounded amount to the cost of answering a request.

\begin{lem}%
\label{lem:epra:esi-cost-bound}
Let~$\mu = v_j v_{j+1} v_{j+2} \cdots$ be an \esi of~$\rho$.
For each~$j'$ with~$j \leq j' < j + \card{\mu}$ we have~$\card{\weight(v_j \cdots v_{j'})} \leq d'{(n'W')}^2$.
\end{lem}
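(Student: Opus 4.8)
The plan is to derive the bound from Lemma~\ref{lem:boundeddecreaseinenergyparity}: a large value of~$\card{\weight(v_j\cdots v_{j'})}$ will be shown to force an infix of the simulating play~$\rho'$ in~$\game_{v_j}$ along which the accumulated weight~$\weight_{v_j}$ drops by more than a play consistent with the finite-state winning strategy~$\sigma_{v_j}$ can afford. If the characteristic color~$\col(v_j)$ of~$\mu$ is even, then (as recalled just before Remark~\ref{rem:epra:esi-bound})~$\mu$ is the single vertex~$v_j$, so only~$j' = j$ is relevant and~$\card{\weight(v_j)} = 0 \le d'{(n'W')}^2$; hence assume~$\col(v_j)$ is odd from now on.

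Let~$\mu'$ be the infix of~$\rho'$ associated with the \esi~$\mu$. By Remark~\ref{rem:esi:equivalence}, $\mu'$ starts in~$(v_j,+)$ and is consistent with~$\sigma_{v_j}$, and by Proposition~\ref{prop:energyparity:equivalence} we may take~$\sigma_{v_j}$ to have size~$s \le n'd'W'$ and to be winning for Player~$0$ from~$(v_j,+)$ in~$\game_{v_j}$. Extending~$\mu'$ (if necessary) to an infinite play~$\hat\rho'$ from~$(v_j,+)$ consistent with~$\sigma_{v_j}$, and using that~$(v_j,+)$ is winning for~$\sigma_{v_j}$, Lemma~\ref{lem:boundeddecreaseinenergyparity} gives~$\weight_{v_j}(\pi') > -(n's-1)W'-1$ for every infix~$\pi'$ of~$\hat\rho'$; since~$(n's-1)W'+1 \le n'sW' \le d'{(n'W')}^2$, this reads~$\weight_{v_j}(\pi') \ge -d'{(n'W')}^2$.

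It remains to pick the right infix~$\pi'$. Recall from the construction of~$h$ inside an \esi that the polarity~$p_\ell$ attached to position~$\ell$ of~$\mu$ in~$\rho'$ is exactly~$\sign(\weight(v_j\cdots v_\ell))$, and that a positive charge (a polarity switch) is inserted at the step~$\ell\to\ell+1$ only if~$\weight(v_j\cdots v_\ell)$ and~$\weight(v_j\cdots v_{\ell+1})$ have opposite signs. Fix~$j'$ with~$j\le j'<j+\card{\mu}$ and assume first~$\weight(v_j\cdots v_{j'})\ge 0$; let~$a$ be the least position with~$\weight(v_j\cdots v_\ell)\ge 0$ for all~$a\le\ell\le j'$. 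If~$a=j$ then~$\weight(v_j\cdots v_a)=0$, and otherwise~$\weight(v_j\cdots v_{a-1})<0$, so that~$0\le\weight(v_j\cdots v_a)<W'$ because a single edge changes the accumulated weight by at most~$W\le W'$. Along the infix~$\pi'$ of~$\mu'$ running from~$(v_a,+)$ to~$(v_{j'},+)$ the polarity remains~$+$, so no charging self-loop is traversed and every simulated edge carries the negated~$\game$-weight; hence~$\weight_{v_j}(\pi') = -\bigl(\weight(v_j\cdots v_{j'})-\weight(v_j\cdots v_a)\bigr)$. Plugging in the bound from the previous paragraph yields~$\weight(v_j\cdots v_{j'}) = \weight(v_j\cdots v_a) - \weight_{v_j}(\pi') \le W' + d'{(n'W')}^2$, and a slightly more careful count (using integrality and~$s\le n'd'W'$) sharpens this to~$\le d'{(n'W')}^2$. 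The case~$\weight(v_j\cdots v_{j'})<0$ is symmetric: then~$p_{j'}=-$, the analogous constant-polarity infix uses the non-negated weights, so~$\weight_{v_j}(\pi')=\weight(v_j\cdots v_{j'})-\weight(v_j\cdots v_a)$ with~$-W'\le\weight(v_j\cdots v_a)\le 0$, and Lemma~\ref{lem:boundeddecreaseinenergyparity} again bounds~$-\weight(v_j\cdots v_{j'})$ by~$d'{(n'W')}^2$.

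I expect the main obstacle to be the bookkeeping around polarities and charges: one must check that within a maximal constant-polarity infix of~$\rho'$ no charging self-loop is taken, so that~$\weight_{v_j}$ agrees with~$\pm\weight$ there; that the accumulated weight~$\weight(v_j\cdots v_a)$ at the start~$a$ of such an infix lies within~$W'$ of zero, which uses that polarity switches occur exactly at sign changes of the accumulated weight; and that Lemma~\ref{lem:boundeddecreaseinenergyparity} is genuinely applicable, which needs~$(v_j,+)$ to be a vertex from which~$\sigma_{v_j}$ is winning and~$\pi'$ to be an infix of a play consistent with it. Choosing~$a$ as the start of the current constant-sign run is precisely what makes~$\weight(v_j\cdots v_a)$ negligible and converts the energy bound of Lemma~\ref{lem:boundeddecreaseinenergyparity} into a bound on the amplitude within the \esi.
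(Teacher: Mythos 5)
Your proof is correct and follows essentially the same route as the paper's: identify the last position $a$ (the paper's $j''$) after which the accumulated weight from the start of the \esi keeps a constant sign, observe that the polarity is constant on that stretch so that $\weight_{v_j}$ of the simulating infix equals the negated $\game$-weight, and then invoke Lemma~\ref{lem:boundeddecreaseinenergyparity} for the finite-state winning strategy $\sigma_{v_j}$ of size $s \le n'd'W'$. The only difference is presentational (direct bound versus the paper's proof by contradiction), and the deferred ``more careful count'' indeed closes: $\weight(v_j\cdots v_{j'}) \le (W'-1) + (n's-1)W' \le d'{(n'W')}^2 - 1$.
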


\begin{proof}
Since~$\mu$ is an \esi, Player~$0$ mimicks the choices of the strategy~$\sigma_{v_j}$, which is a winning strategy for him in the induced energy parity game~$\game_{v_j}$.
Recall that~$\sigma_{v_j}$ is of size at most~$n'd'W'$.
For the sake of readability, we define~$s = n'd'W'$.

Now, towards a contradiction, let~$j'$ be a position with~$j \leq j' < j + \card{\mu}$, such that we have $\card{\weight(v_j \cdots v_{j'})} > d'{(n'W')}^2 = n'sW'$.
We define~$\pi = v_j \cdots v_{j'}$ and assume $\weight(\pi) > n'sW'$, i.e., that the infix~$\pi$ violates the claimed bound from above.
The other case is dual.
Let~$j''$ be the minimal position in~$\mu$ such that the weight accrued by the \esi is strictly positive starting at~$j''$ until it reaches~$j'$.
Formally, we define~$j''$ as the minimal position that satisfies~$\weight(v_j \cdots v_k) > 0$ for all~$k \in \set{j'', \dots, j}$.

Due to the definition of~$W$ we obtain~$\weight(v_j \cdots v_{j''}) \leq W$.
Since we furthermore have~$\weight(\pi) = \weight(v_{j} \cdots v_{j'}) > n'sW'$ by assumption, we obtain
\[
	\weight(v_{j''} \cdots v_{j'})
	= \weight(v_{j} \cdots v_{j'}) - \weight(v_{j} \cdots v_{j''})
	> n'sW' - W'
	\geq n'sW' - W' + 1 \enspace . \]

Now consider the unique infix~$\pi' = (v_{j}, p_{j}) \cdots (v_{j'}, p_{j'})$ of~$\rho'$ that corresponds to the infix~$\pi$ of~$\rho$.
Since~$\pi$ is an infix of a single \esi,~$\pi'$ is a play prefix in the energy parity game~$\game_{v_j}$ that is consistent with the strategy~$\sigma_{v_j}$, which is winning for Player~$0$ from~$(v_j, p_j)$.
Furthermore, since we have~$\weight(v_j \cdots v_k) > 0$ for all~$k \in \set{j'', \dots, j'}$, the polarity of~$(v_{j''}, p_{j''}) \cdots (v_{j'}, p_{j'})$ is constant and positive.
Hence, by construction of~$\rho'$, we obtain
\[  \weight_{v_{j}}((v_{j''}, p_{j''}) \cdots (v_{j'}, p_{j'})) = -\weight(v_{j''} \cdots v_{j'}) \enspace. \]

Due to the above inequality, this implies
\[ \weight_{v_{j}}((v_{j''}, p_{j''}) \cdots (v_{j'}, p_{j'})) \leq -n'sW' + W' - 1 = -(n's -1)W - 1 \enspace. \]
As argued above, the infix~$(v_{j''}, p_{j''}) \cdots (v_{j'}, p_{j'})$ is an infix of a play in the energy parity game~$\game_{v_j}$ that starts in~$(v_j, p_j)$ and is consistent with the strategy~$\sigma_{v_j}$ for Player~$0$.
Furthermore,~$\sigma_{v_j}$ is winning for Player~$0$ from~$(v_j, p_j)$ and is of size~$s$.
This yields the desired contradiction to Lemma~\ref{lem:boundeddecreaseinenergyparity} on Page~\pageref{lem:boundeddecreaseinenergyparity}.
\end{proof}

\begin{figure}
\centering
\begin{tikzpicture}
	\node[anchor=east] at (-.1,3.75) {$\weight$};
	\node[anchor=east] at (-.1,0) {$0$};
	\node[anchor=north west] at (11.5,0) {$\rho$};
	\path[draw,stealth-stealth] (-.1,-2.5) -- (-.1,4);
	\path[draw] (-.1,0) -- (8.5,0);
	\path[draw,dashed,-stealth] (8.5,0) -- (11.5,0);
	
	\draw[draw,-] (0,0)  .. controls (0.5, 0) and (1.5,-.5) ..
	              (2,-.5) .. controls (2.5,-.5) and (2.5, .5) ..
	              (3, .5) .. controls (3.5, .5) and (4,1.5) ..
	              (4.5,1.5) .. controls (5,1.5) and (5.5,1) ..
	              (6,1) .. controls (6.5,1) and (6.5,2) ..
	              (7,2) .. controls (7.5,2) and (8, 1.75) .. (8.5, 1.75);
	\draw[dashed] (8.5, 1.75) .. controls (9, 1.75) and (10,2.5) .. (11,2.5);
	              
	\newcommand{\xtick}[1]{\draw (#1,-.25) -- (#1,.25);}
	\newcommand{\costwidth}{1.5}
	\newcommand{\costbounds}[3]{
		\coordinate (west) at #1;
		\coordinate (east) at #2;
		\draw[stealth-stealth] ($(west) + (.5,0)$) -- node[anchor=east] {$\Delta$} ($(west) + (.5, \costwidth)$);
		\draw[stealth-stealth] ($(west) + (.5,0)$) -- node[anchor=east] {$\Delta$} ($(west) + (.5, -\costwidth)$);
		\draw ($(west) + (0,\costwidth)$) -- ($(east) + (0, \costwidth)$);
		\draw ($(west) + (0,-\costwidth)$) -- ($(east) + (0, -\costwidth)$);

		\coordinate (westorigin) at (west |- {{(0,0)}});
		\coordinate (eastorigin) at (east |- {{(0,0)}});
		\draw ($(westorigin) + (0,.1)$) -- ($(westorigin) - (0,.1)$);
		\draw ($(eastorigin) + (0,.1)$) -- ($(eastorigin) - (0,.1)$);
		
		\draw[decorate,decoration={brace,amplitude=5pt,mirror}]
		($(westorigin) - (0,2.5)$) -- node[anchor=north,transform canvas={yshift={-3pt}}] {$\mu_{#3}$} ($(eastorigin) - (0,2.5)$);
		
		\draw[dotted] (west) -- (east);
	}
	
	\costbounds{(0,0)}{(2,0)}{j}
	\costbounds{(2,-.5)}{(3,-.5)}{j+1}
	\costbounds{(3,.5)}{(6,.5)}{j+2}
	\costbounds{(6,1)}{(7,1)}{j+3}
	
	\draw[stealth-stealth] ($(7,2) + (.5,0)$) -- node[anchor=east] {$\Delta$} ($(7,2) + (.5, \costwidth)$);
	\draw[stealth-stealth] ($(7,2) + (.5,0)$) -- node[anchor=east] {$\Delta$} ($(7,2) + (.5, -\costwidth)$);
	\draw[dotted] (7,2) -- (8.5,2);
	\draw[dashed] (8.5,2) -- (11,2);
	\draw (7,3.5) -- (8.5,3.5);
	\draw[dashed] (8.5,3.5) -- (11,3.5);
	\draw (7,.5) -- (8.5,.5);
	\draw[dashed] (8.5,.5) -- (11,.5);
	
	\begin{scope}
		\clip (7, -2.4) rectangle (8.5, -3);
		\draw[decorate,decoration={brace,amplitude=5pt,mirror}]
		(7, -2.5) -- node[anchor=north,transform canvas={yshift={-3pt}}] {$\mu_{j+4}$} (9, -2.5);
	\end{scope}
	\begin{scope}
		\clip (8.5, -2.4) rectangle (11, -3);
		\draw[decorate,decoration={brace,amplitude=5pt,mirror},dashed]
		(2, -2.5) -- (11.5, -2.5);
	\end{scope}

\end{tikzpicture}

\caption{Bounds on the cost of a request over time given by Lemma~\ref{lem:epra:esi-cost-bound}. We write~$\Delta = d'{(n'W')}^2$.}%
\label{fig:epra:cor-bound}
\end{figure}

Due to Lemma~\ref{lem:epra:esi-cost-bound}, each \esi strictly in-between a request and its response contributes at most~$d'{(n'W')}^2$ to the cost incurred by the request.
Similarly, the \esi containing the request and its response also contribute at most~$d'{(n'W')}^2$ each to the cost of answering the given request.
Hence, via Remark~\ref{rem:epra:esi-bound}, we obtain that each (answered or unanswered) request in~$\rho$ incurs a cost of at most~${(d'n'W')}^2$.
We illustrate this argument in Figure~\ref{fig:epra:cor-bound}.
Hence,~$\sigma$ is a winning strategy for Player~$0$ from~$v^*$ in~$\game$, as each play that starts in~$v^*$ and is consistent with~$\sigma$ satisfies the parity condition due to Lemma~\ref{lem:epra:winning-strategy} and because no such play contains a request that is unanswered with infinite cost, which concludes the proof of Item~\ref{lem:bounded-cost-parity:reduction:player0} of Lemma~\ref{lem:bounded-cost-parity:reduction}.

Before we conclude this section, we formalize the above observation about the winning strategy for Player~$0$ uniformly bounding the costs of requests in the following corollary.
To this end, we use the upper bounds $n' \le 2n + 4n^2$, $d' \le d+2$, and $W' \le W+1$.

\begin{cor}%
	\label{cor:bounded-parity-with-weights:uniform-bound}
	Let~$\game$ be a bounded parity game with weights with~$n$ vertices,~$d$ colors, and largest absolute weight~$W$.
	There exists a strategy~$\sigma$ for Player~$0$ that is winning from~$\winreg_0(\game)$, such that in each play~$\rho$ consistent with~$\sigma$, each request is answered or unanswered with cost at most~${((d+2)(2n + 4n^2)(W +1))}^2$.
\end{cor}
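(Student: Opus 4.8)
The plan is to derive the corollary from the fixed-point algorithm of Section~\ref{sec:bounded-parity-weights-solve} together with the cost estimate already contained in the proof of Item~\ref{lem:bounded-cost-parity:reduction:player0} of Lemma~\ref{lem:bounded-cost-parity:reduction}. First, I would run Algorithm~\ref{algorithm_fixpoint2} on $\game$ and let $\arena^* = \arena_{k^*-1}$ be the arena reached in the terminating iteration $k^*$, i.e.\ the one with $X_{k^*} = \emptyset$. Since every iteration removes a $1$-attractor and the complement of a $1$-attractor is a trap for Player~$1$ (Remark~\ref{remark:traps}, Items~\ref{remark:traps:attractorcomplement} and~\ref{remark:traps:trapattractors}), the vertex set of $\arena^*$ is a trap for Player~$1$ in $\arena$; by the correctness of Algorithm~\ref{algorithm_fixpoint2} (proved analogously to Lemma~\ref{lem:algorithm-parity-weights-correctness}) this vertex set equals $V \setminus W_1^{k^*} = \winreg_0(\game)$.

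Second, because $X_{k^*} = \emptyset$, Player~$0$ wins every induced energy parity game $((\arena^*)_{v^*}, \col_{v^*}, \weight_{v^*})$ from $(v^*, +)$. I then apply the construction from the proof of Item~\ref{lem:bounded-cost-parity:reduction:player0} of Lemma~\ref{lem:bounded-cost-parity:reduction}, but to the bounded parity game with weights on $\arena^*$ instead of on $\arena$: it produces a strategy $\sigma$ for Player~$0$ that is winning from every vertex of $\arena^*$. As $\winreg_0(\game)$ is a trap for Player~$1$ in $\arena$, the moves prescribed by $\sigma$ are legal in $\arena$ and every play of $\game$ that starts in $\winreg_0(\game)$ and is consistent with $\sigma$ never leaves $\arena^*$; hence $\sigma$ is winning for Player~$0$ from all of $\winreg_0(\game)$ in $\game$.

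Third, I extract the quantitative bound. The proof of Item~\ref{lem:bounded-cost-parity:reduction:player0} shows, via Lemma~\ref{lem:epra:esi-cost-bound} (each \esi contributes at most $d'(n'W')^2$ to the cost of a request) and Remark~\ref{rem:epra:esi-bound} (a request spans fewer than $d \le d'$ many \esis), that along any $\sigma$-consistent play every request is answered or left unanswered with cost at most $(n'd'W')^2$, where $n'$, $d'$, $W'$ are the vertex count, colour count, and largest absolute weight of the energy parity games built from $\arena^*$. Since $\arena^*$ is a sub-arena of $\arena$ it has at most $n$ vertices and hence at most $n^2$ edges, so a count of the polarized arena gives $n' \le 2n + 4n^2$, $d' \le d + 2$, and $W' \le W + 1$. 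Substituting these into $(n'd'W')^2$ yields the bound $\bigl((d+2)(2n + 4n^2)(W+1)\bigr)^2$ claimed in the corollary.

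The only delicate point is the transfer step of the second paragraph: one must check that a strategy designed for the sub-arena $\arena^*$ really is a winning strategy in the full game $\game$ from $\winreg_0(\game)$, and that the cost estimates proved over $\arena^*$ apply verbatim. Both follow once one observes that $\winreg_0(\game)$ is a trap for Player~$1$ in $\arena$, so no play consistent with $\sigma$ can ever escape $\arena^*$; everything else is a routine unwinding of the already-established bounds.
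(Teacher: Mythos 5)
Your proof is correct and takes essentially the same route as the paper: the corollary is obtained by reading off the bound~${(n'd'W')}^2$ established at the end of the proof of Item~\ref{lem:bounded-cost-parity:reduction:player0} of Lemma~\ref{lem:bounded-cost-parity:reduction} (via Lemma~\ref{lem:epra:esi-cost-bound} and Remark~\ref{rem:epra:esi-bound}) and substituting~$n' \le 2n+4n^2$, $d' \le d+2$, and~$W' \le W+1$. You additionally spell out the restriction to the terminal arena of Algorithm~\ref{algorithm_fixpoint2} and the trap argument needed when~$\winreg_0(\game) \neq V$ — a step the paper leaves implicit — but this is a clarification of the same argument rather than a different approach.
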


Using arguments from Section~\ref{sec:quality}, this bound can be improved to ${((d+2)(6n)(W +1))}^2$.
However, as we only use Corollary~\ref{cor:bounded-parity-with-weights:uniform-bound} later to obtain some upper bound on the quality of such strategies, we refrain from repeating these arguments here.


\section{Memory Requirements}%
\label{sec:memory}
We now discuss the upper and lower bounds on the memory required to implement winning strategies for either player.
Recall that we use binary encoding to denote weights, i.e., weights may be exponential in the size of the game.
In this section we show pseudo-polynomial bounds on the necessary and sufficient memory for Player~$0$ to win parity games with weights.
In contrast, Player~$1$ requires infinite memory.

\begin{thm}%
\label{thm:memory}
Let~$\game$ be a parity game with weights with~$n$ vertices,~$d$ colors, and largest absolute weight~$W$ assigned to any edge in~$\game$.
\begin{enumerate}
	\item Player~$0$ has a winning strategy~$\sigma$ from~$\winreg_0(\game)$ with~$\card{\sigma} \in \bigo (nd^2W)$. This bound is tight.
	\item There exists a parity game with weights~$\game$, such that Player~$1$ has a winning strategy from each vertex~$v$ in~$\game$, but she has no finite-state winning strategy from any~$v$ in~$\game$.
\end{enumerate}
\end{thm}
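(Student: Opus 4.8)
The plan is to prove each item separately, using the algorithmic reductions established in Sections~\ref{sec:parity-weights-solve} and~\ref{sec:bounded-parity-weights-solve} together with the bounds from Chatterjee and Doyen (Proposition~\ref{prop:energyparity:equivalence}).

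\emph{Upper bound for Item~1.} First I would trace the memory through the two-level fixed-point construction. Recall from the discussion following Lemma~\ref{lem:algorithm-parity-weights-correctness} that a winning strategy for Player~$0$ in the parity game with weights can be implemented with memory $\max_{k} s_k$, where $s_k$ is the memory needed to win the bounded parity game with weights solved in iteration~$k$ of Algorithm~\ref{algorithm_fixpoint}; the disjointness of the winning regions $X_k$ lets the individual implementations reuse the same memory states. Analogously, the strategy~$\sigma$ for a bounded parity game with weights constructed in Section~\ref{subsec:bounded-parity-weights-solve:proof-player0} needs: (i) a counter for which \esi is currently active, equivalently which request $v_{j'}$ is most relevant, which is determined by the (at most~$d$) colors and the vertex $v_{j'}$ (so $O(nd)$ states, since the characteristic colors strictly increase within a single request but we only ever track the current one); plus (ii) the memory of the finite-state winning strategy $\sigma_{v_{j'}}$ in the energy parity game $\game_{v_{j'}}$, which by Proposition~\ref{prop:energyparity:equivalence} has at most $n'd'W'$ states with $n' \le 2n+4n^2 \in O(n^2)$, $d' \le d+2 \in O(d)$, $W' \le W+1 \in O(W)$, giving $O(n^2 d W)$. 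Naively multiplying these yields $O(n^3 d^2 W)$, which is too large; the refinement to reach $O(nd^2W)$ requires the observation that the energy-parity memory can be taken \emph{positional on the copies of the original arena} — the only ``real'' branching in $\arena_{v^*}$ happens at vertices $(v,p)$, and the strategy $\sigma_{v^*}$ can be chosen so that its memory is essentially a positional map on $V\times P$ together with the sign/polarity bookkeeping, reducing the factor $n'$ back down to $n$. Combined with the $O(d)$ factor for the active request's color and a further $O(d)$ factor from the nesting of the two fixed-point algorithms, this gives $O(nd^2W)$. I would cite the forthcoming Section~\ref{sec:quality} for the sharper accounting (as the paper already signals after Corollary~\ref{cor:bounded-parity-with-weights:uniform-bound}) rather than reprove it here.

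\emph{Tightness of Item~1.} For the lower bound I would exhibit a family of parity games with weights — in fact it suffices to take a parity game with costs, since those are a special case — forcing $\Omega(nd^2W)$ memory. The finitary-parity lower bound of Chatterjee, Henzinger, and Horn~\cite{ChatterjeeHenzingerHorn09} already gives a $\Theta(d)$ (indeed $\Theta(nd)$-style) family; to additionally force a factor of $W$ and a second factor of $d$ one chains together gadgets with exponentially large weights on which Player~$0$ must remember an accumulated weight value modulo something of size $\Theta(W)$ in order to decide when to answer each of $\Theta(d)$ pending requests of distinct odd colors. The key is that the bound~$b$ Player~$0$ may use is play-dependent but the amplitude he must respect along each answered request is genuinely $\Theta(W)$, and distinguishing the required responses for the nested colors costs an extra $\Theta(d)$ on top of the parity colors themselves. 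I expect this construction to be the main obstacle: getting a single family that simultaneously witnesses all three factors, rather than three separate families, needs a careful product-style gadget and an inductive argument that any strategy with fewer states admits a pumped consistent play violating the winning condition.

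\emph{Item~2.} This is the easiest part and I would handle it by reduction to the known fact for finitary parity games. It is shown in~\cite{ChatterjeeHenzingerHorn09} that there is a finitary parity game won by Player~$1$ from every vertex on which she has no finite-state winning strategy: the canonical example is a single vertex of Player~$1$ with an odd-colored self-loop reachable from a ``waiting'' region, where she must stay an ever-increasing number of steps before finally letting the (odd) request be seen again, so that the $\limsup$ of the cost-of-response is infinite; any finite-state strategy would stay a bounded number of steps and lose. Since a finitary parity game is literally a parity game with weights (all weights positive, say all equal to~$1$), and Remark on types / the definition of $\cp(\col,\weight)$ coincide with the finitary condition in that case, the same game and the same argument work verbatim. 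I would simply instantiate that example, verify that unbounding the step count unbounds the amplitude (immediate, since all weights are~$1$), and note that a counter over $\nats$ — hence infinite memory — is both necessary and sufficient for Player~$1$.
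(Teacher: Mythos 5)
Your Item~2 is exactly the paper's argument (inherit the infinite-memory lower bound from finitary parity games, which are the special case of all-positive weights), and your outer observation that the fixed-point algorithm only costs $\max_k s_k$ is also the paper's. The problems are in the two remaining pieces. For the upper bound, the step that actually produces the factor $d$ in front (rather than $n$, or your $\bigo(nd)$ for tracking the most relevant request) is missing: the paper observes that the construction of~$\game_{v}$ depends only on~$\col(v)$, so~$\game_{v} \cong \game_{v'}$ whenever~$\col(v) = \col(v')$, and then invokes the fact that Player~$0$ has a \emph{uniform} winning strategy of size~$n'd'W'$ in an energy parity game; hence only~$d$ constituent strategies need to be stitched together, each of size~$(6n)(d+2)(W+1)$. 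Your substitute --- ``a further $\bigo(d)$ factor from the nesting of the two fixed-point algorithms'' --- is not a valid source of a factor, since both fixed-point algorithms contribute only a maximum, as you yourself note at the outset; without the isomorphism observation your accounting lands at~$\bigo(n^2d^2W)$ or worse. Relatedly, the reduction of~$n'$ from~$\bigo(n^2)$ to~$\bigo(n)$ is achieved in the paper by redefining the auxiliary vertices to store only the \emph{target} of an edge (yielding a~$6n$-vertex arena), not by a positionality claim: Player~$0$ genuinely needs~$\Theta(ndW)$ memory in energy parity games, so ``the memory can be taken positional on the copies of~$V \times P$'' cannot be taken at face value and does not by itself shrink the Chatterjee--Doyen bound.

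For tightness, your starting premise is false: in finitary parity games (and parity games with costs) Player~$0$ has \emph{positional} winning strategies, so there is no~$\Theta(d)$ memory lower bound for him to import from~\cite{ChatterjeeHenzingerHorn09}; memory lower bounds for Player~$0$ arise only once negative weights are present. The paper's witness is a single concrete family: a cycle of~$n$ weight-$W$ edges carrying an unanswered odd request, feeding a Player-$0$ delay vertex with a weight~$-1$ self-loop, where Player~$0$ must count exactly~$nW$ loop traversals --- enough to cancel the accumulated positive weight but not so many that Player~$1$ can pump the loop to violate the parity condition or unbound the amplitude. This establishes only~$\Omega(nW)$ with constantly many colors, and that is all the theorem's ``tight'' is claiming (pseudo-polynomial memory is necessary and sufficient); the matching~$\Omega(nd^2W)$ family you set out to build is not in the paper, and your sketch of it (chaining gadgets, ``remembering a weight value modulo something of size $\Theta(W)$'') does not yet contain a construction or a pumping argument.
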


\noindent
The proof of the second item of Theorem~\ref{thm:memory} is straightforward, since Player~$1$ already requires infinite memory to implement winning strategies in finitary parity games~\cite{ChatterjeeHenzingerHorn09}.
Since parity games with weights subsume finitary parity games, this result carries over to our setting.
We show the game witnessing this lower bound on the right-hand side of Figure~\ref{fig:energyparity:difference} on Page~\pageref{fig:energyparity:difference}.

In contrast, pseudo-polynomial memory is sufficient, but also necessary, for Player~$0$.
To show this claim, we first prove that the winning strategy for him in a bounded parity game with weights constructed in the proof of Item~\ref{lem:bounded-cost-parity:reduction:player0} of Lemma~\ref{lem:bounded-cost-parity:reduction} suffers at most a linear blowup in comparison to his winning strategies in the underlying energy parity games.
This is sufficient as we have argued in Section~\ref{sec:parity-weights-solve} that the construction of a winning strategy for Player~$0$ in a parity game with weights suffers no blowup in comparison to the underlying bounded parity games with weights.

\begin{lem}%
\label{lem:memory:player-0}
Let~$\game$ be a bounded parity game with weights and let~$n$,~$d$, and~$W$ be defined analogously to Theorem~\ref{thm:memory}.
Player~$0$ has a finite-state winning strategy of size at most~$d(6n)(d+2)(W+1)$ from~$\winreg_0(\game)$.
\end{lem}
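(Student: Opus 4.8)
The plan is to re-examine the winning strategy $\sigma$ for Player~$0$ constructed in the proof of Item~\ref{lem:bounded-cost-parity:reduction:player0} of Lemma~\ref{lem:bounded-cost-parity:reduction} and to bound the size of a memory structure implementing it. Recall that $\sigma$ is assembled by ``stitching together'' finite-state winning strategies $\sigma_v$ for the energy parity games $\game_v$ from $(v,+)$: along a play prefix $\pi$, Player~$0$ inspects the position $j' = \mrr(\pi)$ of the most relevant open request, maintains the simulated play $h(\pi)$ in $\game_{v_{j'}}$, and mimics there the move prescribed by $\sigma_{v_{j'}}$, behaving arbitrarily whenever no request is open. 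Hence a memory structure for $\sigma$ has to keep track of (i) which energy parity game is currently being simulated, (ii) the current memory state of the corresponding strategy, and (iii) enough of the simulated play --- in particular, the polarity $p_j$ and the sign test from the definition of $h$ that governs polarity switches --- to carry the simulation forward.

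First I would observe that $\game_{v^*}$ depends on $v^*$ only through its color: the polarized arena $\arena_{v^*}$, the coloring $\col_{v^*}$, and the weighting $\weight_{v^*}$ all refer to $v^*$ exclusively via the set $\answer{\col(v^*)}$. Consequently there are at most $d$ pairwise distinct games among the $\game_{v^*}$, and for each color $c$ one may fix a single strategy $\sigma_c$ that is winning for Player~$0$ from every vertex $(v,+)$ lying in his winning region of the corresponding game; by Proposition~\ref{prop:energyparity:equivalence} and the fact that energy parity games admit uniform finite-state winning strategies, $\sigma_c$ can be implemented by one memory structure $\mem_c$ of size at most $n'd'W'$. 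It then suffices for a memory structure of $\sigma$ to store the color of the most relevant open request (or a marker for ``none''), of which there are at most $d$, together with the memory state of the active $\mem_c$; the polarity and the sign bookkeeping of item~(iii) are determined by, and hence stored within, the energy level that $\mem_c$ already maintains, so they cost no extra memory. Exactly as in the observation following Lemma~\ref{lem:algorithm-parity-weights-correctness}, the fact that an answered request is never reopened and that the characteristic colors of consecutive equi-strategic infixes strictly increase between discontinuities lets the $\sigma_c$ reuse a common state set, so the total size is at most $d \cdot n'd'W'$ rather than $n \cdot n'd'W'$. Substituting the parameters of $\game_c$ --- $d' = \card{\col_c(V')} \le d+2$ and $W' = \max(\weight(E) \cup \set{1}) \le W+1$ directly from the construction, and $n' \le 6n$ from the sharpening of the polarized-arena size carried out in Section~\ref{sec:quality} (in place of the naive $n' \le 2n + 4n^2$) --- yields a memory structure for $\sigma$ of size at most $d(6n)(d+2)(W+1)$.

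I expect the main obstacle to be the second step: justifying both that a single bounded-size memory structure can serve as a winning strategy simultaneously from all relevant starting vertices of a fixed $\game_c$, and --- more delicately --- that maintaining the polarity of the simulated play and deciding when $h$ switches polarity genuinely requires no memory beyond that of $\mem_c$. The latter should follow because the sign of the weight accumulated since the most relevant request was opened is precisely what the energy level in $\game_c$ records, up to the recharging self-loops inserted by $h$; but making this correspondence exact, together with establishing the improved bound $n' \le 6n$ of Section~\ref{sec:quality}, is the technical core of the argument. Everything else is the same kind of routine bookkeeping already carried out for Algorithm~\ref{algorithm_fixpoint} in Section~\ref{sec:parity-weights-solve}.
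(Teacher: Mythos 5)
Your proposal follows the paper's own proof essentially verbatim: it reuses the stitched-together strategy from the proof of Item~\ref{lem:bounded-cost-parity:reduction:player0} of Lemma~\ref{lem:bounded-cost-parity:reduction}, exploits that $\game_{v^*}$ depends on $v^*$ only through $\col(v^*)$ together with the uniformity of the Chatterjee--Doyen winning strategies to replace the factor $n$ by $d$, and substitutes the sharpened arena size $6n$ (obtained by storing only the target vertex rather than the whole edge in the auxiliary vertices) to arrive at $d(6n)(d+2)(W+1)$. The polarity/sign bookkeeping you flag as the delicate point is in fact passed over silently by the paper's proof as well, so your treatment is, if anything, more explicit about where care is still needed.
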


\begin{proof}
Let~$V$ and~$E$ be the vertex and edge sets of~$\game$ and recall that we have defined~$P = \set{+,-}$.
In the proof of Item~\ref{lem:bounded-cost-parity:reduction:player0} of Lemma~\ref{lem:bounded-cost-parity:reduction}, we have constructed an energy parity game~$\game_v$ with vertices~$(V \times P) \cup (E \times P \times \set{0,1})$ for each vertex~$v$ of~$\game$.
We have then constructed a winning strategy~$\sigma$ for Player~$0$ for~$\game$ out of winning strategies for him in the~$\game_v$.
As it is straightforward to implement~$\sigma$ via the disjoint union of memory structures implementing the constituent strategies, this approach yields an upper bound of~$n(2n + 4n^2)(d+2)(W+1)$ on the size of~$\sigma$ due to the upper bound on the size of winning strategies for Player~$0$ in energy parity games from Proposition~\ref{prop:energyparity:equivalence}.

In the construction of the~$\game_v$, however, we only store the edges chosen by the players in the vertices of the form~$E \times P \times \set{0,1}$ for didactic purposes.
In fact, it suffices to store the target vertex of an edge instead, resulting in a vertex set of size~$6n$ of the~$\game_v$.
Moreover, recall that the definition of the~$\game_v$ only takes the color of~$v$ into account: If the vertices~$v$ and~$v'$ have the same color, then the games~$\game_{v}$ and~$\game_{v'}$ are isomorphic.
Further, Chatterjee and Doyen have shown that, if Player~$0$ wins an energy parity game~$\game'$ with~$n'$ vertices,~$d'$ colors, and largest absolute weight~$W'$, then he has a uniform strategy of size~$n'd'W'$ that is winning from all vertices from which he wins~$\game'$~\cite{ChatterjeeDoyen12}.
Hence, it suffices to combine at most~$d$ strategies, each of size~$(6n)(d+2)(W+1)$, in order to obtain a winning strategy for Player~$0$ in~$\game$.
\end{proof}

Having established an upper bound on the memory required by Player~$0$, we now proceed to show that this bound is tight.

\begin{lem}%
\label{lem:pw-memory-lb}
Let~$n,W \in \nats$.
There exists a parity game with weights~$\game_{n,W}$ with~$\bigo(n)$ vertices and largest absolute weight~$W$ such that Player~$0$ wins~$\game_{n,W}$ from every vertex, but each winning strategy for him is of size at least~$nW + 1$.
\end{lem}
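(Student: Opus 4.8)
The goal is to exhibit, for each $n, W \in \nats$, a parity game with weights $\game_{n,W}$ on $\bigo(n)$ vertices with largest absolute weight $W$ that Player~$0$ wins everywhere, but for which every winning strategy for Player~$0$ needs at least $nW+1$ memory states. The natural approach is to build a small gadget in which Player~$1$ can open a request and then, by her own moves, bring the accumulated weight of the corresponding infix to \emph{any} value in a range of size roughly $nW$ before Player~$0$ is allowed to answer it. Player~$0$ must then respond in a way that brings the amplitude back down below the winning bound $b$, and since $b$ will itself be pinned close to the minimal achievable value, the response he is forced to play will depend on which of the $\Theta(nW)$ weight levels Player~$1$ steered the play to. If two distinct weight levels were treated identically by a finite-state strategy, one could pump Player~$1$'s behaviour to blow the amplitude past the bound, contradicting that the strategy is winning.

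Concretely, I would take a cycle (or a short chain of $\bigo(n)$ Player~$1$ vertices) carrying edge weight $+1$ — or, to reach magnitude $nW$ with only $\bigo(n)$ vertices, a chain of $\bigo(n)$ vertices each carrying weight $W$ — so that from the vertex where an odd-coloured request is opened Player~$1$ can accumulate any weight in $\{0, 1, \dots, nW\}$ (and symmetrically down to $-nW$ if we want both directions, though one direction suffices). Player~$1$ then hands control to Player~$0$ at a vertex of even colour, where Player~$0$ has edges returning the play toward the start while also adjusting the weight; the winning condition (with the threshold implicit: note this lemma is about \emph{solving}, so $b$ may be arbitrary, hence I will instead encode the bound into the arena itself using a safety-like trap, or appeal to the fact that Player~$0$ wins iff \emph{some} $b$ works — here I want the subtler version where although any $b$ is allowed, the play structure forces the response to depend on the level). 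The cleanest way to force memory is: after Player~$1$ picks a level $\ell$, she returns to a Player~$0$ vertex from which Player~$0$ must pick one of several edges; picking the "wrong" one relative to $\ell$ leads to a sub-arena in which Player~$1$ can make the amplitude of the request unbounded (e.g. a $+1/-1$ free loop as in Figure~\ref{fig:energyparity:difference}), so it is lost; picking the "right" one answers the request with bounded cost. Since there are $nW+1$ distinct levels requiring pairwise distinct correct responses, any finite-state winning strategy must have at least $nW+1$ reachable memory states after reading the level-setting prefixes.

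The formal argument then proceeds in three steps. First, describe $\game_{n,W}$ explicitly: a "choice" vertex for Player~$1$ where she decides on a target level via $\bigo(n)$ weight-$W$ steps, a pivot vertex of odd colour opening the request, the level-building chain, an even-coloured vertex returning control to Player~$0$, and a fan-out of $nW+1$ Player~$0$-controlled responses, exactly one of which (the one matching the current accumulated weight) leads back safely while the others drop into a losing sub-game. Second, show Player~$0$ wins from every vertex: he simply tracks the accumulated weight since the last pivot (which ranges over a bounded set) and always plays the matching response, keeping the amplitude of every request bounded by a fixed constant depending on $n,W$ — this uses only that the condition $\cp(\col,\weight)$ is satisfied once all requests are answered with uniformly bounded cost. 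Third, the lower bound: suppose $\sigma$ is a finite-state winning strategy with fewer than $nW+1$ states; consider the $nW+1$ play prefixes $\pi_0,\dots,\pi_{nW}$ that set levels $0,\dots,nW$ respectively and arrive at the return-to-Player-$0$ vertex. By pigeonhole two of them, $\pi_i$ and $\pi_j$ with $i\neq j$, lead $\sigma$ into the same memory state, hence $\sigma$ prescribes the same response after both; but that response is correct for at most one of levels $i,j$, so for the other one the play enters the losing sub-game and Player~$1$ wins against $\sigma$ — contradiction.

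The main obstacle I anticipate is making the "wrong response is losing, right response is safe" mechanism work with a \emph{single fixed} winning condition (no external threshold $b$): because $\cp(\col,\weight)$ only asks for \emph{some} bound, a naive gadget would let Player~$0$ survive a wrong response as long as the resulting amplitude, though large, stays finite. The fix is to ensure that every wrong response routes the play into a sub-arena isomorphic to the right half of Figure~\ref{fig:energyparity:difference} — a two-vertex free $+1/-1$ loop under Player~$1$'s control nested inside an unanswered odd request — so that a wrong response genuinely yields \emph{infinite} cost, not merely a large finite one. Getting the colours right so that the "correct response" path has all requests answered with bounded cost while any detour has a permanently unanswered request of unbounded amplitude is the delicate part; everything else (the vertex count $\bigo(n)$, the weight bound $W$, the pigeonhole counting) is routine.
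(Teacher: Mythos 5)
Your high-level plan (force Player~$0$ to distinguish $nW+1$ weight levels, then pigeonhole on memory states) is the right intuition, but the gadget you propose cannot be realized, for two reasons that are exactly the points the paper's construction is designed to get around.

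First, the \emph{fan-out of $nW+1$ Player-$0$ responses} is impossible within the size budget. Weights are encoded in binary, so $W$ may be exponentially larger than $n$; an arena with $\bigo(n)$ vertices has at most $\bigo(n)$ distinct outgoing edges per vertex (edges are pairs of vertices, no multi-edges), so you cannot offer $nW+1$ pairwise distinct responses spatially. The paper realizes the $nW+1$ distinct behaviours \emph{temporally}: Player~$0$'s ``response'' is the number of times he traverses a single weight-$(-1)$ self-loop at one delay vertex~$v_\del$, and the $nW+1$ memory states are needed to count these traversals.

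Second, your ``wrong response routes into a losing sub-arena'' mechanism does not work, and your own fix does not repair it. Which edge Player~$0$ takes determines the continuation of the play, but the arena cannot branch on the \emph{accumulated weight}; so either every response leads to the same continuation (and Player~$0$ needs no memory), or the notion of ``correct for level $\ell$'' is weight-dependent — and the only way the condition $\cp(\col,\weight)$ penalizes a weight mismatch is through the amplitude of a request-to-answer infix. A single bounded residue per round is harmless, since $\cp$ only asks for \emph{some} finite bound. You notice this obstacle yourself, but routing into an a-priori infinite-cost sub-arena does not resolve it, because the routing is by edge identity, not by weight. The paper's actual mechanism is different: an outer odd request (colour~$3$) stays open across an inner cycle that Player~$1$ may repeat arbitrarily often before allowing the answer (colour~$4$); each inner iteration charges $+nW$ and lets Player~$0$ discharge via the self-loop. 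A strategy with fewer than $nW+1$ states either loops at $v_\del$ at least $nW+1$ times in some round — in which case a pumping argument yields a consistent play stuck at the odd-coloured $v_\del$ forever, violating parity — or it always under-discharges, leaving a nonnegative surplus per iteration that Player~$1$ (using her infinite-memory counter) accumulates over $j+1, j+2, \dots$ iterations, so the cost of the outer request diverges in the limit superior. This accumulation-over-unboundedly-many-rounds step is the missing idea in your proposal.
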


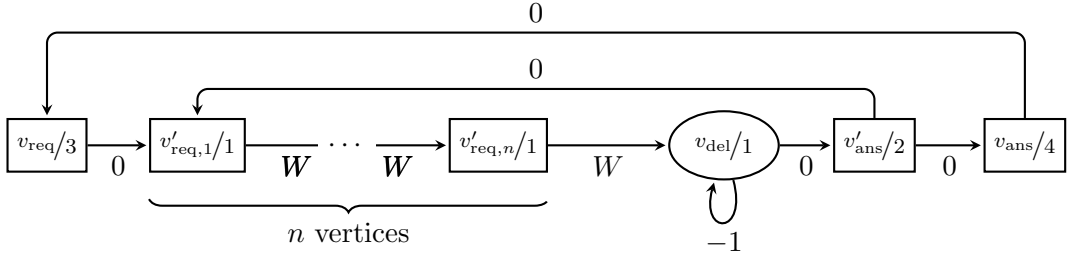
\begin{figure}
\centering
\begin{tikzpicture}[thick]
	\node[p1] (outer-req) at (0,0) {\parnode{v_\req}{3}};
	\node[p1] (inner-req-1) at (2,0) {\parnode{v'_{\req, 1}}{1}};
	\node[p1] (inner-req-n) at (6,0) {\parnode{v'_{\req, n}}{1}};
	\node[p0] (delay) at (9,0) {\parnode{v_\del}{1}};
	\node[p1] (inner-ans) at (11,0) {\parnode{v'_\ans}{2}};
	\node[p1] (outer-ans) at (13,0) {\parnode{v_\ans}{4}};
	
	\path
		(outer-req)
			edge[weight=0 at .5 anchor north] (inner-req-1)
		(inner-req-1)
			edge[
				weight=$W$ at .25 anchor north,
				elided,
				weight=$W$ at .75 anchor north]
			(inner-req-n)
		(inner-req-n)
			edge[weight=$W$ at .5 anchor north] (delay)
		(delay)
			edge[loop below,weight=$-1$ at .5 anchor north] (delay)
			edge[weight=$0$ at .5 anchor north] (inner-ans)
		(inner-ans)
			edge[weight=$0$ at .5 anchor north] (outer-ans);
		
	\path[thick,brace = {$n$ vertices} amplitude 5pt,transform canvas={yshift=-10pt}]
		(inner-req-1.south west) -- (inner-req-n.south east);
		
	\path[draw, thick,->,rounded corners,weight=0 at .5 anchor south]
		(inner-ans) |- +(-1,.75) -| (inner-req-1);
	\path[draw, thick,->,rounded corners,weight=0 at .5 anchor south]
		(outer-ans) |- +(-1,1.5) -| (outer-req);
\end{tikzpicture}
\caption{A game of size~$\bigo(n)$ in which Player~$0$ only wins with strategies of size at least~$nW + 1$.}%
\label{fig:memory:lower-bound}
\end{figure}

\begin{proof}
We show the game~$\game_{n,W}$ in Figure~\ref{fig:memory:lower-bound}.
This game has~$n + 4$ vertices and the largest absolute weight of an edge is~$W$ as required.
The only vertices with more than one successor are~$v_\del \in V_0$ and~$v'_\ans \in V_1$.
Thus, the only choice of Player~$0$ in~$\game_{n,W}$ consists in determining how often to take the self-loop of vertex~$v_\del$ upon each visit.
Dually, the only choice of Player~$1$ consists of deciding whether or not to move from~$v'_\ans$ to~$v'_{\req, 1}$, or to continue to~$v_\ans$.

Player~$0$ wins~$\game_{n,W}$ from each vertex by taking the self-loop of~$v_\del$~$nW$ times upon each visit to~$v_\del$ and by subsequently moving to~$v'_\ans$.
Each request in each play that is consistent with this strategy is answered or unanswered with cost at most~$nW$, independent of the choices of Player~$1$ in~$v'_\ans$.
Moreover, as the only way to visit~$v_\req$ is to move there from~$v_\ans$, the play visits~$v_\ans$ infinitely often if and only if it visits~$v_\req$ infinitely often.
Further, the play visits~$v'_{\req, 1}$ and~$v'_\ans$ infinitely often.
Hence, almost all requests are answered, i.e., this strategy is winning for Player~$0$ from all vertices.
This strategy can be implemented by a counter that counts the number of self-loops of~$v_\del$ taken so far, which is reset to zero upon leaving~$v_\del$.
As this counter is bounded by~$nW$, the strategy is of size~$nW + 1$.

It remains to show that each finite-state winning strategy for Player~$0$ has at least $nW + 1$ memory states.
Towards a contradiction, let~$\sigma$ be a winning strategy for Player~$0$ from some vertex~$v$ with less than~$nW + 1$ memory states.
We implement a strategy for Player~$1$ using a counter~$\kappa$ that is initialized with one if~$v = v_\req$ and with zero otherwise.
Upon each visit to~$v_\req$ we increment~$\kappa$.
After each visit to~$v_\req$, the strategy~$\tau$ prescribes moving from~$v'_\ans$ to~$v'_{\req, 1}$ for the first~$\kappa$ visits to~$v'_\ans$, and it prescribes moving from~$v'_\ans$ to~$v_\ans$ at the~$(\kappa+1)$-th visit to~$v'_\ans$.
Hence, after the~$(\kappa+1)$-th visit to~$v'_\ans$, the vertex~$v_\req$ is visited again,~$\kappa$ is incremented, and the behavior of~$\tau$ described above repeats with incremented~$\kappa$.

Let~$\rho$ be a play consistent with~$\sigma$ and~$\tau$.
Since~$\sigma$ is winning for Player~$0$, the play~$\rho$ does not remain in $v_\del$ from some point onwards ad infinitum, as this would violate the parity condition and thus contradict~$\sigma$ being winning from~$v$ for Player~$0$.
Hence, playing consistently with~$\tau$, Player~$1$ enforces a play that starts with a (possibly empty) finite prefix that ends before the first visit to~$v_\req$ and that continues with infinitely many rounds, each starting in~$v_\req$.
The~$j$-th round is of the form
\[
	v_\req \cdot \Pi_{k = 0, \dots, j}  (v'_{\req,1} \cdots v'_{\req, n} \cdot {(v_\del)}^{\ell_{j, k}} \cdot v'_\ans ) \cdot v_\ans \enspace .
\]

We first show~$\ell_{j,k} < nW + 1$ for all~$j, k$.
Towards a contradiction, assume~$\ell_{j,k} \geq nW + 1$ for some~$j, k \in \nats$.
Since~$\sigma$ is of size less than~$nW + 1$, a straightforward pumping argument shows that the play~$\rho'$ consisting of the finite prefix of~$\rho$ ending before the first visit to~$v_\req$ concatenated with the first~$j-1$ rounds of~$\rho$, but ending with the infinite suffix
\[
	v_\req \cdot \Pi_{k' = 0, \dots, k-1}  (v'_{\req,1} \cdots v'_{\req, n} \cdot {(v_\del)}^{\ell_{j, k'}} \cdot v'_\ans ) \cdot v'_{\req,1} \cdots v'_{\req, n} \cdot {(v_\del)}^\omega \ ,
\]
 is consistent with~$\sigma$.
 This, however, contradicts~$\sigma$ being a winning strategy for Player~$0$ from~$v$ as~$v_\del$ has the odd color one, i.e., as the resulting play violates the parity condition.
Hence, we obtain~$\ell_{j,k} < nW + 1$ for all~$j,k$, which, in turn, yields $\weight({(v_\del)}^{\ell_{j,k}}) > -nW$.

Since each edge~$(v'_{\req, n'}, v'_{\req, n' + 1})$ for $1\leq n' < n$ as well as the edge~$(v'_{\req, n}, v'_\del)$ has weight~$W$, we obtain
\[
	\weight(v'_{\req, 1} \cdots v'_{\req, n} \cdot {(v_\del)}^{\ell_{j, k}} \cdot v'_\ans) > 0
\]
for all~$j,k$.
This, in turn, implies
\[
	\weight(v_\req \cdot \Pi_{k = 0, \dots, j}  (v'_{\req,1} \cdots v'_{\req, n} \cdot {(v_\del)}^{\ell_{j, k'}} \cdot v'_\ans )  )\geq j + 1
\]
for each~$j$.
Since, as argued above, the play~$\rho$ consistent with~$\sigma$ consists of infinitely many rounds, we obtain that for each~$b \in \nats$ there exist infinitely many requests in~$\rho$ that are answered with cost at least~$b$.
Hence, the costs-of-requests along~$\rho$ diverge, which contradicts~$\sigma$ being a winning strategy for Player~$0$.
\end{proof}

This concludes the study of memory requirements for both players in parity games with weights.
For Player~$0$, the results from Theorem~\ref{thm:memory} also hold true for bounded parity games with weights:
Lemma~\ref{lem:memory:player-0} directly yields an upper bound on the memory required by Player~$0$ in order to win in bounded parity games with weights, while it is easy to see that Player~$0$ also wins the games constructed in the proof of Lemma~\ref{lem:pw-memory-lb} when interpreting them as bounded parity games with weights, but only using a strategy of size~$nW + 1$.

For Player~$1$, however, these results do not directly carry over to the setting of bounded parity games with weights.
She has, in fact, a positional winning strategy for the game witnessing necessity of infinite memory for her in parity games with weights shown in Figure~\ref{fig:energyparity:difference} on Page~\pageref{fig:energyparity:difference}, when interpreted as a bounded parity game with weights.
Recall that, in the proof of the first item of Lemma~\ref{lem:bounded-cost-parity:reduction}, for a given parity game with weights~$\game$, we construct a strategy~$\tau$ that is winning for Player~$1$ from a vertex~$v$ out of a winning strategy~$\tau_v$ for her in an induced energy parity game~$\game_v$ from a designated vertex~$v'$.

While~$\tau_v$ can be assumed to be positional as shown by Chatterjee and Doyen~\cite{ChatterjeeDoyen12} (cf. Proposition~\ref{prop:energyparity:player1}), the strategy~$\tau$ keeps track of play prefixes in~$\game_v$ and thus requires potentially infinitely many memory states.
In particular, in order to win~$\game_v$ from~$v'$, recall that it may be necessary to switch between two copies of the arena of~$\game$.
Whether or not to perform this switch is governed by the accumulated weight of the play prefix in~$\game$ thus far.

Hence, our construction from the proof of that lemma does not directly allow us to obtain positional or finite-state winning strategies for Player~$1$ in bounded parity games with weights.
It remains open whether Player~$1$ requires infinite memory to win in bounded parity games with weights.
Since she, however, has finite-state winning strategies in the special case of bounded parity games with costs as shown by Fijalkow and Zimmermann~\cite{FijalkowZimmermann14}, we conjecture that she requires at most finite memory for winning strategies in bounded parity games with weights as well.

We now turn our attention to the quantitative properties of this winning condition.
To this end, we provide tight bounds on the costs of requests that Player~$0$ can guarantee in a parity game with weights~$\game$, if she wins~$\game$ at all.


\section{Quality of Strategies}%
\label{sec:quality}
We have shown in the previous section that finite-state strategies of bounded size suffice for Player~$0$ to win in parity games with weights, while Player~$1$ clearly requires infinite memory.
However, as we are dealing with a quantitative winning condition, we are not only interested in the size of winning strategies, but also in their quality.
More precisely, we are interested in an upper bound on the cost of requests that Player~$0$ can ensure.
In this section, we show that he can guarantee a pseudo-polynomial upper bound on such costs.
Dually, Player~$1$ is required to unbound the cost of responses.

\begin{thm}%
\label{thm:costs}
Let~$\game$ be a parity game with weights with~$n$ vertices,~$d$ colors, and largest absolute weight~$W$.

There exists a~$b \in \bigo({(ndW)}^2)$ and a strategy~$\sigma$ for Player~$0$ such that, for all plays~$\rho$ beginning in~$\winreg_0(\game)$ and consistent with~$\sigma$, we have~$\limsup_{j \rightarrow \infty}\Cor(\rho, j) \leq b$.
This bound is tight.
\end{thm}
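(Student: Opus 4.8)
The strategy witnessing the upper bound is the one constructed in the proofs of Lemma~\ref{lem:algorithm-parity-weights-correctness} and Lemma~\ref{lem:bounded-cost-parity:reduction}, and the numerical bound comes from sharpening Corollary~\ref{cor:bounded-parity-with-weights:uniform-bound}.
First I would improve that corollary as announced in the remark following it: in the polarized arena $\arena_{v^*}$ it suffices to store in the auxiliary vertices the \emph{target} of the edge chosen by the players rather than the edge itself (the observation already used in the proof of Lemma~\ref{lem:memory:player-0}), so each induced energy parity game $\game_{v^*}$ has $n' \le 6n$ vertices, $d' \le d+2$ colors, and largest absolute weight $W' \le W+1$.
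Re-running the estimate of Lemma~\ref{lem:epra:esi-cost-bound} with these parameters bounds the cost accrued within a single \esi by $d'(n'W')^2$, and since by Remark~\ref{rem:epra:esi-bound} at most $d$ \esis lie between a request and its answer, Player~$0$ has, in every bounded parity game with weights with $n$ vertices, $d$ colors, and largest absolute weight $W$, a winning strategy under which every request is answered or unanswered with cost at most $b := ((d+2)(6n)(W+1))^2 \in \bigo((ndW)^2)$.
To transfer this to the full game, let $\sigma$ be the winning strategy for Player~$0$ in $\game$ obtained by running Algorithm~\ref{algorithm_fixpoint} with Algorithm~\ref{algorithm_fixpoint2} as oracle and using the strategy just described for each bounded parity game with weights solved along the way; every such subgame has at most $n$ vertices, $d$ colors, and largest absolute weight $W$, so $b$ is uniform across iterations. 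For a play $\rho$ starting in $\winreg_0(\game)$ and consistent with $\sigma$, the proof of Lemma~\ref{lem:algorithm-parity-weights-correctness} shows that $\rho$ has a suffix that never leaves a single $X_k$ and is consistent with the corresponding strategy for $\game_k = (\arena_{k-1}, \bcp(\col,\weight))$; by the improved corollary all but finitely many requests opened along that suffix have cost at most $b$, while the finitely many requests opened earlier contribute only finitely many entries to $(\Cor(\rho,j))_j$ and hence leave its limit superior unchanged. Therefore $\limsup_{j\to\infty}\Cor(\rho,j) \le b \in \bigo((ndW)^2)$.

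\textbf{Lower bound.}
For tightness I would exhibit a family $\game_{n,d,W}$ with $\bigo(n)$ vertices, $\bigo(d)$ colors, and largest absolute weight $\bigo(W)$ from which Player~$0$ wins, but such that every winning strategy of his admits a consistent play in which infinitely many requests are answered with cost $\Omega((ndW)^2)$. The atomic gadget is a variant of the weight-accumulation game of Figure~\ref{fig:memory:lower-bound}: a chain of $\Theta(n)$ edges of weight $W$ followed by a Player-$0$-controlled delay loop of weight $-1$, which on its own already forces the request it carries to have cost $\Omega(nW)$ and forces $\Omega(nW)$ memory on Player~$0$. I would then aim to (i) embed this gadget so that, while one request of an odd color stays open, Player~$1$ can force $\Omega(nW)$ successive passes whose peaks accumulate rather than merely recur, pushing the cost of that single request to $\Omega((nW)^2)$; and (ii) nest $\Theta(d)$ independent copies over the odd colors $1, 3, \dots, d-1$, so that Player~$1$ can, for each copy in turn, extort a further quadratic contribution, yielding total cost $\Omega((ndW)^2)$ infinitely often. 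One then has to verify both directions: that Player~$0$ wins at all — which should hold because Player~$1$'s number of passes and her depth of nesting are both bounded, since otherwise she would sit forever in an odd-colored loop and violate the parity condition — and that no winning strategy for Player~$0$ can avoid the stated cost, which is an adversary argument against an arbitrary such strategy.

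\textbf{Main obstacle.}
The upper bound is essentially bookkeeping on top of Corollary~\ref{cor:bounded-parity-with-weights:uniform-bound}, Lemma~\ref{lem:epra:esi-cost-bound}, and the composition behind Algorithms~\ref{algorithm_fixpoint} and~\ref{algorithm_fixpoint2}; the genuine difficulty lies entirely in the lower bound. Step (i) must be engineered so that the amplitudes genuinely stack up to the full quadratic order while still leaving Player~$0$ a winning — albeit only $\bigo((ndW)^2)$-quality — strategy, and both directions must be carried out carefully enough to match the constants implicit in the upper bound, which is what makes the word ``tight'' accurate.
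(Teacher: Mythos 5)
Your upper bound is sound and lands on the same $\bigo({(ndW)}^2)$ figure as the paper, but by a somewhat different route: you compose the per-\esi cost bound of Lemma~\ref{lem:epra:esi-cost-bound} (i.e., the sharpened Corollary~\ref{cor:bounded-parity-with-weights:uniform-bound}) through Algorithms~\ref{algorithm_fixpoint} and~\ref{algorithm_fixpoint2}, whereas the paper's Lemma~\ref{lem:parity-games-weights:cost-upper-bound} takes the memory bound $s = d(6n)(d+2)(W+1)$ from Lemma~\ref{lem:memory:player-0} and runs a pumping argument on ``sumptuous'' positions (those with $nsW < \Cor(\rho,j) < \infty$) to conclude $\limsup_{j}\Cor(\rho,j) \le nsW$. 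Both arguments work, the paper itself notes that your sharpening of the corollary is available, and your handling of the finitely many pre-suffix requests and of the finitely many requests unanswered with finite cost is correct, since neither affects the limit superior.

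The genuine gap is in the lower bound. You read ``tight'' as demanding a matching $\Omega({(ndW)}^2)$ family, and your construction of one is only a plan: steps (i) and (ii) are stated as goals, and step (i) --- arranging that the amplitudes of $\Omega(nW)$ successive passes \emph{accumulate} to $\Omega({(nW)}^2)$ on a single open request while Player~$0$ still wins --- is precisely what the paper does not achieve. The paper's tightness claim is only that the bound must be pseudo-polynomial: Lemma~\ref{lem:paritygames:weight:lower} exhibits a cycle of $n$ weight-$W$ edges (Figure~\ref{fig:cost-lower-bound}) forcing cost $(n-1)W \in \Omega(nW)$, and the surrounding text explicitly concedes that ``there remains a polynomial gap between the two bounds'' and that the proof methods of that section cannot close it. So to prove the theorem as the paper intends it, your atomic gadget alone already suffices (it is essentially the paper's Lemma~\ref{lem:paritygames:weight:lower}); to prove the stronger statement you are aiming at, you would have to actually carry out the accumulation and nesting construction and verify both that Player~$0$ still wins and that no winning strategy escapes the quadratic cost. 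Succeeding there would close a gap the authors leave open, so you should not expect it to follow from the machinery already in the paper.
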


We first show that Player~$0$ can indeed ensure an upper bound as stated in Theorem~\ref{thm:costs}.
We obtain this bound via a straightforward pumping argument leveraging the upper bound on the size of winning strategies obtained in Lemma~\ref{lem:memory:player-0}.

\begin{lem}%
\label{lem:parity-games-weights:cost-upper-bound}
Let~$\game$,~$n$,~$d$, and~$W$ be as in the statement of Theorem~\ref{thm:costs} and let~$s = d(6n)(d+2)(W+1)$.
Player~$0$ has a winning strategy~$\sigma$ such that, for each play~$\rho$ that starts in~$\winreg_0(\game)$ and is consistent with~$\sigma$, we have~$\limsup_{j \rightarrow \infty}\Cor(\rho, j) \leq nsW$.
\end{lem}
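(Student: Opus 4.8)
The plan is to take $\sigma$ to be the finite‑state winning strategy for the parity game with weights obtained by combining Lemma~\ref{lem:memory:player-0} with the construction of Section~\ref{sec:parity-weights-solve}. Recall that the winning strategy produced by Algorithm~\ref{algorithm_fixpoint} follows, within each peeled‑off region $\att{0}{\arena_{k-1}}{X_k}$, an attractor strategy towards $X_k$, and once inside $X_k = \winreg_0(\arena_{k-1}, \bcp(\col,\weight))$ a finite‑state winning strategy $\sigma_k$ for the bounded parity game with weights $\game_k = (\arena_{k-1}, \bcp(\col,\weight))$; by Lemma~\ref{lem:memory:player-0} each $\sigma_k$ may be taken of size at most $s = d(6n)(d+2)(W+1)$, and since the $X_k$ are disjoint and never revisited the $\sigma_k$ can reuse memory, so $\card{\sigma} \le s$. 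Fix a play $\rho = v_0 v_1 \cdots$ consistent with $\sigma$ starting in $\winreg_0(\game)$; as $\sigma$ is winning we have $\rho \in \cp(\col,\weight)$, so only finitely many requests in $\rho$ are unanswered.

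Next I would argue that $\rho$ eventually settles. As in the proof of Lemma~\ref{lem:algorithm-parity-weights-correctness}, the index $k$ with $v_j \in \att{0}{\arena_{k-1}}{X_k}$ is nonincreasing in $j$ and stabilizes at some $\bar k$; thereafter the attractor strategy drives the play into $X_{\bar k}$, which the winning strategy $\sigma_{\bar k}$ never leaves (any escape would reach $\winreg_1(\game_{\bar k})$, and combining $\sigma_{\bar k}$ with a winning strategy for Player~$1$ from there would, by $1$‑extendability of $\bcp(\col,\weight)$, yield a play consistent with $\sigma_{\bar k}$ and not in $\bcp(\col,\weight)$, a contradiction). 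Hence $\rho$ has a suffix $\rho' = v_{j_0} v_{j_0+1} \cdots$ that stays in $X_{\bar k} \subseteq \winreg_0(\game_{\bar k})$ and is consistent with $\sigma_{\bar k}$, a finite‑state strategy of size at most $s$ winning Player~$0$ the \emph{bounded} parity game with weights $\game_{\bar k}$, whose arena is a subarena of $\arena$ with at most $n$ vertices and largest absolute weight at most $W$.

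The technical core is then the following pumping claim for $\rho'$: for every $j \ge j_0$ whose request is answered, with $j'$ the first answer, $\ampl(v_j \cdots v_{j'}) \le nsW$. If $\col(v_j)$ is even this is trivial, and if $W = 0$ all amplitudes are $0$; so assume $\col(v_j)$ odd and, say, $\weight(v_j \cdots v_\ell) > nsW$ for some $j \le \ell \le j'$ (a large negative partial weight is symmetric, via new minima). Exactly as in Lemma~\ref{lem:boundeddecreaseinenergyparity}, tracking the positions in $[j, j'-1]$ at which $\weight(v_j \cdots v_{\cdot})$ reaches a new maximum gives more than $ns$ of them, and since the product of $\arena_{\bar k -1}$ with the memory structure of $\sigma_{\bar k}$ has at most $ns$ states, two such positions $q_a < q_b$ carry the same vertex and memory state. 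The infix $v_{q_a+1} \cdots v_{q_b}$ is then a cycle of strictly positive weight lying in $[j+1, j'-1]$, hence containing no answer to the request at $j$ (as $j'$ is the first answer and $\col(v_j)$ is odd). Repeating this cycle forever from $v_{q_a}$ produces a play consistent with $\sigma_{\bar k}$ and starting in $\winreg_0(\game_{\bar k})$ in which the request at $j$ is unanswered and the accumulated weight diverges to $+\infty$ — i.e.\ a request unanswered with infinite cost — contradicting that $\sigma_{\bar k}$ is winning for $\game_{\bar k}$. Since $\Cor(\rho, j)$ equals the first‑answer amplitude for answered requests, all but finitely many positions $j$ (namely the answered ones at positions $\ge j_0$) satisfy $\Cor(\rho, j) \le nsW$, and the finitely many unanswered requests do not affect the limit superior; hence $\limsup_{j \to \infty} \Cor(\rho, j) \le nsW$.

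The step I expect to be the main obstacle is precisely the reduction to the bounded setting in the second paragraph: if one tried to pump directly in the (unbounded) parity game with weights, the pumped play need not be losing, because a positive‑weight cycle whose largest colour is even and smaller than $\col(v_j)$ can be looped forever while still satisfying $\cp(\col,\weight)$. It is therefore essential that the tail behaviour of $\sigma$ along every consistent play is governed by a winning strategy for a bounded parity game with weights, where a single request unanswered with infinite cost already constitutes a loss; getting the settling argument and the $1$‑extendability bookkeeping right is where the care is needed, whereas the pumping estimate itself is a direct transcription of Lemma~\ref{lem:boundeddecreaseinenergyparity}.
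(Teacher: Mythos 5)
Your proof is correct, but it takes a genuinely different route from the paper's. The paper's argument works directly in the (unbounded) parity game with weights: it fixes the size-$s$ winning strategy $\sigma$, calls a position \emph{sumptuous} if its cost-of-response is finite but exceeds $nsW$, assumes towards a contradiction that infinitely many positions are sumptuous, extracts from the window of the $k$-th sumptuous request a nonzero-weight cycle on a repeated (vertex, memory-state) pair, and then pumps the $k$-th cycle $k$ times inside a \emph{single} modified play. That play is still consistent with $\sigma$ and contains infinitely many \emph{answered} requests of diverging cost, so it violates $\cp(\col,\weight)$ outright --- thereby sidestepping exactly the obstacle you identify (that one cycle pumped forever need not falsify the unbounded condition) without ever descending to the bounded game. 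Your route instead exploits the internal structure of the composed strategy --- settling into some $X_{\bar k}$ and thereafter following a winning strategy for the \emph{bounded} parity game with weights --- so that a single infinitely-pumped cycle already produces a request unanswered with infinite cost and contradicts $\bcp(\col,\weight)$. What your approach buys is a pointwise guarantee (every answered request in the tail has cost at most $nsW$, essentially re-deriving the uniform bound of Corollary~\ref{cor:bounded-parity-with-weights:uniform-bound}); what it costs is the dependence on the decomposition of $\sigma$ via Lemma~\ref{lem:algorithm-parity-weights-correctness}, whereas the paper only needs the size bound on $\sigma$. One small point you assert without justification (and which the paper also glosses over): the extracted cycle must avoid the answering position $j'$. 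This does hold --- no vertex at a position in $[j,j')$ can equal $v_{j'}$, since that would either answer the request before $j'$ or force $\col(v_j)$ to be simultaneously odd and even --- so any pigeonhole pair among the new-maximum positions lies strictly before $j'$, but it deserves a sentence.
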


\begin{proof}
\begin{figure}
\centering
\begin{tikzpicture}[thick]

	\path[draw] (-.5,0) edge (-.5,2);
	\path[draw] (-.5,0) edge (-.5,-2);
	\node[anchor=east] at (-.5,1.75) {$\weight$};

	\path[draw,dashed] (-1,0) -- (0,0);
	\path[draw] (0,0) -- (2.5,0);
	\path[draw,dashed] (2.5,0) -- (3.5,0);
	\path[draw] (3.5,0) -- (5,0);
	\path[draw,dashed] (5,0) -- (7,0);
	\path[draw] (7,0) -- (10,0);
	\path[draw,dashed] (10,0) edge (11,0);
	
	\node[anchor=north] at (10.5,0) {$\rho$};
	
	\path[draw]
		(0,0) .. controls (.5, 0) and (.5,1.5) ..
		(1,1.5) .. controls (1.5,1.5) and (1.5,-.5) ..
		(2,-.5) .. controls (2.25, -.5) and (2.25,0) .. (2.5,0);
	
	\path[draw]
		(3.5,0) .. controls (3.75, 0) and (3.75,-1.5) ..
		(4,-1.5) .. controls ( 4.25, -1.5) and (4.25,.5) ..
		(4.5,.5) ..controls (4.75,.5) and (4.75,.5) .. (5,1);
		
	\path[draw]
		(7,0) .. controls (7.5, 0) and (7.5, 1.5) ..
		(8,1.5) .. controls ( 8.5, 1.5) and (8.5,-1.5) ..
		(9,-1.5) .. controls (9.5, -1.5) and (9.5,-1) .. (10,-1);
		
	\newcommand{\xticksouth}[2]{\draw (#1,-.1) -- (#1,.1); \node[anchor=north] at (#1,0) {#2};}
	\newcommand{\xticknorth}[2]{\draw (#1,-.1) -- (#1,.1); \node[anchor=south] at (#1,0) {#2};}
	\xticksouth{0}{$j_0$};
	\xticksouth{2.5}{$j'_0$};
	\xticknorth{3.5}{$j_1$};
	\xticksouth{5}{$j'_1$};
	\xticksouth{7}{$j_2$};
	\xticksouth{10}{$j'_2$};
	
	\draw[gray] (-.5,1.4) -- (10.5,1.4);
	\node[anchor=west] at (10.5,1.4) {$nsW$};
	\draw[gray] (-.5,-1.4) -- (10.5,-1.4);
	\node[anchor=west] at (10.5,-1.4) {$-nsW$};
	
	\draw[decorate,decoration={brace,amplitude=5pt,mirror}]
		(0,-1.5) -- node[anchor=north,yshift=-2pt,align=center] {Positively\\sumptuous} (2.5,-1.5);
	\draw[decorate,decoration={brace,amplitude=5pt,mirror}]
		(3.5,-1.5) -- node[anchor=north,yshift=-2pt,align=center] {Negatively\\sumptuous} (5,-1.5);
	\draw[decorate,decoration={brace,amplitude=5pt,mirror}]
		(7,-1.5) -- node[anchor=north,yshift=-2pt,align=center] {Positively\\sumptuous} (10,-1.5);
		
	\newcommand{\ytickwest}[2]{\draw ($(#1) - (.1,0)$) -- ($(#1) + (.1,0)$); \node[anchor=east] at ($(#1) - (.1,0)$) {#2};}
	\ytickwest{.45,.5}{$\ell_0$}
	\ytickwest{.6,1}{$\ell'_0$}
	
	\ytickwest{3.75,-.5}{$\ell_1$}
	\ytickwest{3.8,-1.15}{$\ell'_1$}
	
	\ytickwest{7.35,.35}{$\ell_2$}
	\ytickwest{7.65,1.15}{$\ell'_2$}
	
\end{tikzpicture}
\caption{Illustration of the approach to the proof of Lemma~\ref{lem:parity-games-weights:cost-upper-bound}.}%
\label{fig:cost-upper-bound}
\end{figure}

Let~$\sigma$ be a winning strategy for Player~$0$ in~$\game$ from~$\winreg_0(\game)$ of size at most~$s$.
Due to Lemma~\ref{lem:memory:player-0}, such a strategy exists.
Let~$\rho = v_0v_1v_2\cdots$ be a play that starts in~$\winreg_0(\game)$ and is consistent with~$\sigma$.
We call a position~$j \in \nats$ \textbf{sumptuous} if~$nsW < \Cor(\rho, j) < \infty$.
Each sumptuous position~$j$ has some odd color~$c$, and the request for~$c$ posed by visiting~$v_j$ is eventually answered.

Assume towards a contradiction that there exist infinitely many sumptuous positions.
We define a sequence of positions that begins with the first sumptuous position~$j_0$.
Let~$j'_0$ be the minimal position that satisfies~$\col(v_{j'_0}) \in \answer{\col(v_{j_0})}$.
This position exists since $\Cor(\rho, j_0) < \infty$ due to the definition of sumptuous positions.
We continue by defining~$j_1$ as the smallest sumptuous position greater than~$j'_0$ and by defining~$j'_1$ as the minimal position that satisfies~$\col(v_{j'_1}) \in \answer{\col(v_{j_1})}$.
Continuing in this manner, we obtain a sequence~$j_0 < j'_0 < j_1 < j'_1 < j_2 < j'_2 < \cdots$, where~$j_0$ is the first sumptuous position of~$\rho$, each~$j'_k$ for~$k \geq 0$ is the minimal position that satisfies both~$j'_k > j_k$ and $\col(v_{j'_k}) \in \answer{\col(v_{j_k})}$, and each~$j_k$ for~$k > 0$ is the smallest sumptuous position greater than~$j'_{k-1}$.
Since there exist infinitely many sumptuous positions by assumption and since each request posed at a sumptuous position is answered by definition, the sequence $j_0 < j'_0 < j_1 < j'_1 < j_2 < j'_2 < \cdots $ is indeed infinite.

Due to the definition of sumptuous positions and the~$j'_k$, we have~$\ampl(v_{j_k} \cdots v_{j'_k}) > nsW$ for each~$k \in \nats$.
Since~$\rho$ is consistent with the finite-state strategy~$\sigma$ of size~$s$, we claim that in each such~$v_{j_k} \cdots v_{j'_k}$ there exists an infix that can be repeated arbitrarily often while retaining consistency with~$\sigma$.
To identify such infixes, we partition the sumptuous positions~$j_k$:
We call a position~$j_k$ \textbf{positively sumptuous} if there exists a~$j'$ with~$j_k \leq j' \leq j'_k$ such that~$\weight(v_{j_k} \cdots v_{j'}) > nsW$ and \textbf{negatively sumptuous} otherwise.
In the latter case, there exists a~$j'$ with~$j_k \leq j' \leq j'_k$ such that~$\weight(v_{j_k} \cdots v_{j'}) < - nsW$.
 See Figure~\ref{fig:cost-upper-bound} for an illustration.
 In particular, note that the third sumptuous position is positively sumptuous, although it exceeds both the bounds~$nsW$ and~$-nsW$.

Let~$\sigma$ be implemented by~$(M, \init, \update)$.
As each edge contributes cost at most~$W$ to $\ampl(v_{j_k} \cdots v_{j'_k})$, this implies that there exist positions~$\ell_k$ and~$\ell'_k$ with~$j_k < \ell_k < \ell'_k < j'_k$ such that
\begin{itemize}
	\item $v_{\ell_k} = v_{\ell'_k}$,
	\item $\update^+(v_0 \cdots v_{\ell_k}) = \update^+(v_0 \cdots v_{\ell'_k})$,
	\item $\weight(v_{\ell_k} \cdots v_{\ell'_k-1}) > 0$, if~$j_k$ is positively sumptuous, and such that
	\item $\weight(v_{\ell_k} \cdots v_{\ell'_k-1}) < 0$, if~$j_k$ is negatively sumptuous.
\end{itemize}
The positions~$j_k$,~$\ell_k$,~$\ell'_k$, and~$j'_k$ split~$\rho$ into infixes~$\rho = \Pi_{k = 0,1,2,\dots} \pi_{k, \I} \cdot \pi_{k, \II} \cdot \pi_{k, \III} \cdot \pi_{k, \IV}$, where~$\pi_{k, \I}$,~$\pi_{k, \II}$,~$\pi_{k, \III}$, and~$\pi_{k, \IV}$ start at~$j_k$,~$\ell_k$,~$\ell'_k$, and~$j'_k$, respectively.
Due to the definition of~$\ell_k$ and~$\ell'_k$, the play $\rho' = \Pi_{k = 0,1,2,\dots} \pi_{k, \I} \cdot {(\pi_{k, \II})}^k \cdot \pi_{k, \III} \cdot \pi_{k, \IV}$ is consistent with~$\sigma$.
The costs-of-response of the requests opened by visiting the~$v_{j_k}$, however, diverge due to~$\card{\weight(\pi_{k, \II})} = \card{\weight(v_{\ell_k} \cdots v_{\ell'_k-1})} > 0$.
Hence,~$\rho'$ violates the parity condition with weights, which contradicts that~$\sigma$ is a winning strategy of Player~$0$.
\end{proof}

Having thus shown that Player~$0$ can indeed ensure a pseudo-polynomial upper bound on the incurred cost, we now proceed to show that this bound is tight.
A simple example shows that there exists a series of parity games with weights in which Player~$0$ wins from every vertex, but in which he cannot enforce a sub-pseudo-polynomial cost of any request.

\begin{lem}%
\label{lem:paritygames:weight:lower}
Let~$n,W \in \nats$.
There exists a parity game with weights~$\game_{n,W}$ with~$n$ vertices and largest absolute weight~$W$ such that for each vertex~$v \in \winreg_0(\game)$ and for each winning strategy for Player~$0$ from~$v$ there exists a play~$\rho$ starting in~$v$ and consistent with~$\sigma$ with~$\limsup_{j \rightarrow \infty}\Cor(\rho, j) \geq (n-1)W$.
\end{lem}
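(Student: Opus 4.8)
The plan is to exhibit a small explicit game, essentially a single directed cycle, on which Player~$0$ wins purely because every request is eventually answered, yet the answer is always $n-1$ steps and $(n-1)W$ units of accumulated weight away. Concretely, I would let $\game_{n,W}$ have vertex set $\{v_0,\dots,v_{n-1}\}$ arranged in the simple cycle $v_0\to v_1\to\cdots\to v_{n-1}\to v_0$, with every edge of weight $W$ and with coloring $\col(v_0)=1$, $\col(v_{n-1})=2$, and $\col(v_i)=0$ for $1\le i\le n-2$. The partition of the vertices between the two players is immaterial, since no vertex has more than one successor, so I would simply put all vertices in $V_1$. This game has $n$ vertices and largest absolute weight $W$, as required; if $W=0$ the claimed bound is vacuous, so I assume $W\ge 1$.

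Next I would note that the game is deterministic: from each vertex there is a unique play and hence each player has a unique strategy. So it suffices to analyse the unique play $\rho$ starting in an arbitrary vertex; it has the form $v_i\cdots v_{n-1}(v_0 v_1\cdots v_{n-1})^\omega$ and visits $v_0$ infinitely often. Every visit to $v_0$ poses a request for the odd color $1$, and since $v_{n-1}$ is the only vertex whose color lies in $\answer{1}$, the request is answered exactly $n-1$ steps later. As all $n-1$ edges traversed on that infix have weight $W$, the accumulated weights along its prefixes are $0, W, 2W,\dots,(n-1)W$, so its amplitude --- and hence the cost-of-response of the request --- equals $(n-1)W$; every other position of $\rho$ carries an even color and is answered with cost $0$. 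Therefore $\Cor(\rho,j)=(n-1)W$ for infinitely many $j$ and $\Cor(\rho,j)=0$ otherwise, whence $\limsup_{j\to\infty}\Cor(\rho,j)=(n-1)W\in\nats$.

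From this the lemma follows at once: the unique strategy of Player~$0$ is winning from every vertex, so $\winreg_0(\game_{n,W})$ is the whole vertex set, and for any $v\in\winreg_0(\game_{n,W})$ and any (i.e.\ the unique) winning strategy $\sigma$ for Player~$0$ from $v$, the play $\rho$ above starts in $v$, is consistent with $\sigma$, and satisfies $\limsup_{j\to\infty}\Cor(\rho,j)=(n-1)W\ge(n-1)W$. There is no genuine obstacle in this argument; the only points requiring a moment's care are choosing the colors of the intermediate vertices small enough that the request at $v_0$ is not answered before reaching $v_{n-1}$ (color $0$ works, color $2$ would not), and observing that the fact that the accumulated weight of $\rho$ drifts to $+\infty$ is harmless, since $\cp(\col,\weight)$ constrains only the limit superior of the cost-of-response and not the global amplitude of the play.
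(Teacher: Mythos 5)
Your proposal is correct and coincides with the paper's own construction: the paper uses exactly the same $n$-vertex cycle with all edges of weight $W$, one vertex of color two whose direct successor has color one, and all other vertices of color zero, and argues identically that the unique play answers each request for color one only after $n-1$ steps of weight $W$ each. The only cosmetic difference is the indexing of the vertices.
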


\begin{proof}
We show the game~$\game_{n,W}$ in Figure~\ref{fig:cost-lower-bound}.
The arena of~$\game_{n,W}$ is a cycle with~$n$ vertices of Player~$1$, where each edge has weight~$W$.
Moreover, one vertex is labeled with color two, its directly succeeding vertex is labeled with color one.
All remaining vertices have color zero.

\begin{figure}
\centering
\begin{tikzpicture}
	\node[p1] (v1) at (0,0) {\parnode{v_1}{1}};
	\node[p1] (v2) at (2,0) {\parnode{v_2}{0}};
	\node (dots) at (4,0) {$\cdots$};
	\node[p1] (vn-1) at (6,0) {\parnode{v_{n-1}}{0}};
	\node[p1] (vn) at (8,0) {\parnode{v_n}{2}};
	
	\path
		(v1) edge node[anchor=south] {$W$} (v2)
		(v2) edge[densely dashed] node[anchor=south] {$W$} (dots)
		(dots) edge[densely dashed] node[anchor=south] {$W$} (vn-1)
		(vn-1) edge node[anchor=south] {$W$} (vn);
	\path[draw,thick,-stealth,rounded corners]
		(vn.south) |- ($(dots) - (0,.8)$) -| (v1.south);
	\node[anchor=south] at ($(dots) - (0,.8)$) {$W$};
\end{tikzpicture}

\caption{The game~$\game_{n,W}$ witnessing a pseudo-polynomial lower bound on the cost that Player~$0$ can ensure.}%
\label{fig:cost-lower-bound}
\end{figure}
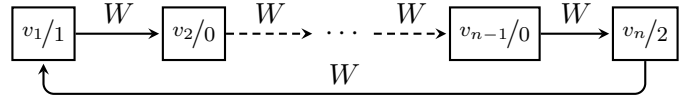

Player~$0$ only has a single strategy in this game and there exist only~$n$ plays in~$\game_{n,W}$, each starting in a different vertex of~$\game_{n,W}$.
In each play, each request for color one is only answered after~$n-1$ steps, each contributing a cost of~$W$.
Hence, this request incurs a cost of~$(n-1)W$.
Moreover, as this request is posed and answered infinitely often in each play, we obtain the desired result.
\end{proof}

While Player~$0$ is able to bound the costs from above by
\[ d(6n^2)(d+2)W(W+1) \in \bigo({(ndW)}^2) \enspace, \]
due to Lemma~\ref{lem:parity-games-weights:cost-upper-bound}, we only obtain a sequence of examples witnessing a lower bound of~$(n-1)W \in \Omega(nW)$ on these costs due to Lemma~\ref{lem:paritygames:weight:lower}.
Thus, while both the upper and the lower bound are pseudo-polynomial in the size of the game, there remains a polynomial gap between the two bounds.

The upper bound on the costs is due to the pumping argument presented in Lemma~\ref{lem:parity-games-weights:cost-upper-bound} leveraging the upper bound on the memory Player~$0$ requires to implement a winning strategy due to Lemma~\ref{lem:memory:player-0}.
We have a lower bound of~$(n-4)W \in \Omega(nW)$ on the memory required by Player~$0$ to implement a winning strategy due to Lemma~\ref{lem:memory:player-0}.
Thus, even if we are able to improve the upper bounds on the memory requirements for Player~$0$, the pumping argument leveraged in the proof of Lemma~\ref{lem:parity-games-weights:cost-upper-bound} can only improve the upper bound on the cost incurred by Player~$0$ to~$d(n-4)W^2 \in \bigo(ndW^2)$.
Hence, the proof methods used in this section do not enable us to completely close this gap between the upper and the lower bound on the cost Player~$0$ is able to enforce.

All bounds obtained in this section also hold for bounded parity games with weights:
The upper bound from Lemma~\ref{lem:parity-games-weights:cost-upper-bound} holds for bounded parity games with weights since the bounded parity condition with weights strengthens its unbounded variant.
Analogously, the games constructed for the proof of the lower bound on the incurred cost in Lemma~\ref{lem:paritygames:weight:lower} yield the same lower bound for bounded parity games with weights.

\section{From Energy Parity Games to (Bounded) Parity Games with Weights}%
\label{sec:ep2(b)pw}
We have discussed in Sections~\ref{sec:parity-weights-solve} and~\ref{sec:bounded-parity-weights-solve} how to solve parity games with weights via solving bounded parity games with weights and how to solve the latter games by solving energy parity games, both steps with a polynomial overhead.
An obvious question is whether one can also solve energy parity games by solving (bounded) parity games with weights.
In this section, we answer this question affirmatively:
We show how to transform an energy parity game into a bounded parity game with weights so that solving the latter also solves the former.
We then show how to transform a bounded parity game with weights into a parity game with weights with the same relation:
solving the latter also solves the former.
Both constructions here are gadget based and increase the size of the arenas only quadratically.
Hence, all three types of games are interreducible with at most polynomial overhead.

\subsection{From Energy Parity Games to Bounded Parity Games with Weights}%
\label{subsec:ep2(b)pw:ep2bpw}
In an energy parity game, Player~$0$ wins if the energy increases without a bound, as long as there is a lower bound.
In a bounded parity game, in contrast, he has to ensure both an upper and a lower bound.
Thus, we show in a first step how to modify an energy parity game so that Player~$0$ still has to ensure a lower bound on the energy, but can also \emph{throw away} unnecessary energy during each transition, thereby also ensuring an upper bound. The most interesting part of this construction is to determine when energy becomes unnecessary to ensure a lower bound. Here, we rely on Lemma~\ref{lem:boundeddecreaseinenergyparity}.

Formally, let $\game = (\arena, \col, \weight)$ be an energy parity game with $\arena = (V, V_0, V_1, E)$ where we assume w.l.o.g.\ that the minimal color in $\col(V)$ is strictly greater than $1$.
Now, we define $\game' = (\arena', \col', \weight')$ with $\arena = (V, V_0, V_1, E)$ where
\begin{itemize}
	\item $V' = V \cup E$, $V_0' = V_0 \cup E$, and $V_1' = V_1$,
	\item $E' = \set{ (v,e),(e,e),(e,v') \mid e = (v,v') \in E }$,
	\item $\col'(v) = \col(v)$ and $\col'(e) = 1$, and
	\item $\weight'(v,e) = \weight(e)$, $\weight'(e,e) = -1$, and $\weight(e,v') = 0$ for every $e = (v,v') \in E$.
\end{itemize}
Intuitively, every edge of $\arena$ is subdivided and a new vertex for Player~$0$ is added, where he can decrease the energy level. The negative weight also ensures that he eventually leaves this vertex in order to satisfy an energy condition.

We say that a strategy~$\sigma$ for Player~$0$ in $\arena'$ is corridor-winning for him from some $v \in V$, if there is a~$b \in \nats$ such that every play~$\rho $,
which starts in $v$ and is consistent with $\sigma$, satisfies the parity condition and $\ampl(\rho) \le b$. Hence, instead of just requiring a lower bound on the energy level as in the energy parity condition, we also require a uniform upper bound on the energy level (where we assume w.l.o.g.\ these bounds to coincide).

\begin{lem}%
\label{lem:ep:vs:cep}
Let $\game$ and $\game'$ be as above and let $v \in V$.
Player~$0$ has a winning strategy for~$\game$ from $v$ if and only if he has a corridor-winning strategy for $\game'$ from $v$.
\end{lem}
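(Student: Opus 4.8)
The plan is to prove the two directions of the equivalence separately, and in each direction to exhibit the required strategy by transferring a strategy across the gadget construction that subdivides every edge $e = (v,v')$ through a new Player~$0$ vertex carrying color~$1$ and a self-loop of weight~$-1$.

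For the direction ``$\Leftarrow$'', suppose Player~$0$ has a corridor-winning strategy~$\sigma'$ for $\game'$ from~$v$, witnessed by a bound~$b$. I would define a strategy~$\sigma$ for Player~$0$ in $\game$ from~$v$ by projecting: since every play in $\arena$ corresponds, after inserting the intermediate vertices~$e$ and choosing how many times to loop at each, to a play in $\arena'$, we let $\sigma$ simulate $\sigma'$ where at each edge vertex $e$ we simply pass through without looping (equivalently, record the play prefix of $\arena'$ obtained by never using the self-loops and have $\sigma$ mimic the moves $\sigma'$ prescribes at vertices from $V_0$). Every play~$\rho$ consistent with~$\sigma$ lifts to a play~$\rho'$ consistent with~$\sigma'$ whose energy levels differ from those of~$\rho$ only by the (nonpositive) contributions of the self-loops actually taken, which is to say: the energy level of $\rho'$ at a genuine vertex is at most that of~$\rho$ there, but since $\rho'$ never uses a self-loop in this lift, the energy levels coincide at all vertices of~$V$. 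As $\ampl(\rho') \le b$, the energy level of $\rho'$ — hence of~$\rho$ — stays in $[-b, b]$, so in particular it is bounded below, and $\rho'$ (hence $\rho$, same color sequence up to the irrelevant color~$1$ on edge vertices, which by the w.l.o.g.\ assumption lies strictly below all colors of~$V$) satisfies the parity condition. Thus $\sigma$ is winning for Player~$0$ in $\game$ from~$v$ with initial credit~$b$.

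For the direction ``$\Rightarrow$'', suppose Player~$0$ wins the energy parity game~$\game$ from~$v$. By Proposition~\ref{prop:energyparity:equivalence} he has a finite-state winning strategy~$\sigma$ of size~$s \le ndW$ (with initial credit $(n-1)W$). The key point is to use Lemma~\ref{lem:boundeddecreaseinenergyparity}: every infix of a $\sigma$-consistent play starting in the winning region has weight strictly greater than $-(ns-1)W - 1$, so along any such play the energy level never descends more than a fixed pseudo-polynomial amount below its running maximum. I would define a corridor-winning strategy~$\sigma'$ for~$\game'$ as follows. Player~$0$ plays as~$\sigma$ would on the~$V$-vertices; at an edge vertex~$e$ reached along a prefix, he uses the self-loop to ``dump'' exactly enough energy so that, after dumping, the current energy level equals a fixed threshold~$T$ (say $T = (ns-1)W + 1$, the depth guaranteed by Lemma~\ref{lem:boundeddecreaseinenergyparity}) whenever the incoming level exceeds~$T$, and loops~$0$ times otherwise. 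Because $\sigma$ never lets the weight of any infix drop below $-(ns-1)W-1$, the energy level in the simulated $\game$-play never goes negative once we maintain this invariant; formally one shows by induction on the prefix that after each edge-vertex the level lies in $[0,T]$ and between edge vertices it can only dip by at most~$W$ below~$0$'s counterpart — more carefully, one argues that resetting to the cap~$T$ never causes a later violation precisely because the worst descent from any point is bounded by $(ns-1)W+1 = T$. This keeps the energy level — and hence $\ampl$ — bounded by a pseudo-polynomial constant $b = \bigo(T + W)$, and the parity condition is inherited from~$\sigma$ (the injected color~$1$ vertices are left only finitely often in a row, since the $-1$ self-loop weight together with the energy lower bound forbids looping there forever, and color~$1$ is below every genuine color).

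I expect the main obstacle to be the correctness of the ``dump to the cap'' strategy in the $\Rightarrow$ direction: one must verify that capping the energy at~$T$ after every edge never retroactively makes a future energy level negative. This is exactly where Lemma~\ref{lem:boundeddecreaseinenergyparity} is essential — it guarantees that from \emph{any} position along a $\sigma$-consistent play the total future descent before the weight recovers is less than~$T$, so throwing away energy above~$T$ is always safe. Care is also needed to confirm that the finite-state strategy~$\sigma$ of Lemma~\ref{lem:boundeddecreaseinenergyparity}/Proposition~\ref{prop:energyparity:equivalence} is the one being simulated (so the hypotheses of the pumping lemma apply), and that the color-$1$ edge vertices genuinely cannot be the maximal color seen infinitely often — both follow from the w.l.o.g.\ assumption that $\col(V)$ avoids $0$ and $1$, together with the $-1$ weight forcing departure from each~$e$.
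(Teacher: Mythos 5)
Your ``$\Rightarrow$'' direction is essentially the paper's proof: take a finite-state winning strategy~$\sigma$ from Proposition~\ref{prop:energyparity:equivalence}, mimic it on the $V$-vertices, and at each edge vertex dump energy down to a cap derived from Lemma~\ref{lem:boundeddecreaseinenergyparity} (the paper caps at $Wns$ where you cap at $(ns-1)W+1$; the justification --- the lemma bounds the worst possible future descent of any $\sigma$-consistent infix, so energy above the cap is never needed --- is the same). You also correctly isolate the delicate point, namely that capping must never retroactively cause a later violation, and resolve it the same way the paper does, by splitting prefixes into those before and those after the cap is first reached.

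The ``$\Leftarrow$'' direction contains a genuine misstep. You define the lift of a $\sigma$-consistent play to be the play in $\arena'$ obtained by passing through every edge vertex \emph{without} looping, and then treat this lift as consistent with~$\sigma'$. It need not be: the edge vertices belong to Player~$0$ in~$\arena'$, so $\sigma'$ itself prescribes how often to loop there, and a history that omits loops $\sigma'$ would have taken is simply not consistent with~$\sigma'$ --- hence neither the amplitude bound~$b$ nor the parity guarantee applies to it, and the values $\sigma'$ returns on such histories carry no guarantee at all. (Your own sentence contradicts itself on this point: first the energy levels ``differ by the nonpositive contributions of the self-loops actually taken,'' then they ``coincide \ldots since $\rho'$ never uses a self-loop.'') The repair is the first half of that same sentence, and it is what the paper does: define the simulation function so that at each edge vertex the lifted prefix takes exactly the number~$m$ of self-loops that $\sigma'$ prescribes (this~$m$ is finite, since looping forever would violate the corridor-winning property via the $-1$ loop weights), and then use only the inequality $\weight(\pi) \ge \weight'(h(\pi))$ --- the loops only subtract weight --- together with $\ampl(h(\pi)) \le b$ to conclude that the energy level in~$\game$ never drops below~$-b$. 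The parity transfer is unaffected by this change.
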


\begin{proof}
We first show the direction from left to right.
To this end, assume that Player~$0$ wins~$\game$ from~$v$.
Due to Proposition~\ref{prop:energyparity:equivalence}, he has a finite-state winning strategy~$\sigma$ for $\game$ from $v$, say of size~$s$.
Furthermore, there is an initial credit~$c_0$ such that every play prefix~$\pi$ that starts in $v$ and is consistent with $\sigma$ satisfies $\weight(\pi) \ge -c_0$. Finally, define $b = Wns$, where $n$ and $W$ denote the number of vertices and the largest absolute weight occurring in~$\game$, respectively.

We define a strategy~$\sigma'$ for Player~$0$ in $\game'$ such that it mimics the behavior of $\sigma$ and additionally ensures that the energy level of a play prefix never exceeds $b$ by more than~$W$. Formally, consider a play prefix~$\pi'$ in $\game'$ starting in $v$. If~$\pi'$ ends in some $v' \in V$, then we define $\sigma'(\pi) = (v', \sigma(f(\pi)))$ where $f \colon {(V\cup E)}^* \cup {(V\cup E)}^\omega \rightarrow V^* \cup V^\omega$  is the homomorphism induced by $f(v) = v$ and $f(e) = \epsilon$. On the other hand, assume $\pi'$ ends in some $e = (v',v'') \in E$. If $\weight(\pi') \le b$, then we define $\sigma'(\pi') = v''$, otherwise, we define $\sigma'(\pi') = e$, i.e., the self-loop is used until the energy level of the play prefix is exactly $b$. This completes the definition of $\sigma'$.

Now, consider a play~$\rho'$ in $\game'$ that starts in $v$ and is consistent with $\sigma'$. By definition of~$\sigma'$, $\rho'$ visits infinitely many vertices in $V$. Hence, by construction of $\arena$, $\rho = f(\rho')$ is a play in $\game$ that starts in $v$. Further, $\rho$ is consistent with $\sigma$, as $\sigma'$ mimics $\sigma$. Hence, $f(\rho')$ satisfies the parity condition. As the vertices removed from $\rho'$ all have color one, and as all colors in~$\col(V)$ are greater than one, we conclude that $\rho'$ satisfies the parity condition as well.

To conclude, we show that every prefix~$\pi'$ of $\rho'$ satisfies $-c_0 \le \weight'(\pi') \le b + W$. This implies that $\sigma'$ is indeed a corridor-winning strategy from $v$. The upper bound~$b+W$ is satisfied by construction of $\sigma'$: As soon as the weight exceeds $b$, it is decreased to $b$ by the strategy. As this correction happens after each transition, the bound~$b$ can be exceeded by at most $W$, the largest absolute weight of an edge.

To conclude, we consider two cases: first, assume $\rho'$ has no prefix whose energy level exceeds~$b$, then we have $\weight'(\pi') = \weight(f(\pi)) \ge -c_0$ for every prefix~$\pi'$ of $\rho'$.
Second, assume that~$\rho'$ has at least one prefix whose weight exceeds~$b$.
Let~$\pi$ be the shortest such prefix.
For every prefix~$\pi'$ shorter than~$\pi$ we obtain~$-c_0 \leq \weight'(\pi') \leq b+W$ via the above argument.
We show that every prefix longer than~$\pi$ has nonnegative weight, which concludes the proof.

Towards a contradiction, assume that there is a longer suffix with negative weight.
Then there is an infix of $\rho'$ of weight strictly smaller than $-b$, such that Player~$0$ never uses a self-loop in $\arena'$ to throw away energy.
Hence, $f(\rho)$ also has an infix with weight strictly smaller than~$-b$.
This, however, contradicts Lemma~\ref{lem:boundeddecreaseinenergyparity} and thus concludes the proof of this direction.

We now show the other direction of the lemma.
To this end, let $\sigma'$ be a corridor-winning strategy for Player~$0$ in $\game'$ from $v$. Further, let $f$ be defined as above.

We define a strategy~$\sigma$ for Player~$0$ from $v$ in $\game$ that is obtained by simulating play prefixes in~$\game'$. To this end, we again use a simulation function~$h$ that maps a play prefix~$v_0 \cdots v_j$ in~$\game$ that starts in $v$ and is consistent with $\sigma$ to a play prefix~$h(v_0 \cdots v_j)$ in $\game'$ that starts in $v$, is consistent with $\sigma'$, and ends in $v_j$.

Hence, we define $h(v) =v$. Now, assume we have a play prefix~$v_0 \cdots v_j$ in $\game$ that starts in $v$ and is consistent with $\sigma$. From our construction, we obtain a play prefix~$h(v_0 \cdots v_j)$ in $\game'$ that starts in $v$, is consistent with $\sigma'$, and ends in $v_j$. If $v_j \in V_0 \subseteq V_0'$, then let $\sigma'(h(v_0 \cdots v_j)) = (v_j,v_{j+1})$. We define $\sigma(v_0 \cdots v_j) = v_{j+1}$, which is a legal move due to the construction of $\arena'$.
If $v_j \in V_1$, then let $v_{j+1}$ be an arbitrary successor of $v_j$ in $\arena$.

In both cases, we have to define $h(v_0 \cdots v_j v_{j+1})$. As $\sigma'$ is a corridor-winning strategy for Player~$0$ from $v$ in $\game'$, there is a unique play of the form~$h(v_0 \cdots v_j) {(v_j, v_{j+1})}^m v_{j+1}$ that is consistent with $\sigma'$. We define $h(v_0 \cdots v_{j}v_{j+1})$ to be equal to this play, which satisfies the properties required above.

Let $b$ be the uniform bound on the amplitude of plays in $\game'$ consistent with $\sigma'$ starting in $v$. Now, fix a play $\rho$ in $\game$ starting in $v$ and consistent with $\sigma$. Furthermore, let $\rho'$ be the limit of the $h(\pi)$ for increasing prefixes of $\rho$. By construction, $\rho'$ starts in $v$ as well and is consistent with $\sigma'$. Hence, $\rho'$ visits infinitely many vertices from $V$ and never gets stuck in a self-loop throwing away energy. This implies $f(\rho') = \rho$. Furthermore, as $\rho'$ satisfies the parity condition, $\rho$ does as well: the colors removed by applying $f$ are inconsequential in this situation.

Let $\pi_j$ be the prefix of length~$j$ of $\rho$. A straightforward induction proves that the energy level of $\pi_j$ is greater or equal to that of $h(\pi_j)$. As the latter is bounded from below by $b$, we conclude that $\sigma$ is winning for Player~$0$ in $\game$ from $v$ with initial credit~$b$.
\end{proof}

Now, we turn $\game'$ into a bounded parity game with weights.
In such a game, the cost-of-response of every request has to be bounded, but the overall energy level of the play may still diverge to $-\infty$.
To rule this out, we open one unanswerable request at the beginning of each play, which has to be unanswered with finite cost in order to satisfy the bounded parity condition with weights.
If this is the case, then the energy level of the play is always in a bounded corridor, i.e., we obtain a corridor-winning strategy.

Formally, for every vertex~$v \in V$, we add a vertex~$\overline{v}$ to $\arena'$ of an odd color~$c^*$ that is larger than every color in $\col(V)$, i.e., the request can never be answered.
Furthermore,~$\overline{v}$ has a single outgoing edge to $v$ of weight~$0$, i.e., it is irrelevant whose turn it is.
Call the resulting arena~$\arena''$, the resulting coloring~$\col''$, and the resulting weighting~$\weight''$, and let $\game''=(\arena'', \bcp(\col'', \weight''))$.

\begin{lem}%
\label{lem:cep:vs:bpw}
Let $\game'$ and $\game''$ be as above and let $v \in V$.
 Player~$0$ has a corridor-winning strategy for $\game'$ from $v$
if and only if  $\overline{v} \in \winreg_0(\game'')$.
\end{lem}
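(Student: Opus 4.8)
The plan is to prove the two directions of the equivalence separately, in both cases exploiting that $\game''$ differs from $\game'$ only by prepending a single, permanently unanswered request of a fresh large odd color $c^*$, and that plays in $\game''$ from $\overline{v}$ are exactly plays in $\game'$ from $v$ with one extra step at the front.

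For the direction from left to right, suppose Player~$0$ has a corridor-winning strategy $\sigma'$ for $\game'$ from $v$, with uniform amplitude bound $b$. I would lift $\sigma'$ to a strategy $\sigma''$ in $\game''$ that, after the forced initial move $\overline{v} \to v$, simply plays $\sigma'$. Any play $\rho''$ consistent with $\sigma''$ from $\overline{v}$ has the form $\overline{v}\rho'$ where $\rho'$ is consistent with $\sigma'$ from $v$; since the initial edge has weight $0$, the weight of every prefix of $\rho''$ equals the weight of the corresponding prefix of $\rho'$, so $\ampl(\rho'') = \ampl(\rho') \le b$. Then I verify the two components of $\bcp(\col'', \weight'')$: first, $\rho''$ satisfies the classical parity condition because $\rho'$ does and because $c^*$ is odd but only occurs once (at the first position), so it does not affect the maximal color seen infinitely often; second, every request in $\rho''$ is answered or unanswered with \emph{finite} cost — requests inherited from $\rho'$ are answered with cost at most $b$ (the play satisfies the parity condition with weights because the amplitude is globally bounded), and the initial request of color $c^*$ is never answered but has cost-of-response exactly $\ampl(\rho'') \le b < \infty$. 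Hence $\rho'' \in \bcp(\col'',\weight'')$, so $\overline{v} \in \winreg_0(\game'')$.

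For the converse, suppose $\overline{v} \in \winreg_0(\game'')$, witnessed by a winning strategy $\sigma''$. I would extract the induced strategy $\sigma'$ on $\game'$ from $v$ (ignoring the irrelevant first vertex $\overline{v}$, whose move is forced anyway). Take any play $\rho'$ from $v$ consistent with $\sigma'$; then $\overline{v}\rho'$ is consistent with $\sigma''$, hence lies in $\bcp(\col'',\weight'')$. The parity condition transfers back to $\rho'$ as before. The key point is the amplitude bound: since the request of color $c^*$ opened at the first position of $\overline{v}\rho'$ is never answered, membership in $\bcp$ forces it to be unanswered with \emph{finite} cost, i.e. $\ampl(\overline{v}\rho') < \infty$; combined with $\ampl(\overline{v}\rho') = \ampl(\rho')$ (the first edge has weight $0$), we get that $\ampl(\rho')$ is finite. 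This gives a per-play bound, but corridor-winning requires a \emph{uniform} bound $b$ over all plays consistent with $\sigma'$. To obtain uniformity I would invoke that Player~$0$ can be assumed to use a finite-state winning strategy $\sigma''$ (the winning condition $\bcp$ is $\omega$-regular-style/Borel and, more concretely, one can appeal to the finite-memory winning strategies for Player~$0$ established for bounded parity games with weights via Corollary~\ref{cor:bounded-parity-with-weights:uniform-bound}, which gives an explicit uniform bound on the cost of every request, in particular on the never-answered initial request, and hence on the amplitude of every consistent play).

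I expect the main obstacle to be exactly this last point: passing from "every individual consistent play has finite amplitude" to "there is one bound $b$ working for all consistent plays." A naive argument that takes a supremum over plays fails because that supremum could be infinite even if each play is individually bounded; indeed a memoryless or careless strategy could allow plays of unbounded amplitude. The clean fix is to use a finite-memory (or positional-up-to-memory) winning strategy for Player~$0$ and run a pumping argument: if arbitrarily large amplitudes were achievable against a size-$s$ finite-state strategy, one could pump a repeated memory-and-vertex configuration to build a consistent play of infinite amplitude, contradicting that the never-answered $c^*$-request must have finite cost. This is entirely parallel to the pumping arguments already used in Lemma~\ref{lem:boundeddecreaseinenergyparity} and Lemma~\ref{lem:parity-games-weights:cost-upper-bound}, so it should go through smoothly once the right finite-memory strategy is in hand; everything else (the bookkeeping about the weight-$0$ initial edge and the lone odd color $c^*$) is routine.
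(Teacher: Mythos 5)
Your proof is correct and follows essentially the same route as the paper: both directions lift/restrict the strategy across the forced weight-zero initial edge, and the uniformity issue you correctly identify in the right-to-left direction is resolved by invoking Corollary~\ref{cor:bounded-parity-with-weights:uniform-bound}, which is exactly what the paper does. (A terminological nit only: by the paper's definition the cost-of-response~$\Cor$ of the never-answered $c^*$-request is~$\infty$; the relevant notion is that it is \emph{unanswered with finite cost}, which is what your amplitude bound actually establishes.)
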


\begin{proof}
We again first show the direction from right to left.
To this end, let $\sigma'$ be a corridor-winning strategy for Player~$0$ in $\game'$ from $v$.
Further, let~$b$ be the corresponding uniform bound on the amplitude of plays that start in $v$ and are consistent with $\sigma'$.
We define a strategy~$\sigma''$ for Player~$0$ from~$\overline{v}$ via $\sigma''(\overline{v} \pi) = \sigma'(\pi)$.

Let $\overline{v}\rho$ be a play that is consistent with $\sigma''$. By construction, $\rho$ starts in $v$ and is consistent with $\sigma'$. Hence, it satisfies the parity condition and its amplitude is bounded by~$b$. Thus, almost all requests in $\rho$ are answered with cost at most $b$ and there is no unanswered request of infinite cost. This implies that $\overline{v}\rho$ satisfies the bounded parity condition with weights.
Hence, $\overline{v} \in \winreg_0(\game'')$.

We now show the obverse direction of the lemma.
To this end, let $\sigma''$ and~$b$ be a winning strategy for Player~$0$ in $\game''$ from $\overline{v}$ and a bound such that every request in a play starting in $\overline{v}$ and consistent with $\sigma''$ is answered or unanswered with cost at most $b$. Due to Corollary~\ref{cor:bounded-parity-with-weights:uniform-bound}, such a strategy~$\sigma''$ exists. We define a strategy~$\sigma'$ for Player~$0$ from~$v$ in~$\game'$ via $\sigma'( \pi) = \sigma''(\overline{v}\pi)$.

Let $\rho$ be a play starting in $v$ that is consistent with $\sigma'$. By construction, $\overline{v}\rho$ is consistent with $\sigma''$. Hence, $\overline{v}\rho$ satisfies the parity condition and every request is answered or unanswered with cost at most $b$. In particular, this holds true for the unanswered request posed by visiting $\overline{v}$. Hence, the amplitude of $\overline{v}\rho$ (and thus also that of $\rho$) is bounded by $b$.

Thus, $\rho$ satisfies the parity condition and the energy level of all its prefixes is between~$-b$ and~$b$. As~$\rho$ is picked arbitrarily, we have that~$\sigma'$ is corridor-winning from $v$.
\end{proof}


\subsection{From Bounded Parity Games with Weights to Parity Games with Weights}%
\label{subsec:ep2(b)pw:bpw2pw}
Next, we show how to turn a bounded parity game with weights into a parity game with weights so that solving the latter also solves the former. The construction here uses the same restarting mechanism that underlies the proof of Lemma~\ref{lem:parity-to-bounded-parity:reduction}: as soon as a request has incurred a cost of $b$, restart the play and enforce a request of cost $b+1$, and so on.
Unlike the proof of Lemma~\ref{lem:parity-to-bounded-parity:reduction}, however, where Player~$1$ could restart the play at any vertex, here we always have to return to a fixed initial vertex we are interested in.
While resetting, we have to answer all requests in order to prevent Player~$1$ to use the reset to prevent requests from being answered.
Assume $v^* \in V$ is the initial vertex we are interested in.
Then, we subdivide every edge in to allow Player~$1$ to restart the play by answering all open requests and then moving back to $v^*$.

Formally, fix a bounded parity game with weights~$\game = (\arena, \bcp(\col, \weight))$ with $\arena = (V, V_0, V_1, E)$ and a vertex~$v^* \in V$. We define the parity game with weights~$\game_{v^*} = (\arena_{v^*}, \cp(\col_{v^*}, \weight_{v^*}))$ with $\arena_{v^*} = (V', V_0', V_1', E')$ where
\begin{itemize}
	\item $V' = V \cup E \cup \set{\top}$, $V_0' = V_0$, and $V_1' = V_1 \cup E  \cup \set{\top}$,
	\item $E' = \set{ (v,e), (e,\top), (e,v') \mid e = (v,v') \in E } \cup \set{(\top, v^*)}$,
	\item $\col_{v^*}(v) = \col(v)$, $\col_{v^*}(e) = 0$ for every~$e \in E$, and $\col_{v^*}(\top) = 2\max( \col(V))$, and
	\item $\weight_{v^*}(v,e) = \weight(e)$ for $(v,e) \in V \times E$ and $\weight_{v^*}(e') = 0$ for every other edge~$e' \in E'$.
\end{itemize}

\begin{lem}%
\label{lem:bpw:vs:pw}
Let $\game$ and $\game_{v^*}$ as above. Then, $v^* \in \winreg_0(\game)$ if and only if $v^* \in \winreg_0(\game_{v^*})$.
\end{lem}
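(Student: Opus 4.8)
The plan is to prove the two directions separately, in each case lifting a winning strategy across the gadget construction. The construction subdivides every edge $e = (v,v')$ of $\arena$ into $v \to e \to \top \to v^*$ and $v \to e \to v'$, where $\top$ has a very large even color and every newly introduced vertex other than $\top$ carries color $0$ and weight $0$ on its outgoing edges (the incoming edge $(v,e)$ keeps the original weight $\weight(e)$). Thus a play in $\arena_{v^*}$ either eventually settles into a play of $\arena$ (Player~$1$ always chooses $e \to v'$ from some point on) or it passes through $\top$ infinitely often, which restarts the play at $v^*$ after clearing all open requests (since $\col_{v^*}(\top)$ exceeds twice the maximal original color). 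Note that visiting $\top$ answers every pending request with cost~$0$ at that moment, as the edges into and out of $\top$ have weight~$0$.

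\textbf{From $\game$ to $\game_{v^*}$.}
Assume $v^* \in \winreg_0(\game)$. By Corollary~\ref{cor:bounded-parity-with-weights:uniform-bound}, Player~$0$ has a winning strategy $\sigma$ from $v^*$ in $\game$ under which, along every consistent play, every request is answered or unanswered with cost at most some fixed bound $b$. I would define $\sigma'$ in $\game_{v^*}$ to mimic $\sigma$: after the (finitely many) detours through subdivision vertices are projected away via the homomorphism $f$ erasing $E$-vertices and $\top$, respond as $\sigma$ would to the resulting $\arena$-play prefix, and upon returning to $v^*$ via $\top$, discard history and restart playing $\sigma$ from $v^*$. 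Given a play $\rho'$ consistent with $\sigma'$, there are two cases. If $\top$ is visited infinitely often, then $\col_{v^*}(\top)$ is the maximal color seen infinitely often and it is even, so $\rho'$ satisfies the parity condition; moreover each request is answered upon the next visit to $\top$, and between two consecutive visits to $\top$ the relevant weight segment is bounded by $b$ (plus the bounded contribution of the finite stretch governed by $\sigma$), so the cost-of-response of every request is bounded, giving $\rho' \in \cp(\col_{v^*},\weight_{v^*})$. If $\top$ is visited only finitely often, then $\rho'$ has a suffix that, after projecting away the subdivision vertices, is a play $\rho$ in $\arena$ consistent with $\sigma$; since those erased vertices all have color $0$ (which is dominated by the original colors, assumed $\ge 2$ w.l.o.g.\ as in Section~\ref{subsec:bounded-parity-weights-solve:bpw2ep}) and contribute weight $0$, the cost-of-response and parity behaviour of $\rho$ carries over to $\rho'$; as $\rho \in \bcp(\col,\weight) \subseteq \cp(\col,\weight)$ and $\cp$ is $0$-extendable, $\rho' \in \cp(\col_{v^*},\weight_{v^*})$. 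Hence $v^* \in \winreg_0(\game_{v^*})$.

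\textbf{From $\game_{v^*}$ to $\game$.}
Conversely, assume $v^* \in \winreg_0(\game_{v^*})$ with winning strategy $\sigma'$. I would show that the play consistent with $\sigma'$ against a Player~$1$ strategy that \emph{never} triggers a reset (always choosing $e \to v'$) cannot see $\top$, hence projects to a genuine play of $\arena$; and because $\sigma'$ wins $\cp(\col_{v^*},\weight_{v^*})$, there is a play-dependent bound $b$ on the cost-of-response of almost all requests in $\rho'$, and in fact on \emph{all} of them for large enough position (only finitely many unanswered requests of finite cost are allowed, and no unanswered request of infinite cost). Translating this back: define $\sigma$ in $\game$ by simulating play prefixes in $\game_{v^*}$ (following $\sigma'$ while Player~$1$ declines to reset). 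Any $\rho$ consistent with $\sigma$ lifts to $\rho'$ consistent with $\sigma'$ that avoids $\top$, so $\rho'$ has all but finitely many requests answered, only finitely many unanswered, and no request unanswered with infinite cost; since the erased color-$0$ vertices are irrelevant and carry weight $0$, this transfers to $\rho$, giving $\rho \in \bcp(\col,\weight)$. Hence $v^* \in \winreg_0(\game)$.

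\textbf{Main obstacle.}
The delicate point is the second direction: one must argue that whenever Player~$1$ opts to reset infinitely often, $\sigma'$ still wins — but the reset branch is \emph{good} for Player~$0$ (it clears requests and yields the big even color), so this only helps. The real subtlety is the converse extraction: showing that $\sigma'$ winning $\cp$ in $\game_{v^*}$ forces, in the reset-free play, not merely that the $\limsup$ of costs is finite but that there is genuinely \emph{no} request unanswered with infinite cost and only finitely many unanswered at all — i.e., that the $\cp$-win over the modified arena actually certifies the strictly stronger $\bcp$-condition over $\arena$. This works precisely because in a reset-free play the erased vertices have weight $0$, so an unanswered request with infinite cost in $\arena$ would remain one in $\game_{v^*}$, contradicting $\rho' \in \cp(\col_{v^*},\weight_{v^*})$; making this bookkeeping airtight (including the handling of the finitely many legitimately-unanswered requests) is the crux.
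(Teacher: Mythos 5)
Your left-to-right direction is essentially the paper's argument (restart $\sigma$ after each visit to $\top$, use the uniform bound from Corollary~\ref{cor:bounded-parity-with-weights:uniform-bound}, case split on finitely vs.\ infinitely many visits to $\top$) and is fine. The right-to-left direction, however, has a genuine gap, and it sits exactly at the point you identify as ``the crux.'' You claim that if the reset-free play $\rho$ in $\arena$ had a request unanswered with infinite cost, then the lifted play $\rho'$ would contain such a request too, ``contradicting $\rho' \in \cp(\col_{v^*},\weight_{v^*})$.'' This is not a contradiction: $\cp$ is defined via $\limsup_{j\to\infty}\Cor(\rho',j) \in \nats$, which ignores finitely many positions, so a play with \emph{one} (or any finite number of) request(s) unanswered with infinite cost, and all remaining requests answered with uniformly bounded cost, satisfies $\cp$ perfectly well. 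The clause ``no request is unanswered with infinite cost'' is precisely the extra strength of $\bcp$ over $\cp$; your parenthetical ``(\dots and no unanswered request of infinite cost)'' attributes to $\cp$ a property it does not have. Consequently, a strategy $\sigma'$ winning $\game_{v^*}$ (a $\cp$-game) may admit a reset-free play whose projection violates $\bcp$, and your direct extraction of a $\bcp$-winning strategy for $\game$ does not go through.

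The paper closes this gap by arguing the contrapositive with a determinacy argument: if Player~$1$ wins $\game$ from $v^*$ with strategy $\tau$, one builds $\tau'$ in $\game_{v^*}$ equipped with a counter $\kappa$ that mimics $\tau$ but triggers a reset (moves to $\top$) whenever some open request has accumulated amplitude exceeding $\kappa$, incrementing $\kappa$ each time. If resets happen infinitely often, each reset \emph{answers} the offending request at ever-growing cost, so the $\limsup$ of costs diverges and $\cp$ fails; if resets happen finitely often, the final reset-free suffix is consistent with $\tau$ and has bounded amplitude, so $\tau$ can only win by violating the parity condition, which transfers to $\rho'$. This reset-to-convert-infinite-cost-into-growing-answered-cost step is the idea your proposal is missing; without it (or some equivalent use of determinacy), the $\cp$-to-$\bcp$ transfer in your second direction does not hold.
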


\begin{proof}
 First, we show the direction from left to right. To this end, let $\sigma$ be a winning strategy for Player~$0$ for $\game$ from~$v^*$ such that there is a $b$, such that every request in a play that starts in $v^*$ and is consistent with $\sigma$ is answered or unanswered with cost at most $b$.
Due to Corollary~\ref{cor:bounded-parity-with-weights:uniform-bound}, such a strategy~$\sigma$ exists.

We define a winning strategy~$\sigma'$ for Player~$0$ from~$v^*$ in $\game'_{v^*}$ as follows:
Given a play or a play prefix~$\pi'$ in $\arena_{v^*}$ that does not end in vertex~$\top$, let $\suffix(\pi')$ be the longest suffix of $\pi'$ that does not contain~$\top$.
Hence, if $\pi'$ starts  in $v^*$, then $\suffix(\pi')$ starts in $v^*$ as well, as $v^*$ is the unique successor of~$\top$.
Further, let $f\colon {(V \cup E)}^* \cup {(V \cup E)}^\omega \rightarrow   V^* \cup V^\omega$ be the homomorphism induced by~$f(v) = v$ for $v \in V$ and $f(e) = \epsilon$ for $e \in E$.
Now, if~$\pi'$ is a play (prefix) in $\arena_{v^*}$ that does not visit~$\top$, then $f(\pi')$ is a play (prefix) in $\arena$ of the same weight that induces the same sequence of colors (save for the occurrences of the inconsequential minimal color zero at the vertices from~$E$ that are deleted by~$f$).

Let $\pi'$ be a play prefix in $\arena_{v^*}$ that ends in a vertex $v \in V_0' = V_0 $.
We define $\sigma'(\pi') = (v, \sigma(f(\suffix(\pi'))))$ and show that $\sigma'$ is winning for Player~$0$ in $\game_{v^*}$ from $v^*$.
To this end, let $\rho'$ be a play in $\arena_{v^*}$ starting in $v^*$ that is consistent with $\sigma'$.
We consider two cases, depending on whether or not the play~$\rho'$ visits the vertex~$\top$ infinitely often.

If~$\rho'$ visits~$\top$ only finitely often, then~$\suffix(\rho')$ is an infinite play in~$\game'_{v^*}$ starting in~$v^*$.
By definition of~$f$ and construction of~$\sigma'$, the infinite play~$f(\suffix(\rho'))$ in $\arena$ starts in~$v^*$ and is consistent with~$\sigma$.
Hence, $f(\suffix(\rho'))$ satisfies the bounded parity condition with weights.
Since this condition strengthens the parity condition with weights and since the latter condition is~$0$-extendable, we conclude that $f(\rho')$ satisfies the parity condition with weights as well.
This, in turn, implies that the complete play~$\rho'$ satisfies the parity condition with weights due to the construction of $\arena_{v^*}$

Now, assume that the play~$\rho'$ visits $\top$ infinitely often.
Then,~$\rho'$ is of the form $\pi_0' \top \pi_1' \top \pi_2' \top \cdots$, where none of the $\pi_j'$ visits~$\top$.
Hence, by definition of~$\arena'$,~$f$, and~$\sigma'$, each play prefix~$f(\pi_j')$ in~$\arena$ starts in~$v^*$ and is consistent with~$\sigma$.
Furthermore, every request in each~$\pi_j'$ is answered by the next visit to the vertex~$\top$ at the latest, i.e.,~$\rho'$ satisfies the parity condition.
Thus, it suffices to show that the cost-of-response of all requests in~$\rho'$ is bounded.
This follows immediately from the fact that $\sigma$ only admits answered or unanswered requests of cost at most~$b$ when starting in~$v^*$ and that each $f(\pi_j')$ starts in~$v^*$ and is consistent with~$\sigma$.
This property is inherited by the $\pi_j'$ due to the construction of $\arena_{v^*}$.
Thus,~$\rho'$ satisfies the parity condition with weights, i.e., $\sigma'$ is indeed winning for Player~$0$ from~$v^*$.

To show the direction from right to left, we proceed by contraposition. Due to the determinacy of both games, it suffices to show that $v^* \in \winreg_1(\game)$ implies  $v^* \in \winreg_1(\game_{v^*})$. Hence, let $\tau$ be a winning strategy for Player~$1$ in $\game$ from $v$. Further, let $\suffix$ and $f$ be defined as above.

Now, we define a strategy~$\tau'$ for Player~$1$ from $v^*$ in $\game_{v^*}$ that is controlled by a counter~$\kappa$, which is initialized with zero, and which is incremented during a play every time the costs of some request exceed~$\kappa$. We construct our strategy such that each time $\kappa$ is updated, Player~$1$ restarts the play by moving to $\top$ and then to $v^*$.

Assume we have a play prefix~$\pi'$ in $\arena_{v^*}$ that ends in a vertex of Player~$1$ and have to define $\tau'(\pi')$. We consider several cases depending on the last vertex of $\pi'$. If $\pi'$ ends in $\top$, then we define $\tau'(\pi')= v^*$, which is the only successor of $\top$.

If $\pi'$ ends in $v \in V_1 \subseteq V_1'$, then we define $\tau'(\pi') = (v, \tau(f(\suffix(\pi'))))$, i.e., we discard everything up to and including the last occurrence of $\top$. Finally, if $\pi'$ ends in $e = (v,v') \in E \subseteq V_1'$, then we consider two cases. Let $\kappa$ be the current counter value. If $\suffix(\pi')$ contains a request such that the remaining part of $\pi'$ that starts at this request has amplitude greater than $\kappa$, then we define $\tau'(\pi') = \top$ and increment $\kappa$. Otherwise, we define $\tau'(\pi') = v'$ and leave $\kappa$ unchanged.

It remains to show that $\tau'$ is winning in $\game_{v^*}$ from $v^*$. To this end, let $\rho'$ be a play in $\game_{v^*}$ that starts in ${v^*}$ and is consistent with $\tau'$. If $\rho'$ visits $\top$ infinitely often, then $\rho'$ contains, for every $b \in \nats$, a (different) request that is answered or unanswered with cost at least $b$. Hence, $\rho'$ violates the parity condition with costs.

Finally, if $\rho'$ visits $\top$ only finitely often, then there is a $b \in \nats$ (the final value of $\kappa$, which is incremented only finitely often in this case) such that every request in $\rho'$ is answered or unanswered with cost at most $b$. Furthermore, let $\rho$ be the suffix of $\rho'$ that starts after the last occurrence of $\top$. As in the previous case, $f(\rho)$ is a play in $\arena$ that starts in $v^*$ and is consistent with $\tau$. As $\rho$ and $f(\rho)$ have essentially the same evolution of the weights (save for the removed edges of weight zero) and the same color sequence (save for the removed vertices of color zero), every request in $f(\rho)$ is answered or unanswered with cost at most $b$. However, as $\rho$ is consistent with $\tau$, it violates the bounded parity condition with weights. This is, in this situation, only possible by violating the parity condition. Hence $\rho$, and thus also $\rho'$, violates the parity condition as well. Therefore, $\rho'$ in particular violates the parity condition with weights.

In both cases, $\rho'$ is winning for Player~$1$, i.e., $\tau'$ has the desired properties.
\end{proof}


\subsection{Relation to Mean-Payoff Parity Games}%
\label{subsec:bounded-parity-weights-solve:mppg}
Recall that we have shown in Section~\ref{sec:parity-weights-solve} and Section~\ref{sec:bounded-parity-weights-solve} how to solve parity games with weights by solving polynomially many energy parity games of polynomial size.
Subsequently, in Section~\ref{subsec:ep2(b)pw:ep2bpw} and Section~\ref{subsec:ep2(b)pw:bpw2pw} we have shown the converse direction, i.e., how to solve energy parity games by solving polynomially many parity games with weights, again of polynomial size.
In summary, we have shown the problem of solving parity games with weights and that of solving energy parity games with weights to be polynomial time equivalent.

Due to this strong connection between the two games, we moreover obtain a connection to another widely used class of games, so-called mean-payoff parity games as introduced by Chatterjee, Henzinger, and Jurdzi{\'n}ski~\cite{ChatterjeeHenzingerJurdzinski05}.
A mean-payoff parity game is played on a colored arena with weights.
It is the task of Player~$0$ to not only satisfy the parity condition induced by the coloring, but also to ensure that the average weight of the traversed edges is nonnegative.
Formally, given an arena with vertex set~$V$ and set of edges~$E$, a coloring~$\col$ of~$V$, and a weight function~$\weight$ over~$E$, the mean-payoff parity condition is defined as
\[
	\mpp(\col, \weight) =
	\set{v_0v_1v_2\cdots \in V^\omega \mid \liminf_{j \rightarrow \infty} \frac{1}{j} \weight(v_0 \cdots v_j) \geq 0 }
	\cap \parity(\col) \enspace .
\]
A game~$(\arena, \mpp(\col, \weight))$ is called a mean-payoff parity game.

Chatterjee and Doyen~\cite{ChatterjeeDoyen12} showed that the problem of solving energy parity games and that of solving mean-payoff parity games are logarithmic space equivalent.
\begin{propC}[\cite{ChatterjeeDoyen12}]%
\label{prop:mean-payoff-parity}
Let~$\game = (\arena, \mpp(\col, \weight))$ be a mean-payoff parity game with~$n$ vertices and let~$\game' = (\arena, \energyparity(\col,\weight'))$, where $\weight'(e) = \weight(e) + \frac{1}{1+n}$ for all edges~$e$ of~$\arena$.
Then~$\winreg_0(\game) = \winreg_0(\game')$.
\end{propC}

Hence, we can use the techniques underlying Theorem~\ref{thm:bounded-parity-with-weights:complexity} to reduce the problem of solving parity games with weights to the problem of solving mean-payoff parity games.
In fact, it is Proposition~\ref{prop:mean-payoff-parity} that underpins the proof of the second part of Proposition~\ref{prop:energyparity:complexity}, i.e., the proof that energy parity games can be solved in pseudo-quasi-polynomial time.

\begin{cor}
	The following decision problems are polynomial time equivalent:

	\begin{quotation}
	Given a parity game with weights~$\game$ and a vertex~$v$ of~$\game$, does Player~$0$ have a winning strategy from~$v$ in~$\game$?
\end{quotation}

\begin{quotation}
	Given a mean-payoff parity game~$\game$ and a vertex~$v$ of~$\game$, does Player~$0$ have a winning strategy from~$v$ in~$\game$?
\end{quotation}
\end{cor}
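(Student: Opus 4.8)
The plan is to chain together the reductions already established in Sections~\ref{sec:parity-weights-solve}, \ref{sec:bounded-parity-weights-solve}, and~\ref{sec:ep2(b)pw} of this paper with the known equivalence between energy parity games and mean-payoff parity games due to Chatterjee and Doyen~\cite{ChatterjeeDoyen12}. Throughout, ``solving'' means computing the winning regions, from which the vertex-wise membership question asked in the corollary follows immediately (and vice versa), so it suffices to establish polynomial-time interreducibility of the solving problems.

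First I would record that solving parity games with weights is polynomial-time equivalent to solving energy parity games. The reduction from parity games with weights to energy parity games is witnessed by Algorithm~\ref{algorithm_fixpoint} together with Algorithm~\ref{algorithm_fixpoint2}: on input a parity game with weights of size $N$, the combined procedure issues at most polynomially many oracle queries, each to an energy parity game of size polynomial in $N$, since the games $(\arena_{k-1})_{v^*}$ have $\bigo(N^2)$ vertices and largest absolute weight at most $W+1$ (binary encoding is respected, as $\log(W+1)$ is polynomial in $\log W$). Conversely, Lemmas~\ref{lem:ep:vs:cep}, \ref{lem:cep:vs:bpw}, and~\ref{lem:bpw:vs:pw} compose to a polynomial-time reduction in the other direction: each of the three gadget constructions only subdivides edges and adds polynomially many new vertices and colors, and the winning region of the source game is recovered from a single designated vertex of the target game. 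Hence the two solving problems are interreducible with polynomial overhead.

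Second I would invoke the equivalence between energy parity games and mean-payoff parity games. For the direction from mean-payoff parity games to energy parity games, Proposition~\ref{prop:mean-payoff-parity} provides, for a mean-payoff parity game $\game = (\arena, \mpp(\col, \weight))$ with $n$ vertices, an energy parity game with rational weights $\weight(e) + \tfrac{1}{1+n}$ having the same winning region; multiplying every weight by $1+n$ yields an integer-weighted energy parity game of the form used throughout this paper, of size polynomial in that of $\game$. This rescaling preserves all winning regions, since scaling every edge weight by a fixed positive integer changes neither the parity condition nor (after rescaling the initial credit by the same factor) the energy condition. For the converse direction, Chatterjee and Doyen~\cite{ChatterjeeDoyen12} showed that solving energy parity games reduces, even in logarithmic space, to solving mean-payoff parity games. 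Composing the equivalence from the first step with these two reductions and using transitivity of polynomial-time reductions gives the claimed equivalence.

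I expect the only real obstacle to be bookkeeping rather than any new mathematical difficulty: one must check that every reduction in the chain is genuinely polynomial with respect to the binary encoding of weights (in particular that the quadratic vertex blowup of the $(\arena_{k-1})_{v^*}$ and the factor-$(1+n)$ rescaling in Proposition~\ref{prop:mean-payoff-parity} each cost only a polylogarithmic increase in the weight parameter), that the number of oracle calls made by the fixed-point algorithms is polynomial, and that ``solving'' is interpreted consistently across all the cited statements. All of these points are routine verifications using results proved earlier in the excerpt or cited from~\cite{ChatterjeeDoyen12}.
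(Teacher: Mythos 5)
Your proposal is correct and follows essentially the same route as the paper: the corollary is obtained by composing the oracle-based reductions of Algorithms~\ref{algorithm_fixpoint} and~\ref{algorithm_fixpoint2} with the gadget reductions of Lemmas~\ref{lem:ep:vs:cep}, \ref{lem:cep:vs:bpw}, and~\ref{lem:bpw:vs:pw}, and then with the Chatterjee--Doyen equivalence between energy parity and mean-payoff parity games via Proposition~\ref{prop:mean-payoff-parity}. Your additional care about rescaling the fractional weights and about binary encoding is a sound elaboration of details the paper leaves implicit.
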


\noindent
Solving parity games with weights by iteratively solving mean-payoff parity games as sketched above, however, does not allow us to obtain the memory bounds from Theorem~\ref{thm:memory}.
This is due to Player~$0$, in general, requiring infinite memory in order to win a mean-payoff parity game~\cite{ChatterjeeHenzingerJurdzinski05}.


\section{The Threshold Problem}%
\label{sec:optimality}
In Section~\ref{sec:parity-weights-def}, we have defined parity games with weights as a generalization of parity games with costs.
Up to this point, we have only considered the problem of solving parity games with weights, i.e., deciding for a given parity game with weights~$\game$ and a vertex~$v$ of~$\game$, whether Player~$0$ has a strategy~$\sigma$ such that all plays~$\rho$ starting in~$v$ and consistent with~$\sigma$ satisfy~$\limsup_{j \rightarrow \infty}\Cor(\rho, j)  < \infty$.
We have shown this problem to be a member of~$\np\cap\conp$ and that it can be solved in pseudo-quasi-polynomial time.

While our algorithm for deciding the above decision problem yields a strategy witnessing the ability of Player~$0$ to eventually ensure a finite cost-of-response, it does not provide any guarantee on the ``quality'' of the strategy beyond the very general bound~$b \in \bigo({(ndW)}^2)$ due to Theorem~\ref{thm:costs}.
In particular, it may be the case that there exists a strategy~$\sigma'$ such that all plays~$\rho$ starting in~$v$ and consistent with~$\sigma$ satisfy~$\limsup_{j \rightarrow \infty}\Cor(\rho, j) \leq b' < b$.
Hence, in this section, we investigate the so-called threshold problem for parity game with weights, i.e., the problem to decide, given a parity game with weights~$\game$, a vertex~$v$ of~$\game$, and a bound~$b \in \nats$, whether there exists a strategy~$\sigma$ for Player~$0$ such that for all plays~$\rho$ starting in~$v$ and consistent with~$\sigma$ satisfy~$\limsup_{j \rightarrow \infty}\Cor(\rho, j) \leq b$.

Weinert and Zimmermann~\cite{WeinertZimmermann17} have shown that the threshold problem for finitary parity games as well as for parity games with costs is \pspace-complete and that exponential memory is both necessary and sufficient for both players to implement witnessing strategies.
In this work, we show that the threshold problem for parity games with weights is \exptime-complete and that the memory bounds remain unchanged in comparison to the case of parity games with costs.

In order to simplify notation for the remainder of this section, if~$\game$ is clear from the context, for a given strategy~$\sigma$ for Player~$0$ and a vertex~$v$ of~$\game$ we define
\[\Cost_v(\sigma) = \sup_\rho\limsup_{j \rightarrow \infty}\Cor(\rho, j)\enspace,\] where~$\rho$ ranges over all plays~$\rho$ of~$\game$ that start in~$v$ and that are consistent with~$\sigma$. Dually, for a strategy~$\tau$ for Player~$1$ and a vertex~$v$, we define $\Cost_v(\tau) = \inf_\rho\limsup_{j \rightarrow \infty}\Cor(\rho, j)$, where~$\rho$ ranges over all plays~$\rho$ of~$\game$ that  start in~$v$ and that are consistent with~$\tau$.

\begin{numrem}%
\label{rem:value}
Let $\sigma$ be a strategy for Player~$0$, let $\tau$ be a strategy for Player~$1$, and let $v$ be a vertex.
Then, $\Cost_v(\sigma) \ge \Cost_v(\tau)$.
\end{numrem}

The main theorem of this section settles the complexity of the threshold problem for parity games with weights.

\begin{thm}
The following decision problem is \exptime-complete:
	\begin{quotation}
		Given a parity game with weights~$\game$, some vertex~$v$ of~$\game$, and a bound~$b \in \nats$, does Player~$0$ have a strategy~$\sigma$ with~$\Cost_{v}(\sigma) \leq b$ in~$\game$?
	\end{quotation}
\end{thm}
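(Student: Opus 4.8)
The plan is to prove the two inclusions separately: membership in~\exptime\ by a reduction to a single, exponentially large parity game, and \exptime-hardness by a reduction from countdown games, which are \exptime-complete when the initial counter value is encoded in binary.

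For membership, I would first observe that, since~$b$ is now a fixed part of the input, the set~$\set{\rho \in V^\omega \mid \limsup_{j\rightarrow\infty}\Cor(\rho,j)\leq b}$ is $\omega$-regular: it is recognised by a deterministic parity automaton~$\mathcal{A}_b$ that maintains, for every odd color~$c$ occurring in~$\game$, capped information about the accumulated weights of the currently open requests of color~$c$. Following the construction of Weinert and Zimmermann~\cite{WeinertZimmermann17} for the special case of nonnegative weights, it suffices to keep, per color, at most one representative per attained value of the accumulated weight relative to the opening of the request, within a window of magnitude~$\bigo(b+W)$, pruning dominated requests, together with a signal that is emitted whenever some open request first attains amplitude exceeding~$b$; the only genuinely new point compared with~\cite{FijalkowZimmermann14,WeinertZimmermann17} is that the amplitude is two-sided, so one tracks both an upper and a lower running bound per request. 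The acceptance condition of~$\mathcal{A}_b$ is the conjunction of a parity condition induced by~$\col$ and a co-B\"uchi condition demanding that this signal fires only finitely often, which is again a parity condition of bounded index, because a play with infinitely many requests of cost exceeding~$b$ cannot satisfy~$\cp(\col,\weight)$ and, conversely, a play with finite~$\limsup$ of the costs-of-response necessarily satisfies the parity condition. Since weights are encoded in binary,~$b$ and~$W$ are at most exponential in~$\card{\game}$, so~$\mathcal{A}_b$ has~$2^{\bigo(\card{\game})}$ states. Taking the product of the arena of~$\game$ with~$\mathcal{A}_b$ yields a parity game whose size is~$2^{\bigo(\card{\game})}$ and in which Player~$0$ wins from the vertex induced by~$v$ if and only if he has a strategy~$\sigma$ with~$\Cost_v(\sigma)\leq b$ in~$\game$. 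By known algorithms for parity games~\cite{CaludeJKLS/17/qp}, this game is solved in time polynomial in its size, hence in time exponential in~$\card{\game}$; moreover a positional winning strategy in the product induces a winning strategy for Player~$0$ in~$\game$ of exponential size, and dually for Player~$1$, so both players need only exponential memory, matching the exponential lower bound inherited from finitary parity games~\cite{ChatterjeeHenzingerHorn09,WeinertZimmermann17}.

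For \exptime-hardness I would reduce from countdown games. Recall that such a game is given by a finite directed graph with positive integer edge weights, a start node~$s_0$, and an initial value~$c_0 \in \nats$ in binary; from a configuration~$(s,c)$ with~$c>0$, the controller picks a weight~$d\leq c$ labelling an outgoing edge of~$s$, the environment then resolves the nondeterminism by picking such an edge to some~$s'$, and the play proceeds from~$(s',c-d)$; the controller wins iff some configuration of value~$0$ is reached, and deciding whether she wins is \exptime-complete. Given such a game I would build a parity game with weights~$\game'$, a vertex~$v_0$, and the bound~$b = c_0$, in which Player~$0$ simulates the controller and Player~$1$ the environment, the accumulated weight encodes~$c_0$ minus the current value (counted up from~$0$), and the simulation is restarted ad infinitum. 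In each round,~$\game'$ opens a request of a fixed odd color at~$v_0$ and then simulates a play of the countdown game, adding the chosen weights; when Player~$0$ wishes to declare that the accumulated weight has reached~$b$, a verification gadget owned by Player~$1$ lets Player~$1$ either accept---whereupon a vertex of larger even color answers the request with cost exactly~$b$ and a reset edge of weight~$-b$ returns the play to~$v_0$---or dispute, the dispute branch being designed so that the request cannot be answered with finite cost unless the accumulated weight is exactly~$b$. If Player~$0$ reaches a configuration from which every available weight would overshoot~$b$, he is forced to overshoot and the request incurs cost exceeding~$b$. One then checks: if the controller wins the countdown game, Player~$0$ replays a controller winning strategy in every round, so every request is answered with cost at most~$b$ and~$\Cost_{v_0}(\sigma)\leq b$; if the controller loses, the environment can force, in every round, either an overshoot or a configuration from which Player~$0$ can only overshoot or leave the round's request unanswered, producing infinitely many requests of cost exceeding~$b$ and hence~$\Cost_{v_0}(\sigma)=\infty$ for every~$\sigma$. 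Since~$\game'$ has polynomially many vertices and its largest weight,~$b=c_0$, has polynomially many bits, this is a polynomial-time reduction.

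The main obstacle is the verification gadget in the hardness reduction. It must enforce, using only a coloring and integer weights, that the round's request is answered \emph{exactly} when the accumulated weight first reaches~$b$---neither earlier, which would let Player~$0$ answer every request at cost~$0$ and trivialise the problem, nor after an overshoot---and it must do so without giving either player the power to exploit the check to inflate or deflate the cost of requests that are irrelevant to the countdown dynamics; this seems to require answering the round's request before the equality is scrutinised and letting the dispute take place in a fresh \myquot{pocket} of the arena whose colors and weights cannot interfere with the ongoing simulation. Getting this gadget right, and verifying that the restarting mechanism correctly turns \myquot{the controller loses once} into \myquot{the environment wins infinitely many rounds} so that the $\limsup$-semantics of~$\cp(\col,\weight)$ matches the reachability objective of countdown games, is the crux of the proof. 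On the membership side the analogous subtlety is to show that pruning dominated open requests of a color in~$\mathcal{A}_b$ preserves exactly the information needed to decide whether cost~$b$ is exceeded infinitely often, which rests on the observation that along an accepting play only boundedly many distinct accumulated-weight values of pending requests of a given color are ever relevant.
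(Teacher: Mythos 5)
Your membership argument is essentially the paper's: the paper likewise builds a deterministic finite-state tracker that records, for each odd colour, an interval $(l,h)$ bounding the accumulated weights of all currently open requests of that colour, flags a violation whenever this interval leaves $[-b,b]$, and solves the resulting exponential-size parity game. The paper caps the number of violations at $n$ and routes the play into a losing sink, invoking the fact that $n$ violations can be turned into infinitely many; your co-B\"uchi conjunct ``only finitely many violations'' is an acceptable, arguably cleaner, alternative, and your per-colour bookkeeping, though described more elaborately than the single interval the paper keeps, yields the same exponential state count and the same equivalence.

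The hardness direction, however, has a genuine gap, and you have located it yourself: the verification gadget that forces the round's request to be answerable with cost at most~$b$ exactly when the accumulated weight of the simulated countdown round equals the target is never constructed, and without it there is no reduction. The paper closes precisely this hole without any dispute mechanism by exploiting the two-sidedness of the amplitude. In the paper's version the simulated round accumulates weight from~$0$ towards~$-c^*$, the round's request is opened at an odd-coloured vertex~$v_\top$ and answered only at the sink~$v_\bot$ (which carries a larger even colour), and the single edge by which Player~$0$ may ``tally the score'', i.e., move from one of his vertices to~$v_\bot$, carries weight~$+2c^*$. If he tallies when the accumulated weight is exactly~$-c^*$, the weight after the tally edge is~$+c^*$ and the amplitude of the round is exactly~$c^* = b$; if he tallies early, at weight~$w > -c^*$, the weight jumps to~$w + 2c^* > c^*$ and the bound is violated from above; if the round ever undershoots~$-c^*$, the bound is already violated from below, and Player~$1$ is given her own weight-$0$ edges to~$v_\bot$ so that she can answer the request at cost exceeding~$b$ and restart. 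The exact-equality test is thus performed for free by one weighted edge together with the requirement that the amplitude be bounded in \emph{both} directions; the same trick works in your count-up formulation with a tally edge of weight~$-2b$. Until you supply such a gadget---or this one---the hardness proof is incomplete.
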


\noindent
The remainder of this section is organized as follows:
First, in Section~\ref{subsec:optimality:exptime-membership} we show the threshold problem for parity games with weights to be in \exptime, before showing \exptime-hardness of the problem in Section~\ref{subsec:optimality:exptime-hardness}.
We conclude this section by showing tight bounds on the memory requirements of witnessing strategies for both players in Section~\ref{subsec:optimality:memory}.

\subsection{ExpTime-Membership}%
\label{subsec:optimality:exptime-membership}
In order to solve the threshold problem for parity games with weights, we follow the approach used by Weinert and Zimmermann~\cite{WeinertZimmermann17} to solve the same problem for parity games with costs.
Given a parity game with costs~$\game$ and a threshold~$b$, they construct a classical parity game~$\thrgame$ that satisfies the following property:
\begin{quotation}
Player~$0$ has a strategy of cost at most~$b$ from some vertex~$v$ in~$\game$ if and only if he has a winning strategy from some designated vertex~$v'$ in~$\thrgame$.
\end{quotation}

\noindent
We show how to lift the construction of Weinert and Zimmermann to the setting of parity games with weights and show that this construction yields \exptime-membership of the threshold problem for parity games with weights.
For the remainder of this section, we fix some parity game with weights~$\game$ with~$n$ vertices,~$d$ odd colors and a threshold~$b \in \nats$.

The idea behind the construction of~$\thrgame$ is to track, for each odd color~$c$ of~$\game$, an overapproximation of the costs incurred by all open requests for~$c$ in the current play prefix.
The authors then showed that, if Player~$1$ is able to violate the bound~$b$ at least~$n$ times, then she can do so infinitely often.
Thus, by additionally equipping~$\thrgame$ with an~$n$-bounded counter that counts the number of violations of the given threshold, we obtain the desired property given above.

Recall that, in parity games with costs, all weights are nonnegative.
Hence, it suffices to track an upper bound on the cost of open requests, as these costs are implicitly bounded from below by zero.
In the setting of parity games with weights, in contrast, we have to track both an upper as well as a lower bound on the cost of open requests.
To this end, we first define the set of intervals
\[
	I = \set{ (l, h) \mid -b \leq l \leq h \leq b } \enspace .
\]
Clearly, we have
\[
	\card{I} = (2b+1) + \cdots + 1 = (2b+1)(2b+2) / 2 = (4b^2 + 4b + 2b + 2) / 2 = 2b^2 + 3b + 1 \enspace .
\]
Using the set of intervals, we now define request functions that enable the overapproximation of the costs of open requests described above.
To this end, we denote the set of odd colors occurring in~$\game$ by~$D$.
A ($b$-bounded) request function $r \colon D \rightarrow \set{\bot} \cup I$ is a function mapping each odd color of~$\game$ either to
\begin{itemize}
	\item $\bot$, denoting that currently no request for color~$c$ is open, or to
	\item some~$(l, h) \in I$, denoting that
	\begin{itemize}
		\item there exists an open request for color~$c$ that has accumulated weight~$l$, that
		\item there exists an open request for color~$c$ that has accumulated weight~$h$, and that
		\item all requests for color~$c$ have accumulated weight at least~$l$ and at most~$h$.
	\end{itemize}
\end{itemize}

\noindent
Given some request function~$r$, we define the lower and upper residual request functions $r_\downarrow$ and~$r_\uparrow$ as~$r_\downarrow(c) = l$ and~$r_\uparrow(c) = h$, if~$r(c) = (l, h)$, and as~$r_\downarrow(c) = r_\uparrow(c) = \bot$ if~$r(c) = \bot$.
We write~$R$ to denote the set of all request functions.
We have~$\card{R} = {(\card{I} + 1)}^{\card{D}} = {(2b^2 + 3b + 2)}^d$, i.e., there exist exponentially many request functions when measured in the size of the game~$\game$, but only polynomially many when measured in the bound~$b$.

As stated above, Weinert and Zimmermann~\cite{WeinertZimmermann17} showed that it suffices for Player~$1$ to violate the threshold~$b$~$n$ times in order to witness that she can do so infinitely often.
Hence, we now define a memory structure comprising request functions and an ``overflow counter'' that, together with the game~$\game$, induces the desired parity game~$\thrgame$.

Recall that we fixed some parity game with weights~$\game$ with~$n$ vertices and~$d$ odd colors as well as a bound~$b \in \nats$.
Using the set~$R$ of request functions defined above, we define the set of memory states~$M = \set{0,\dots,n} \times R$.
As we aim to track the cost of open requests using the functions from~$R$, we define the initial memory element~$\init(v) = (0, r_v)$, where~$r_v$ is defined as
\[
	r_v(c) = \begin{cases}
 		(0,0) & \text{if $\col(v)$ is odd and $c = \col(v)$, and} \\
 		\bot & \text{otherwise.}
 	\end{cases}
\]

We define the update function~$\update\colon M \times E \rightarrow M$ implementing the above intuition.
Let~$m = (o, r) \in M$ and let~$e = (v, v') \in E$ with~$\weight(e) = w$.
This update function updates the memory state via~$\update(m, e) = (o', r')$ by performing the following steps in order:

\begin{description}
	\item[Weight] First, we resolve the effect of traversing the edge~$e$ with weight~$w$ by defining~$r'_\I$ as
\[
	r'_\I(c) = \begin{cases}
 		(r_\downarrow(c) + \weight(e), r_\uparrow(c) + \weight(e)) & \text{if~$r(c) \neq \bot$, and} \\
 		\bot & \text{otherwise.}
	 \end{cases}
\]

	\item[Overflow] In a second step, we check whether some request has violated the bound~$b$ during the move to~$v'$ and update the overflow counter if this is the case.
Thus, if there exists a color~$c$ such that either ${(r'_\I)}_\downarrow(c) < -b$ or~${(r'_\I)}_\uparrow(c) > b$, then we define~$r'_\II(c) = \bot$ for all~$c \in D$ and set~$o'$ to the minimum of~$o + 1$ and~$n$.
Otherwise, we define~$r'_\II = r'_\I$ and~$o' = o$.

	\item[Request] Finally, we resolve the effect of moving to the vertex~$v'$ with color~$\col(v')$ as follows:
If~$\col(v')$ is even, then we define
\[
	r'_\III(c) = \begin{cases}
 		\bot & \text{if $c \leq \col(v')$, and} \\
 		r'_\II(c) & \text{otherwise.}
 	\end{cases}
\]
If, however,~$\col(v')$ is odd, then we define
\[
	r'_\III(c) = \begin{cases}
 		(\min\set{{(r'_\II)}_\downarrow(\col(v')), 0}, \max\set{{(r'_\II)}_\uparrow(\col(v')), 0})
 			& \text{if $c = \col(v')$, and} \\
 		r'_\II(c) & \text{otherwise.}
 	\end{cases}
\]
\end{description}

\noindent
In either case, we define~$r' = r'_\III$, which concludes the definition of~$m' = (o', r')$.
The resulting~$o'$ is at most~$n$ and the resulting function~$r'$ is an element of~$R$.
We combine these elements in the memory structure~$\mem = (M, \init, \update)$.

Recall that throughout this section we fixed a parity game with weights~$\game$.
Let~$\game = (\arena, \cp(\col, \weight))$.
We define the ($b$-)threshold game of~$\game$ 
\[
	\thrgame = (\arena', \parity(\col')) \enspace ,
\]
with $\thrarena = (V', V'_0, V'_1, E')$, where $V' = V \times M$,~$V'_i = V_i \times M$ for~$i \in \set{0,1}$, where $((v, m), (v', m')) \in E'$ if and only if~$(v, v') \in E$ and~$\update(m, (v, v')) = m'$, and where
\[
	\col'(v, o, r) = \begin{cases}
	 	\col(v) & \text{if~$o < n$, and}\\
	 	1 & \text{otherwise,}
	\end{cases}
\]
which concludes the definition of~$\thrgame$.

Via a straightforward adaptation of results by Weinert and Zimmermann~\cite{WeinertZimmermann17} we obtain that this construction indeed satisfies the above property, i.e., that it suffices to solve~$\thrgame$ in order to solve the threshold problem for~$\game$:

\begin{lem}%
\label{lem:optimality:threshold-game-equivalence}
Let~$v^*$ be a vertex of~$\game$.
Player~$0$ has a strategy~$\sigma$ with~$\Cost_{v^*}(\sigma) \leq b$ if and only if he wins~$\thrgame$ from~$(v^*, \init(v^*))$.
\end{lem}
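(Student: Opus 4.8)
The plan is to prove both directions of the equivalence by relating plays in $\game$ to plays in $\thrgame$ via the memory structure $\mem$, exploiting the key structural facts that (i) $\mem$ faithfully overapproximates the weights of open requests as long as no overflow has occurred, and (ii) the overflow counter $o$ records how many times the threshold $b$ has been violated, capped at $n$. The central combinatorial input—borrowed from Weinert and Zimmermann~\cite{WeinertZimmermann17}—is that if Player~$1$ can force $n$ violations of the threshold from a given vertex, she can force infinitely many; this is why capping the counter at $n$ and then declaring odd color $1$ suffices to detect ``bad'' behaviour.

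For the direction from left to right, suppose Player~$0$ has a strategy $\sigma$ in $\game$ with $\Cost_{v^*}(\sigma) \le b$. First I would show that no play $\rho$ consistent with $\sigma$ and starting in $v^*$ ever causes the overflow counter to reach $n$: indeed, I claim the overflow counter is incremented only when some open request genuinely exceeds amplitude $b$, so infinitely many increments would contradict $\limsup_j \Cor(\rho,j) \le b$. More precisely, I would establish an invariant along any play prefix $\pi = v_0 \cdots v_j$ with $\update^+(\pi) = (o,r)$: the value $o$ equals (the $n$-capped count of) the number of indices at which a then-open request first exceeded amplitude $b$, and whenever $r(c) = (l,h) \neq \bot$, the values $l,h$ are the extremal accumulated weights of open requests for $c$ since the last reset, which (absent an intervening overflow) coincide with the true cost-of-response data. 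Since $\Cost_{v^*}(\sigma) \le b$ bounds $\limsup_j \Cor(\rho,j)$, only finitely many requests can exceed $b$, so the $n$-th increment is never reached—wait, this needs care: a single ``bad'' request could be counted, so I actually only need that the counter does not reach $n$, which follows because after the finitely many bad requests are resolved the counter stops incrementing, and before that it has been incremented at most finitely often; but I must rule out reaching $n$ specifically. Here I invoke the $n$-violations-imply-infinitely-many argument: if $o$ reached $n$, Player~$1$ would have a strategy from that point forcing unboundedly many violations against $\sigma$, contradicting $\Cost_{v^*}(\sigma)\le b$. Hence $\col'(v,o,r) = \col(v)$ throughout, so lifting $\sigma$ to the strategy $\sigma'$ on $\thrgame$ that mimics $\sigma$ on the first component yields plays whose color sequence matches $\rho$'s; since $\rho \in \cp(\col,\weight)$ implies almost all requests are answered, $\rho$ satisfies the ordinary parity condition as well, and so $\sigma'$ is winning in $\thrgame$ from $(v^*,\init(v^*))$.

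For the direction from right to left, suppose Player~$0$ wins $\thrgame$ from $(v^*,\init(v^*))$ with strategy $\sigma'$. Project $\sigma'$ down to a strategy $\sigma$ on $\game$ in the obvious way (the memory component is determined by $\update^+$). Take any play $\rho$ consistent with $\sigma$ from $v^*$, and let $\rho'$ be the corresponding play in $\thrgame$. Since $\sigma'$ is winning, $\rho'$ satisfies $\parity(\col')$; in particular the highest color seen infinitely often is even, which forces the overflow counter to be $< n$ infinitely often, hence (being monotone up to a cap, but resettable only via... actually $o$ is non-decreasing once it only ever increases toward $n$)—I need to check that $o$ is non-decreasing, in which case $o < n$ infinitely often means $o < n$ always, so the overflow is never triggered $n$ times. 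Then the invariant from the first part tells us that every request in $\rho$ has its accumulated weight confined to $[-b,b]$ from the point it becomes tracked, with resets only happening when requests are answered; combined with $\rho'$ satisfying the parity condition (so almost all requests are answered), I conclude $\limsup_j \Cor(\rho,j) \le b$, i.e. $\Cost_{v^*}(\sigma) \le b$.

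The main obstacle is making the invariant relating $\mem$'s state to the true cost-of-response data precise and airtight—particularly handling the interaction between the \textbf{Weight}, \textbf{Overflow}, and \textbf{Request} phases of $\update$, the behaviour of resets (both on answering a request and on overflow), and the subtle point that $r$ tracks only the extremal open requests rather than all of them. A secondary subtlety is justifying that the $n$-cap is exactly the right threshold, which is where I lean most heavily on the adaptation of the Weinert–Zimmermann pumping argument: from a position where $n$ distinct violations have occurred along a play consistent with a fixed finite-memory strategy, a pigeonhole argument on game configurations yields a cycle that can be pumped to produce infinitely many violations. Since the statement says this follows by ``a straightforward adaptation'' of~\cite{WeinertZimmermann17}, I would cite that work for the pumping lemma and focus the written proof on the invariant bookkeeping, which is the genuinely new part due to the two-sided bounds $[-b,b]$ replacing the one-sided bound of parity games with costs.
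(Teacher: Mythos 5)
Your right-to-left direction matches the paper's argument and is essentially sound (modulo the minor imprecision that requests occurring \emph{before} the overflow counter stabilises may well exceed $b$; only the limsup is controlled). The genuine gap is in the left-to-right direction. You try to lift a strategy~$\sigma$ with $\Cost_{v^*}(\sigma)\le b$ directly to~$\thrgame$ and argue the overflow counter never reaches~$n$. But $\Cost_{v^*}(\sigma)\le b$ only says $\limsup_j \Cor(\rho,j)\le b$, i.e.\ \emph{almost all} requests stay within the threshold; a single play consistent with~$\sigma$ may perfectly well contain $n$ or more violations of the bound~$b$ (or unanswered requests) before settling down. Such a play saturates the counter and is losing in~$\thrgame$, so the naive lift of~$\sigma$ need not be winning. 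Your patch --- ``if $o$ reached $n$, Player~$1$ would have a strategy forcing unboundedly many violations against~$\sigma$'' --- misapplies the $n$-violations-imply-infinitely-many principle: that principle is a pumping argument on a \emph{positional strategy of Player~$1$ in the finite game~$\thrgame$}, not a statement about a single play consistent with an arbitrary (possibly infinite-memory) strategy~$\sigma$ of Player~$0$ in~$\game$. Observing $n$ violations along one play against~$\sigma$ gives no pigeonhole handle on~$\sigma$, and you cannot assume~$\sigma$ finite-state without circularity (the memory bound for threshold strategies is itself derived from this lemma).

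The paper instead proves this direction by contraposition: if Player~$0$ does not win~$\thrgame$ from $(v^*,\init(v^*))$, then by determinacy of parity games Player~$1$ wins it with a positional strategy~$\tau'$, and from~$\tau'$ one builds a strategy~$\tau$ in~$\game$ with $\Cost_{v^*}(\tau)>b$, which by Remark~\ref{rem:value} defeats \emph{every}~$\sigma$. The technical core is a reset mechanism: whenever the overflow counter would be incremented, it is instead reset to the least value~$o_{v'}$ from which~$\tau'$ is still winning, so that the simulation never enters the saturated sink where~$\tau'$ makes no meaningful moves; one then shows that either the counter stabilises (and a suffix consistent with~$\tau'$ violates the parity condition) or there are infinitely many resets (each witnessing a fresh violation of the bound~$b$). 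This reset construction is where the ``$n$ violations suffice'' idea actually lives, and it is the piece missing from your proposal.
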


We split the proof of Lemma~\ref{lem:optimality:threshold-game-equivalence} into several lemmas.
The direction from right to left is relatively straightforward:
Since~$\thrgame$ is a parity game, if Player~$0$ wins~$\thrgame$ from~$(v^*, \init(v^*))$, then she does so with a positional strategy~$\sigma'$.
This strategy assigns to each vertex~$(v, o, r)$ of Player~$0$ in~$\thrgame$ a unique successor~$(v', o', r')$, where the values of~$o'$ and~$r'$ are deterministic updates of~$o$ and~$r$ via the update function~$\update$.
Hence,~$\sigma'$ can be interpreted as picking only a successor vertex~$v'$ of~$v$ with respect to the current memory state~$(o, r)$.
Thus, the choices of~$\sigma'$ can be mimicked in~$\game$.

\begin{lem}%
\label{lem:optimality:threshold-games:correctness:right-to-left}
If Player~$0$ wins~$\thrgame$ from~$(v^*, \init(v^*))$, then she has a strategy~$\sigma$ in~$\game$ with $\Cost_{v^*}(\sigma) \leq b$.
\end{lem}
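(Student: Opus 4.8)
The plan is as follows. Since $\thrgame = (\thrarena, \parity(\col'))$ is a parity game, it is positionally determined, so the assumption that Player~$0$ wins $\thrgame$ from $(v^*, \init(v^*))$ yields a positional winning strategy $\sigma'$. By construction of $\thrarena$, the memory component of any successor $(v',m')$ of $(v,m)$ is forced to be $m' = \update(m,(v,v'))$, so $\sigma'$ induces a next-move function $\nxt(v,m) = v'$ with $(v,v') \in E$; together with the memory structure $\mem = (M, \init, \update)$ this defines a finite-state strategy $\sigma$ for Player~$0$ in $\game$. I would then establish $\Cost_{v^*}(\sigma) \le b$, i.e.\ $\limsup_{j\to\infty}\Cor(\rho,j) \le b$ for every play $\rho = v_0v_1\cdots$ from $v^*$ consistent with $\sigma$.

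Next I would lift $\rho$ to the sequence $\rho' = (v_0, m_0)(v_1, m_1)\cdots$ with $m_0 = \init(v^*)$ and $m_{j+1} = \update(m_j,(v_j,v_{j+1}))$; by the definitions of $E'$ and $\nxt$, $\rho'$ is a play in $\thrarena$ from $(v^*, \init(v^*))$ consistent with $\sigma'$, hence $\rho' \in \parity(\col')$. Write $m_j = (o_j, r_j)$. The overflow counter is nondecreasing and bounded by $n$, hence stabilises; were its limit $n$, then $\col'$ would equal $1$ from some point on and $\rho'$ would violate $\parity(\col')$, so $o_j$ is eventually equal to some $o^* < n$, say from position $j^*$ on. In particular $\col'(v_j,m_j) = \col(v_j)$ for all $j$, and only finitely many overflow steps occur along $\rho$.

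The heart of the argument is the standard invariant (an adaptation of the one of Weinert and Zimmermann) that, between two consecutive overflow steps, the request function over-approximates the accumulated weights of open requests: whenever a request for an odd color $c$ is open at position $k$, then $r_k(c) = (l,h)$ with $l \le \weight(v_{k'} \cdots v_k) \le h$ for the position $k'$ at which any currently open request for $c$ was posed, and $l,h \in [-b,b]$. Restricting to positions after $j^*$, no overflow occurs, so for every request opened at a position $j > j^*$ and every position $k$ up to (but excluding) its first answer, $\weight(v_j\cdots v_k) \in [-b,b]$; hence if such a request is answered, its cost-of-response is at most $b$. It remains to see that all but finitely many requests in $\rho$ are answered with cost at most $b$: requests at positions $j \le j^*$ are finitely many; requests at positions $j > j^*$ with $\col(v_j)$ even have cost $0$; for each odd color $c$ occurring only finitely often in $\rho$ there are finitely many requests for $c$, and there are finitely many colors; and a request at a position $j > j^*$ with $\col(v_j) = c$ odd where $c$ occurs infinitely often must be answered, since otherwise no color in $\answer{c}$ occurs after $j$, and then the largest color occurring infinitely often in $\rho'$ (which equals that of $\rho$, as $o_j < n$) would be $c$ or a larger odd color, contradicting $\rho' \in \parity(\col')$; by the invariant this request is answered with cost at most $b$. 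Hence $\Cor(\rho,j) \le b$ for all but finitely many $j$, so $\limsup_{j\to\infty}\Cor(\rho,j)\le b$, and as $\rho$ was arbitrary, $\Cost_{v^*}(\sigma) \le b$.

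The main obstacle is making the over-approximation invariant precise and proving it by induction, being careful about the order of the Weight, Overflow, and Request sub-steps of $\update$ — in particular that a fresh request for an already-open odd color is correctly absorbed into the interval via the $\min\{\cdot,0\}/\max\{\cdot,0\}$ clauses, and that an answer to $c$ resets $r(c)$ to $\bot$ exactly when it should. The remaining combinatorial point — that requests for colors occurring only finitely often are irrelevant for the $\limsup$, while requests for colors occurring infinitely often are forced to be answered by the parity condition — is then the short argument sketched above. Since this invariant is precisely the ``straightforward adaptation of results by Weinert and Zimmermann'' underlying the construction of $\thrgame$, I would invoke it there rather than reprove it in full.
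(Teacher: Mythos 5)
Your proposal is correct and takes essentially the same approach as the paper: a positional winning strategy in~$\thrgame$ induces a finite-state strategy in~$\game$ via~$\mem$, plays are lifted to~$\thrgame$, the overflow counter never saturates and eventually stabilises, and the request-function bookkeeping then bounds the cost of almost all requests by~$b$. The only cosmetic difference is that the paper concludes by contradiction (a request answered with finite cost exceeding~$b$ after stabilisation would force another overflow increment), whereas you invoke the over-approximation invariant in its positive form; both rest on the same correctness property of~$\update$, and your explicit care about the Weight/Overflow/Request ordering correctly covers the weight at the answering position.
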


\begin{proof}
Let~$\sigma'\colon V'_0 \rightarrow V'$ be a positional winning strategy for Player~$0$ from~$(v^*, \init(v^*))$ in~$\thrgame$.
Since~$\thrgame$ is a parity game, such a strategy exists by assumption.
We define the finite-state strategy~$\sigma$ for Player~$0$ in~$\game$ as the unique strategy induced by the memory structure~$\mem$ and the next-move function~$\nxt$ defined as~$\nxt(v,m) = v'$, if $\sigma'(v,m) = (v',m')$ for some $m'$.
It remains to show~$\Cost_{v^*}(\sigma) \leq b$.

Let $\rho = v_0 v_1 v_2 \cdots$ be a play in~$\game$ that begins in~$v^*$ and that is consistent with $\sigma$.
Let
\[
	\rho' = (v_0, o_0, r_0)(v_1, o_1, r_1)(v_2, o_2, r_2)\cdots
\]
be the unique play in $\thrgame$ that satisfies both~$(o_0, r_0) = \init(v_0)$ as well as $(o_j, r_j) = \update((o_{j-1}, r_{j-1}), (v_{j-1}, v_j))$ for all~$j > 0$.

A straightforward induction shows that~$\rho'$ is consistent with $\sigma'$.
As~$\rho'$ moreover starts in~$(v^*, \init(v^*))$ by definition, it is winning for Player~$0$, i.e., it satisfies the parity condition.
This in particular implies that the overflow counter along $\rho'$ never saturates, i.e., that we have~$o_j < n$ for all~$j \in \nats$, since vertices with saturated overflow counter form a losing sink for Player~$0$.
Hence, the plays~$\rho$ and~$\rho'$ coincide on their color sequences.
Since~$\rho'$ is winning for Player~$0$ in~$\game'$, it satisfies the parity condition, which in turn implies that~$\rho$ satisfies the parity condition.
It remains to show that almost all requests in~$\rho$ are answered with cost at most~$b$.

As argued above, the overflow counter along~$\rho'$ eventually stabilizes at some value less than~$n$.
Moreover, since~$\rho$ satisfies the parity condition, all but finitely many requests are answered.
Hence, there exists a position~$j$ such that $o_{j'} = o_j < n$ and such that~$\Cor(\rho, j') < \infty$ for every $j' >j$.
Assume towards a contradiction that there exists some~$j' \geq j$ with~$b < \Cor(\rho, j') < \infty$.
Then, due to the construction of~$\thrgame$, the overflow counter is incremented along the play~$\rho'$ at some point after~$j'$.
This, however, contradicts the choice of~$j$.
Hence, we obtain that almost all requests in~$\rho$ are answered with cost at most~$b$.
\end{proof}

We now turn our attention to the other direction of the statement of Lemma~\ref{lem:optimality:threshold-game-equivalence}, i.e., we aim to show that, if Player~$0$ has a strategy~$\sigma$ in~$\game$ with~$\Cost_{v^*}(\sigma) \leq b$, then she wins~$\thrgame$ from~$(v^*, \init(v^*))$.
We show this claim via contraposition:
Assume that Player~$0$ does not win~$\thrgame$ from~$(v^*, \init(v^*))$.
Since~$\thrgame$ is a parity game, it is determined. Hence, Player~$1$ wins~$\thrgame$ from~$(v^*, \init(v^*))$, say with the positional strategy~$\tau'$.
Such a positional strategy for him exists again due to~$\thrgame$ being a parity game.
We construct a strategy~$\tau$ for Player~$1$ in~$\game$ that enforces~$\limsup_{j \rightarrow \infty}\Cor(\rho, j) > b$ for each play~$\rho$ starting in~$v^*$ and consistent with~$\tau$.
Since this implies~$\Cost_{v^*}(\sigma) > b$ for each strategy~$\sigma$ of Player~$0$ (see Remark~\ref{rem:value}), this suffices to show the desired statement.

Recall that the overflow counter along each play starting in~$(v^*, \init(v^*))$ is monotonically increasing and bounded from above by the number~$n$ of vertices in~$\game$.
Hence, the value of the overflow counter either stabilizes at some value less than~$n$, or it eventually saturates at value~$n$.
In the former case,~$\tau'$ has to ensure that the resulting play violates the parity condition.
Hence, it suffices to mimic the moves made by~$\tau'$ in~$\game$ ad infinitum in this case, which results in a play with infinitely many unanswered requests in~$\game$.
In the latter case, however, mimicking~$\tau'$ does not yield a strategy in~$\game$ with the desired property, as~$\tau'$ does not necessarily prescribe ``meaningful'' moves in~$\thrgame$ once the overflow counter saturates.
This is due to the fact that these vertices form a sink component that is trivially winning for Player~$1$.
In order to leverage~$\tau'$ even after saturation of the overflow counter, we intervene whenever the overflow counter is incremented, by resetting it to the smallest possible value from which~$\tau'$ is still winning, thereby ensuring that the sink component is never reached.
Hence, the strategy always mimics ``meaningful'' moves in~$\thrgame$.

Formally, we define the set~$\reach$ that contains all vertices~$(v, o, r)$ that are visited by some play that starts in~$(v^*, \init(v^*))$ and that is consistent with~$\tau'$.
Recall that we defined~$r_v$ as the function denoting the requests opened by visiting the vertex~$v$.
Given a vertex~$v$, we then define~$o_v = \min(\set{n} \cup \set{o \mid (v,o,r_v) \in \reach})$.
In particular, we have $o_{v^*} = 0$, since~$(v^*, \init(v^*)) = (v^*, 0, r_v) \in \reach$.

\begin{lem}
The strategy~$\tau'$ is winning for Player~$1$ from~$(v,o_v,r_v)$ in $\thrgame$ for all~$v \in V$.
\end{lem}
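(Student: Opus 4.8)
The plan is to prove the statement by a case distinction on the value of~$o_v$, using only two facts: that~$\tau'$ is positional, and that the max-parity condition~$\parity(\col')$ of~$\thrgame$ is prefix-independent (being a parity condition, it is unaffected by finite prefixes).

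First I would dispose of the degenerate case~$o_v = n$. By the definition of~$\update$, the overflow counter is monotonically non-decreasing along every play and capped at~$n$; hence every vertex reachable in~$\thrarena$ from~$(v, n, r_v)$ still carries overflow counter~$n$, and by the definition of~$\col'$ it therefore carries color~$1$. Thus every play starting in~$(v, n, r_v)$ sees only the odd color~$1$ infinitely often and violates~$\parity(\col')$, so Player~$1$ wins~$\thrgame$ from~$(v, n, r_v)$ with \emph{every} strategy, in particular with~$\tau'$. Note that in this case~$(v, n, r_v)$ need not lie in~$\reach$ at all; it is simply a losing sink for Player~$0$.

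For the main case~$o_v < n$ I would argue by contradiction. By the definition of~$o_v$, the inequality~$o_v < n$ forces~$(v, o_v, r_v) \in \reach$, so there is a play~$\pi'$ starting in~$(v^*, \init(v^*))$, consistent with~$\tau'$, that visits~$(v, o_v, r_v)$; fix such a~$\pi'$ and let~$k$ be the first position at which~$\pi'_k = (v, o_v, r_v)$. If~$\tau'$ were not winning for Player~$1$ from~$(v, o_v, r_v)$, there would be a play~$\rho'$ starting in~$(v, o_v, r_v)$, consistent with~$\tau'$, and won by Player~$0$, i.e.~$\rho' \in \parity(\col')$. I would then splice: take~$\pi'_0 \cdots \pi'_{k-1}$ followed by~$\rho'$ (which starts in~$\pi'_k = (v, o_v, r_v)$). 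This is a legal play of~$\thrgame$ starting in~$(v^*, \init(v^*))$, and---this is the one point that needs a moment's care---it is consistent with~$\tau'$ precisely because~$\tau'$ is positional: at every Player~$1$ position of the spliced play the prescribed successor depends only on the current vertex, and both the prefix (a prefix of the~$\tau'$-consistent~$\pi'$) and the suffix~$\rho'$ respect~$\tau'$. Since~$\parity(\col')$ is prefix-independent and~$\rho' \in \parity(\col')$, the spliced play lies in~$\parity(\col')$ as well, i.e.\ it is won by Player~$0$; this contradicts the hypothesis that~$\tau'$ is winning for Player~$1$ from~$(v^*, \init(v^*))$. Hence~$\tau'$ is winning from~$(v, o_v, r_v)$.

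I do not expect any serious obstacle here: the argument uses neither determinacy nor any structural property of the threshold construction beyond the shape of~$\update$ and~$\col'$. The only genuinely non-routine observations are that the case~$o_v = n$ must be separated out and handled via the trivial sink argument, and that the positionality of~$\tau'$ is exactly what makes the splicing of a reachable-prefix with a hypothetical Player~$0$-winning continuation into a single~$\tau'$-consistent play legitimate.
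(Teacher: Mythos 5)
Your proof is correct and follows essentially the same route as the paper's: the case~$o_v = n$ is dispatched because the saturated overflow counter forces color~$1$ forever, and the case~$o_v < n$ uses $(v, o_v, r_v) \in \reach$ together with the positionality of~$\tau'$ to splice a reachable prefix onto a play from~$(v, o_v, r_v)$ and then invokes prefix-independence of the parity condition. The only cosmetic difference is that you argue by contradiction where the paper argues directly (any $\tau'$-consistent play from~$(v,o_v,r_v)$ extends to a $\tau'$-consistent play from~$(v^*,\init(v^*))$, which must violate parity), but these are the same argument.
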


\begin{proof}
If~$o_v = n$, then all plays starting in~$(v, o_v, r_v)$ violate the parity condition by construction of the arena of~$\thrgame$.
Thus, for the remainder of this proof, assume~$o_v < n$.
Let~$(v, o_v, r_v)\rho$ be a play starting in~$(v, o_v, r_v)$ that is consistent with~$\tau'$.
Moreover, let~$\pi$ be a play prefix starting in~$(v^*, \init(v^*))$, consistent with~$\tau'$, and ending in $(v, o_v, r_v)$.
Such a play prefix exists due to the definition of~$o_v$ and due to the assumption of~$o_v < n$, which implies~$(v, o_v, r_v) \in \reach$.

Since~$\tau'$ is positional, the play~$\pi\rho$ starts in~$(v^*, \init(v^*))$ and is consistent with~$\tau'$.
Thus,~$\pi\rho$ violates the parity condition, which implies that~$\rho$ violates the parity condition due to prefix-independence.
\end{proof}

We now define a new memory structure~$\mem'$ implementing the strategy~$\tau$.
Recall that we defined the memory structure $\mem = (M, \init, \update)$ during the construction of the threshold game~$\thrgame$.
Using the components of that memory structure, we define~$M' = M = \set{0,\ldots, n} \times R$,~$\init' = \init$, and
\[
\update'((o,r),(v,v')) = \begin{cases}
	(o,r') &\text{if $\update((o,r),(v,v')) = (o,r')$, and}\\
	(o_{v'},r') &\text{if $\update((o,r),(v,v')) = (o+1,r')$.}\\
\end{cases}
\]
and combine these elements into the memory structure~$\mem' = (M', \init', \update')$.

In the second case of the definition of~$\update'$, we have~$r' = r_{v'}$ by definition of~$\update$.
Finally, we define the next-move function~$\nxt'$ via $\nxt'(v,m) = v'$, if $\tau'(v,m) = (v',m')$ for some $m' \in M$ and let $\tau$ be the finite-state strategy implemented by $\mem'$ and~$\nxt'$.
We claim that for each play~$\rho$ starting in~$v^*$ that is consistent with~$\tau$ we have~$\limsup_{j \rightarrow \infty}\Cor(\rho, j) > b$.

To show this claim, let $\rho = v_0 v_1 v_2 \cdots $ be some play in~$\game$ that starts in~$v^*$ and that is consistent with $\tau$.
Moreover, let
\[
	\rho' = (v_0, o_0, r_0) (v_1, o_1, r_1) (v_2, o_2, r_2) \cdots
\]
be the unique play in $\arena \times \mem'$ that satisfies $(o_0, r_0) = \init'(v_0)$ as well as $(o_j, r_j) = \update'((o_{j-1}, r_{j-1}), (v_{j-1}, v_j))$ for all~$j > 0$.
We say that $j$ is a reset position if $j=0$ or if
\[
	\update((o_{j-1},r_{j-1}),(v_{j-1},v_{j})) = (o_{j-1}+1,r_{j}) \enspace ,
\]
i.e., if the second case in the definition of $\update'$ is applied.

The play~$\rho'$ is not necessarily a play in $\thrgame$, since~$\thrgame$ is defined with respect to~$\mem$ instead of~$\mem'$, but every infix of~$\rho'$ that starts at a reset position and does not contain another one, is a play infix in~$\thrgame$ that is consistent with~$\tau'$.
At every reset position, instead of incrementing the overflow counter, we set it to~$o_v$.

Via a straightforward induction leveraging the same arguments as Weinert and Zimmermann~\cite{WeinertZimmermann17} we obtain~$o_j < n$ for all~$j \in \nats$, i.e., the overflow counter along~$\rho'$ never saturates.
Intuitively, this implies that the strategy~$\tau$ always uses ``meaningful'' moves of~$\tau'$ for its choice of move and thus allows us to subsequently argue that~$\tau$ is indeed winning for Player~$1$.
Moreover, while the set~$M'$ of memory states as defined above is of size~$(n+1)\card{R}$, the above observation allows us to implement the strategy~$\tau'$ using a set of memory states of size~$n\card{R}$ by omitting the memory states indicating a saturated overflow counter.

It remains to show that we indeed have~$\limsup_{j \rightarrow \infty}\Cor(\rho, j) > b$.
As argued above, this then directly implies~$\Cost_{v^*}(\sigma) > b$ for each strategy~$\sigma$ of Player~$0$, concluding the proof of the direction from left to right of Lemma~\ref{lem:optimality:threshold-game-equivalence}.

\begin{proof}[Proof of Lemma~\ref{lem:optimality:threshold-game-equivalence}]
	The direction from right to left is encapsulated in Lemma~\ref{lem:optimality:threshold-games:correctness:right-to-left}.

	For the direction from left to right, recall that we defined a strategy~$\tau$ for Player~$1$, that we picked~$\rho$ beginning in~$v^*$ and consistent with~$\tau$ arbitrarily and that we defined~$\rho' = (v_0,o_0,r_0)(v_1,o_1,r_1)(v_2,o_2,r_2)\cdots$.
	First assume that the overflow counter of~$\rho'$ eventually stabilizes, i.e., there exists some~$j \in \nats$ such that~$o_{j'} = o_j$ for all~$j' > j$.
	Then, there exists a suffix of~$\rho'$ that is consistent with~$\tau'$, which therefore violates the parity condition.
	Hence, it suffices to note that the color sequences induced by~$\rho'$ and by~$\rho$ coincide in this case since the overflow counter along~$\rho'$ never saturates as argued above, and due to the construction of~$\thrgame$, in which vertices of the form $(v, o, r)$ inherit the coloring of the vertex~$v$ for~$o < n$.
	Thus,~$\rho$ violates the parity condition and, in turn, also the parity condition with weights with respect to any bound.

	Now assume that the overflow counter of~$\rho'$ does not stabilize.
	Then, there are infinitely many reset positions in~$\rho'$.
	We obtain that there exist a request that incurs cost greater than~$b$ between any two adjacent such positions as a direct consequence of the construction of the arena of~$\thrgame$.
	Hence, we obtain~$\limsup_{j \rightarrow \infty}\Cor(\rho, j) > b$, which concludes this direction of the proof.
\end{proof}

Lemma~\ref{lem:optimality:threshold-game-equivalence} allows us to decide the threshold problem for a parity game with weights with~$n$ vertices and~$d$ odd colors by solving a classical parity game of size~$\bigo(nb^d)$, i.e., the size of the parity game depends on both the size of the parity game with weights as well as the threshold.
Lemma~\ref{lem:parity-games-weights:cost-upper-bound}, however, allows us to bound the parameter~$b$:
If~$b$ is at least $nd(6n)(d+1)(W+1)W$, then solving the threshold problem reduces to solving the parity game with weights.
We formalize these observations as Algorithm~\ref{alg:optimality:parity-with-weights}.

\begin{algorithm}
\begin{algorithmic}[1]
\REQUIRE{Parity game with weights~$\game$ with~$n$ vertices,~$d$ odd colors, and largest absolute weight~$W$, vertex~$v$ of~$\game$, bound~$b \in \nats$.}
 \IF{$b \geq nd(6n)(d+1)(W+1)W$}%
 \label{line:branch}
	\RETURN{$v \in \winreg_0(\game)$}%
		\COMMENT{Requires solving~$\game$, e.g.\ via Algorithm~\ref{algorithm_fixpoint}}%
		\label{line:classical-solver}
 \ELSE%
 	\STATE{$\thrgame = \text{$b$-threshold game of~$\game$}$}%
 		\COMMENT{$\thrgame$ is explicitly constructed}%
 		\label{line:construction}
 	\RETURN{$(v, 0, r_v) \in \winreg_0(\thrgame)$}%
 		\COMMENT{Requires solving~$\thrgame$}%
 		\label{line:threshold-solving}
 \ENDIF%
\end{algorithmic}
\caption{An algorithm deciding whether or not Player~$0$ has a strategy~$\sigma$ with $\Cost_v(\sigma) \leq b$ in a parity game with weights~$\game$.}%
\label{alg:optimality:parity-with-weights}
\end{algorithm}
Recall that our aim in this section is to argue that the threshold problem for parity games with weights is in \exptime.
To this end, we have constructed Algorithm~\ref{alg:optimality:parity-with-weights}, the correctness of which follows directly from Lemma~\ref{lem:optimality:threshold-game-equivalence}.
It remains to argue that this algorithm runs in exponential time.

To this end, recall that parity games can be solved in polynomial time in the number of vertices and in exponential time in the number of odd colors~\cite{Jurdzinski/98/UP,CaludeJKLS/17/qp}.
Moreover, recall that the ($b$-)threshold game~$\thrgame$ has~$\bigo(nb^d)$ many vertices, where~$n$ and~$d$ are the number of vertices and the number of odd colors in~$\game$, respectively. 
As we are able to bound the value of~$b$ from above by a polynomial in~$n$,~$d$, and the largest absolute weight~$W$ of~$\game$ due to Lemma~\ref{lem:parity-games-weights:cost-upper-bound}, we obtain that we are able to solve~$\thrgame$ in exponential time, which we formalize in the following theorem.

\begin{thm}%
\label{thm:optimality:exptime-membership}
The following problem is in \exptime:
\begin{quotation}
	Given a parity game with weights~$\game$, a vertex~$v$ of~$\game$, and a bound~$b \in \nats$, does Player~$0$ have a strategy~$\sigma$ with~$\Cost_v(\sigma) \leq b$?
\end{quotation}
\end{thm}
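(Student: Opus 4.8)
The plan is to prove \exptime-membership by carefully analysing the running time of Algorithm~\ref{alg:optimality:parity-with-weights}, whose correctness is already provided by Lemma~\ref{lem:optimality:threshold-game-equivalence} and Lemma~\ref{lem:parity-games-weights:cost-upper-bound}. The algorithm branches on whether the given threshold~$b$ is at least the pseudo-polynomial bound $B := nd(6n)(d+1)(W+1)W$ from Lemma~\ref{lem:parity-games-weights:cost-upper-bound}, so I would treat the two branches separately and then combine the bounds.

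First I would handle the branch $b \geq B$ taken in Line~\ref{line:branch}. Here the algorithm simply solves the parity game with weights~$\game$ itself; by Theorem~\ref{thm:bounded-parity-with-weights:complexity}, this can be done in pseudo-quasi-polynomial time, which is in particular exponential in $\card{\game}$ (recall that weights are encoded in binary, so $W$ may be exponential in the input size, but $\log W$ is polynomial, and pseudo-quasi-polynomial means polynomial in $W$ and quasi-polynomial in the number of colours — hence at most singly exponential in the input size). The correctness of this branch rests on the fact that, by Lemma~\ref{lem:parity-games-weights:cost-upper-bound}, if Player~$0$ wins~$\game$ at all then he has a winning strategy $\sigma$ with $\Cost_v(\sigma) \le nsW = B \le b$, so the threshold question coincides with the solvability question.

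Second I would handle the branch $b < B$. Here the algorithm explicitly constructs the $b$-threshold game $\thrgame$ and solves it as a classical parity game, checking whether $(v,0,r_v) \in \winreg_0(\thrgame)$; correctness is Lemma~\ref{lem:optimality:threshold-game-equivalence}. The key size estimate is that $\thrgame$ has $\bigo(n b^d)$ vertices and the number of odd colours of $\thrgame$ is $\bigo(d)$ (the colouring $\col'$ only reuses colours of~$\game$ plus the constant~$1$ on saturated states). Since $b < B = nd(6n)(d+1)(W+1)W$ is bounded by a polynomial in $n$, $d$, and $W$, and $W \le 2^{\card{\game}}$, the quantity $b^d$ is at most $(\mathrm{poly}(n,d,W))^d$, which is exponential in $\card{\game}$; constructing $\thrgame$ explicitly therefore takes exponential time, and it has exponentially many vertices. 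Finally, a classical parity game can be solved in time polynomial in the number of vertices and (quasi-)exponential in the number of colours~\cite{Jurdzinski/98/UP,CaludeJKLS/17/qp}; applied to $\thrgame$, this is polynomial in $n b^d = 2^{\bigo(\card{\game})}$ and quasi-polynomial in $d$, hence overall singly exponential in $\card{\game}$. Combining the two branches, the total running time of Algorithm~\ref{alg:optimality:parity-with-weights} is exponential in the size of the input, proving the theorem.

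The main obstacle — really the only subtlety — is bookkeeping the dependence on the binary encoding of weights: one must be careful that $b^d$ does not blow up beyond a single exponential, which is guaranteed precisely because the branch $b < B$ bounds $b$ polynomially in $W \le 2^{\card{\game}}$ so that $b^d \le 2^{d \cdot \mathrm{poly}(\log\text{-size})} = 2^{\mathrm{poly}(\card{\game})}$; the $\exptime$-bound would fail if $d$ were not itself polynomial in the input and if $b$ were not polynomially bounded via Lemma~\ref{lem:parity-games-weights:cost-upper-bound}. Everything else (the explicit construction of $\mem$ and $\thrgame$, the parity-game solver invocation) is routine, so I would keep the write-up short: state the two size bounds, cite the parity-game complexity, and conclude.
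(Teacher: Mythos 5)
Your proposal is correct and follows essentially the same route as the paper's proof: it analyses the two branches of Algorithm~\ref{alg:optimality:parity-with-weights}, invoking Theorem~\ref{thm:bounded-parity-with-weights:complexity} for the branch~$b \geq B$ and bounding the size of~$\thrgame$ (exponentially many vertices, $\bigo(d)$ colours, with~$b$ polynomially bounded in~$n$, $d$, $W$ via Lemma~\ref{lem:parity-games-weights:cost-upper-bound} and~$W \leq 2^{\card{\game}}$) for the branch~$b < B$. The only cosmetic difference is that the paper's vertex count for~$\thrgame$ works out to~$\bigo(n^2 b^{2d})$ rather than~$\bigo(n b^d)$, which does not affect the single-exponential bound.
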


\begin{proof}
We show that Algorithm~\ref{alg:optimality:parity-with-weights} witnesses the claimed membership in \exptime.
The correctness of this algorithm follows directly from Lemma~\ref{lem:parity-games-weights:cost-upper-bound} and Lemma~\ref{lem:optimality:threshold-game-equivalence}.
It remains to show that Algorithm~\ref{alg:optimality:parity-with-weights} terminates in exponential time.
Let~$n$ be the number of vertices of~$\game$, let~$d$ be the number of odd colors of~$\game$, and let~$W$ be the largest absolute weight in~$\game$.

If~$b \geq nd(6n)(d+1)(W+1)W$, then the dominating factor for the runtime of Algorithm~\ref{alg:optimality:parity-with-weights} is the call to a solver for parity games with weights in Line~\ref{line:classical-solver}.
This solver only has to solve the given parity game with weights as discussed in Section~\ref{sec:parity-weights-solve}.
Due to Theorem~\ref{thm:bounded-parity-with-weights:complexity}, the problem of solving these games is in~$\np \cap \conp$.
Since~$\np \subseteq \exptime$, we obtain the desired runtime in this case.

If, however,~$b < nd(6n)(d+2)(W+1)W$, Algorithm~\ref{alg:optimality:parity-with-weights} constructs and solves the~$b$-threshold game~$\thrgame$ of~$\game$ in Line~\ref{line:construction} and Line~\ref{line:threshold-solving}, respectively.
Let~$n'$ be the number of vertices of~$\thrgame$.
By construction of~$\thrgame$ we obtain
\[
	n' = n \card{\set{0,\dots, n} \times R} = n (n+1){(2b^2 + 3b + 2)}^d \in \bigo(n^2b^{2d}) \enspace ,
\]
where~$R$ denotes the set of request functions as defined above.

Due to our case analysis based on the cardinality of~$b$ in Line~\ref{line:branch}, we furthermore obtain~$b \in \bigo(n^2 d^2 W)$, which in turn implies
\[
	n' \in \bigo(n^2{(n^2 d^2 W)}^{2d}) = \bigo(n^{2+4d}d^{4d}W^{2d}) \enspace .
\]

As we assume weights to be given in binary encoding, we additionally obtain~$W \in \bigo(2^{\card{\game}})$, which finally implies
\[
	n' \in \bigo(n^{2+4d}d^{4d}{(2^{\card{\game}})}^{2d}) =
	n^{2+4d}d^{4d}{2^{2d\card{\game}}} \enspace,
\]
i.e.,~$\thrgame$ contains only exponentially many vertices and~$d' = \max\set{1, d} \in \bigo(d)$ many colors in terms of~$\card{\game}$.
Recall that parity games can be solved in polynomial time in the number of vertices and in exponential time in the logarithm of the number of colors using, e.g., the recent algorithm by Calude et al.~\cite{CaludeJKLS/17/qp}.
Hence,~$\thrgame$ can indeed be solved in exponential time in~$\card{\game}$, which implies membership of the above problem in~\exptime.
\end{proof}

Algorithm~\ref{alg:optimality:parity-with-weights} furthermore yields an algorithm determining the optimal~$b$ such that Player~$0$ has a strategy of cost at most~$b$ from a given~$v$:
Given a parity game with weights~$\game$ with~$n$ vertices,~$d$ odd colors, and largest absolute weight~$W$, and a vertex~$v$ of~$\game$, we first solve~$\game$ and determine whether or not~$v \in \winreg_0(\game)$.
If this is not the case, then no such bound~$b$ exists.
Otherwise, the minimal~$b$ with the above property can be determined with a binary search over the range~$0,\dots,nd(6n)(d+1)(W+1)W$.
This binary search requires deciding at most~$\log(nd(6n)(d+1)(W+1)W)$, i.e., polynomially many instances of the threshold problem, each of which takes at most exponential time.
Hence, the optimal~$b$ such that Player~$0$ has a strategy of cost at most~$b$ can be determined in exponential time in the size of~$\game$.

This concludes the proof of \exptime-membership of the threshold problem for parity games with weights.
It remains to show that this bound is tight.
To this end, we show in the next section that the threshold problem is \exptime-hard via a reduction from the problem of solving countdown games.

\subsection{ExpTime-Hardness}%
\label{subsec:optimality:exptime-hardness}
In the previous section we have shown \exptime-membership of the threshold problem for parity games with weights.

In this section we provide a matching lower bound by showing that the threshold problem is \exptime-complete.
To this end, we reduce the $\exptime$-hard problem of solving countdown games to the threshold problem for parity games with weights.
Countdown games were introduced by Jurdzi{\'n}ski, Laroussinie, and Sproston~\cite{JurdzinskiSprostonLaroussinie08}.

In a countdown game, some initial credit is fixed at the beginning of a play.
Both players then move in alternation in an arena whose edges are labeled with negative weights.
In each turn, first Player~$0$ announces some weight, before Player~$1$ has to move along some edge of that weight, reducing the initial credit by the weight of the traversed edge.
If the credit at some point hits zero, Player~$0$ wins.
If, however, the credit at some point turns negative, Player~$1$ wins.
Since each edge has strictly negative weight, each play is won by either player after finitely many moves.

When formulated in our framework of arenas and winning conditions, a countdown game~$\game = (\arena, \countdown(\weight, c^*))$ consists of an arena~$\arena = (V, V_0, V_1, E)$, a weighting~$\weight$ of~$\arena$, and some initial credit~$c^* \in \nats$, which satisfies the following conditions:
\begin{enumerate}
	\item There exists a designated sink vertex~$v_\bot \in V_1$,
	\item we have
		\begin{itemize}
			\item $E \subseteq (V_0 \times V_1) \cup (V_1 \times V_0) \cup (\set{v_\bot} \times \set{v_\bot})$,
			\item $V_0 \times \set{v_\bot} \subseteq E$, and
			\item $(v_\bot, v_\bot) \in E$,
		\end{itemize}
	\item for each~$e_1 = (v, v'_1), e_2 = (v, v'_2) \in E$, with $v \in V_0$ we have that $e_1 \neq e_2$ implies $\weight(e_1) \neq \weight(e_2)$,
	\item for each~$e \in E \cap (V_0 \times V)$ we have
		$\weight(e) \begin{cases}
			= 0 & \text{if $e \in (V_0 \times \set{v_\bot})$, and} \\
 			< 0 & \text{otherwise, and}
		\end{cases} $
	\item for each~$e \in E \cap (V_1 \times V)$ we have~$\weight(e) = 0$.
\end{enumerate}

\noindent
As discussed above, a countdown game is played in turns.
Each turn starts at a vertex~$v$ of Player~$0$, from where Player~$0$ first picks some outgoing edge~$e$ leading to vertex~$v'$ of Player~$1$.
That edge has unique weight among the outgoing edges of~$v$ due to the third requirement.
Moreover, if Player~$0$ does not opt to end the play by moving to~$v_\bot$, the weight of the edge is negative due to the fourth condition.
Subsequently, Player~$1$ picks a successor of~$v'$ and moves to that successor along an edge of weight zero due to the fifth requirement, where the next turn of the play starts.

The countdown condition is defined as
\[
	\countdown(\weight, c^*) =
		\set{ \rho = v_0v_1v_2\cdots \in V^\omega \mid \exists j.\, v_j = v_\bot \text{ and } c^* + \weight(\rho) = 0} \enspace .
\]
Our definition of countdown games differs from the one given by Jurdzi{\'n}ski, La\-roussi\-nie, and Sproston~\cite{JurdzinskiSprostonLaroussinie08}, as we adapted it to fit our framework of games introduced in Section~\ref{sec:preliminaries}.
It is, however, easy to see that our definition and the one given by the authors are equivalent.

As countdown games are essentially of finite duration, we obtain that they are determined due to Zermelo~\cite{Zermelo13}.

\begin{rem}%
\label{rem:optimality:exptime-hardness:countdown-games-determinacy}
Countdown games are determined.
\end{rem}

Jurdzi\'nski, Laroussinie, and Sproston showed that solving countdown games is \exptime-hard via a reduction from the word problem for alternating Turing machines with polynomially bounded space.
In order to concisely encode the exponential number of configurations attainable by the Turing machine during its run on the input word and the transitions between these configurations, this reduction requires the weights along the edges of the countdown game as well as the initial credit~$c^*$ to be given in binary encoding.

\begin{propC}[\cite{JurdzinskiSprostonLaroussinie08}]%
\label{prop:optimality:exptime-hardness:countdown-games-hardness}
	The following problem is \exptime-hard:
	\begin{quotation}
		Let~$\game$ be a countdown game and let~$v$ be a vertex of~$\game$.
		Does Player~$0$ have a winning strategy from~$v$ in~$\game$?
	\end{quotation}
\end{propC}

\noindent
We reduce the problem of solving countdown games to the threshold problem for parity games with weights.
To this end, for the remainder of this section, fix some countdown game~$\game = (\arena, \countdown(\weight, c^*))$ where~$\arena = (V, V_0, V_1, E)$, as well as some initial vertex~$v^* \in V$.

Intuitively, we construct the parity game with weights~$\game'$ such that at the beginning of the play a single request is opened, which is only answered upon visiting~$v_\bot$.
After visiting~$v_\bot$, the play returns to the initial vertex~$v^*$ reopening the unique request of~$\game'$ along the way.

In a countdown game, only Player~$0$ may decide to move to~$v_\bot$.
Since he should, intuitively, only do so after traversing a play prefix of weight~$-c^*$, we equip the edges leading from his vertices to~$v_\bot$ with weight~$2c^*$.
Thus, he can enforce ``tallying the score'' by moving to~$v_\bot$.

In order to afford Player~$1$ the same possibility, we add edges that allow her to move from his vertices to~$v_\bot$.
Furthermore, in order to incentivize her to only take these edges once she has exceeded the lower bound of~$-c^*$, these edges have weight zero.
All remaining weights remain unchanged, thus the costs incurred by the unique request in~$\game'$ model the remaining credit in the corresponding play in~$\game$.

Formally, let~$v_\top$ be some vertex not occurring in~$V$.
We define the parity game with weights~$\game' = (\arena', \cp(\col, \weight'))$, where~$\arena' = (V', V'_0, V'_1, E')$, with
\begin{itemize}
	\item $V' = V \cup \set{v_\top}$, $V'_0 = V_0 \cup \set{v_\top}$, $V'_1 = V_1$, and
	\item $E' = (E \setminus \set{(v_\bot, v_\bot)}) \cup (V_1 \times \set{v_\bot}) \cup \set{(v_\bot, v_\top), (v_\top, v^*)}$.
\end{itemize}
Since there exists a unique outgoing edge of~$v_\bot$ leading back to the initial vertex~$v^*$ of the countdown game via~$v_\top$, the play is restarted after visiting the sink vertex.

Furthermore, we define the weight function
\[
	\weight'(e) = \begin{cases}
		\weight(e) & \text{ if $(v, v') \in E \setminus (V_0 \times \set{v_\bot})$} \\
		2c^* & \text{ if $e \in V_0 \times \set{v_\bot}$} \\
		0 & \text{ otherwise}
	\end{cases}
\]
as well as the coloring
\[
	\col'(v) = \begin{cases}
		1 & \text{ if $ v = v_\top$}, \\
		2 & \text{ if $ v = v_\bot$}, \\
		0 & \text{ otherwise}
	\end{cases}
\]
and claim that Player~$0$ has a strategy~$\sigma$ with~$\Cost_{v_\top}(\sigma) \leq c^*$ if and only if he wins~$\game$ from~$v^*$.
We illustrate this construction in Figure~\ref{fig:optimality:exptime-completeness:countdown2parity}.

\begin{figure}[htb]\centering
	\begin{tikzpicture}
	\node[p0] (top) at (0,0) {\parnode{v_\top}{1}};
	\node[p0] (init) at (3,0) {\parnode{v^*}{0}};
	\node[p0] (pre-bot-0) at (7,.75) {\parnode{}{0}};
	\node[p1] (pre-bot-1) at (7,-.75) {\parnode{}{0}};
	\node[p0] (bot) at (9,0) {\parnode{v_\bot}{2}};
	\node at ($(init) ! .35 ! (bot)$) {$\cdots$};
	
	\node[draw, rounded corners, fit={(init) (pre-bot-0) (pre-bot-1) (bot)},label=above:{$\arena$}, inner ysep=.1cm, inner xsep=.5cm] (box) {};
	
	\path (pre-bot-0) edge[<-] +(-1, 0) edge[<-] +(-1, -.5);
	\path (pre-bot-1) edge[<-] +(-1, 0) edge[] +(-1, .5);
	
	\path
		(top) edge[weight = $0$ at .5 anchor south] (init)
		(init)
			edge +(1.5,.75)
			edge +(1.5,0)
			edge +(1.5, -.75);
	\path (pre-bot-0) edge[weight = $2c^*$ at .5 anchor south] (bot);
	\path (pre-bot-1) edge[weight = $0$ at .5 anchor north] (bot);
	\path[draw, rounded corners, ->, weight=0 at .5 anchor north] 
		(bot) |- ($(box.south) - (0,.25)$) -| (top);
\end{tikzpicture}
	\caption{Construction of the parity game with weights~$\game'$ from a given countdown game~$\game$. We omit unchanged weights for the sake of readability.}%
	\label{fig:optimality:exptime-completeness:countdown2parity}
\end{figure}
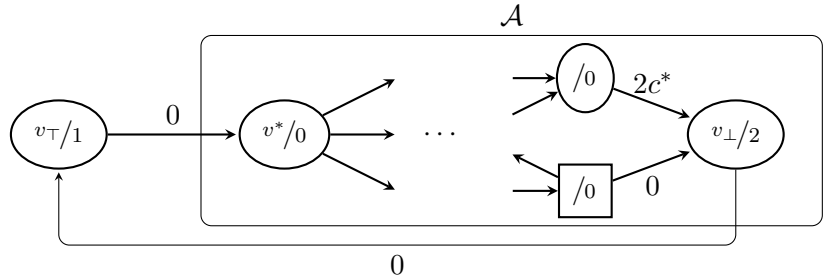

We claim that this construction implements our intuition, i.e., that solving~$\game'$ with respect to the bound~$c^*$ is indeed equivalent to solving~$\game$.

\begin{lem}%
\label{lem:optimality:exptime-hardness:correctness}
Player~$0$ wins~$\game$ from~$v^*$ if and only if he has a strategy~$\sigma$ with~$\Cost_{v_\top}(\sigma) \leq c^*$ in~$\game'$.
\end{lem}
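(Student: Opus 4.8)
The plan is to prove the two directions separately, in each case starting from a winning strategy in one game and constructing a witness strategy in the other, while carefully tracking how the credit of the countdown game corresponds to the accumulated weight (i.e., the cost of the unique open request) in $\game'$. The key structural observations that drive everything are: in $\game'$ the only odd-colored vertices are $v_\top$ (color $1$) and whatever is needed to keep a request open, so the unique request of each ``round'' of $\game'$ is opened at $v_\top$ (really at $v^*$ on re-entry, but $v_\top$ carries color $1$) and is answered precisely when $v_\bot$ (color $2$) is reached; between those two events the accumulated weight of the play exactly equals $c^* + \weight(\pi)$ for the corresponding countdown-game play prefix $\pi$, except that the self-loop $(v_\bot,v_\bot)$ has been replaced by the detour $v_\bot \to v_\top \to v^*$ of weight $0$, and the Player-$0$ edges into $v_\bot$ carry weight $2c^*$ rather than $0$, and Player~$1$ now also has weight-$0$ edges into $v_\bot$.

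For the direction ``$\game$ won by Player~$0$ $\Rightarrow$ Player~$0$ has a strategy of cost $\le c^*$ in $\game'$'': take a winning strategy $\sigma$ for Player~$0$ in the countdown game from $v^*$; since countdown games are essentially finite-duration, $\sigma$ can be taken positional (or at least bounded), and every consistent play reaches $v_\bot$ with total weight exactly $-c^*$. I would have Player~$0$ in $\game'$ mimic $\sigma$ until $v_\bot$ is reached, then follow the forced detour $v_\bot \to v_\top \to v^*$ and restart. Along each round the accumulated weight between the request at $v_\top$ and its answer at $v_\bot$ ranges over the values $c^* + \weight(\pi)$ for prefixes $\pi$ of a winning countdown play; since a winning countdown play never lets the credit go negative and ends at exactly $0$, these values lie in $[0,c^*]$ — so the amplitude of the infix witnessing the cost-of-response is at most $c^*$. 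Hence every request is answered with cost $\le c^*$, giving $\Cost_{v_\top}(\sigma') \le c^*$. One subtlety: I must check the Player-$0$ edges of weight $2c^*$ into $v_\bot$ do not blow the amplitude — but the play prefix just before taking such an edge has weight $-c^*$ relative to the request (by correctness of $\sigma$), so after adding $2c^*$ the weight is $+c^*$, still within the allowed band; the detour edges of weight $0$ then do no harm.

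For the converse, ``Player~$0$ has a cost-$\le c^*$ strategy $\sigma'$ in $\game'$ $\Rightarrow$ Player~$0$ wins $\game$ from $v^*$'': I would use determinacy of countdown games (Remark~\ref{rem:optimality:exptime-hardness:countdown-games-determinacy}) and argue by contraposition — suppose Player~$1$ wins $\game$ from $v^*$ with strategy $\tau$; I build a Player~$1$ strategy in $\game'$ that forces some request to have cost $> c^*$ (in fact, cost $> c^*$ infinitely often, or an unanswered request, either way violating $\Cost_{v_\top} \le c^*$). Player~$1$ mimics $\tau$; a winning countdown play against Player~$0$ either never reaches $v_\bot$ (then the request stays open forever, so cost $\infty$), or reaches a point where the credit has gone strictly negative, i.e., the accumulated weight relative to the request is $< -c^*$. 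In the latter case Player~$1$ uses her new weight-$0$ edge into $v_\bot$ to answer the request — but the infix from $v_\top$ to this answer has already attained amplitude $> c^*$, so that request is answered with cost $> c^*$; then the play restarts at $v^*$ and the argument repeats, yielding infinitely many over-threshold requests. Here I must be careful about whose move it is: in a countdown game Player~$0$ picks a weight and Player~$1$ picks the matching edge, and $v_\bot$-moves belong to Player~$0$; the added $V_1 \times \set{v_\bot}$ edges let Player~$1$ ``cash in'' after an over-shoot, which is exactly what the weight-$0$ choice models. The main obstacle I anticipate is making this last book-keeping fully rigorous: one has to confirm that whenever a consistent countdown play has driven the credit negative it does so at a Player-$1$ vertex from which the new edge to $v_\bot$ is available (so Player~$1$ can indeed answer the request right then), and conversely that Player~$1$ never needs — and never is forced by the structure of $\arena'$ — to overshoot only at a Player-$0$ vertex where she cannot react; this is where the precise form of the countdown-game edge relation $E \subseteq (V_0\times V_1)\cup(V_1\times V_0)\cup\{(v_\bot,v_\bot)\}$ and the fact that Player-$1$ edges have weight $0$ are used. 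A second, minor obstacle is handling the ``almost all'' in the definition of $\cp$ versus the ``all requests'' bookkeeping — but since every round is structurally identical and the over-threshold behavior recurs, $\limsup_j \Cor(\rho,j) > c^*$ follows cleanly.
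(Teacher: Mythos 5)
Your overall architecture matches the paper's: mimic strategies back and forth between $\game$ and $\game'$, identify the accumulated weight of the current round with the countdown credit, and have Player~$1$ ``cash in'' via her new weight-zero edge to $v_\bot$ once the credit overshoots. The forward direction is essentially the paper's argument (your conflation of the credit $c^*+\weight(\pi)$ with the accumulated weight $\weight(\pi)$ is harmless, since either way the amplitude is bounded by $c^*$, and you correctly check the $2c^*$ edge).

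The converse direction has a genuine gap: your case analysis (``never reaches $v_\bot$'' vs.\ ``credit goes strictly negative'') omits the case in which \emph{Player~$0$ moves to $v_\bot$ while the credit is still nonnegative}, i.e., $\weight'(\pi_j) \geq -c^*$. This is the decisive case of the whole reduction. It splits in two: (i) if $\weight'(\pi_j) = -c^*$ exactly, the amplitude of the round is exactly $c^*$ and the threshold is \emph{met}, so one must show this cannot happen --- and that is precisely where the hypothesis that $\tau$ is winning for Player~$1$ in the countdown game is used (the play $\pi_j (v_\bot)^\omega$ would otherwise be a win for Player~$0$ consistent with $\tau$); (ii) if $\weight'(\pi_j) > -c^*$, the Player-$0$ edge into $v_\bot$ of weight $2c^*$ pushes the accumulated weight above $+c^*$, which is the only reason those edges carry weight $2c^*$ at all --- your proposal never invokes this in the converse direction. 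Separately, your justification for the ``never reaches $v_\bot$'' case (``the request stays open forever, so cost $\infty$'') does not actually establish $\limsup_j \Cor(\rho,j) > c^*$: a single unanswered request at position $0$ contributes nothing to the $\limsup$, since all later vertices have even color and hence $\Cor(\rho,j)=0$ there. Fortunately this case never arises under your strategy (weights strictly decrease on Player-$0$ moves, so the credit eventually overshoots unless Player~$0$ tallies first), but as written the argument is both incomplete and, in that branch, invalid. By contrast, the ``whose turn is it'' issue you flag as the main obstacle is immediate from the bipartite edge relation: the credit only decreases on Player-$0$ moves, which land on Player-$1$ vertices, where the cash-in edge is available.
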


\begin{proof}
We first show the direction from left to right.
To this end, let~$\sigma$ be a winning strategy for Player~$0$ in~$\game$ from~$v^*$.
Moreover, let~$\pi = v_0 \cdots v_j$ be a play prefix in~$\game'$ starting in~$v_\top$ and let~$j'$ be the largest position in~$\pi$ with~$v_{j'} = v_\top$.
We define the strategy~$\sigma'$ for Player~$0$ in~$\game'$ via~$\sigma'(\pi) = \sigma(v_{j' + 1} \cdots v_j)$ and claim~$\Cost_{v_\top}(\sigma') \leq b$.

To prove this claim, let~$\rho' = v_0v_1v_2\cdots$ be a play in~$\game'$ starting in~$v_\top$ that is consistent with~$\sigma'$.
First assume towards a contradiction that~$\rho'$ only visits~$v_\top$ finitely often.
Then, due to the structure of the arena,~$\rho'$ only visits~$v_\bot$ finitely often.
By construction of~$\sigma'$, this implies that~$\rho'$ contains a suffix that begins in~$v^*$, is consistent with~$\sigma$, but never visits~$v_\bot$.
This contradicts~$\sigma$ being a winning strategy for Player~$0$ in~$\game$ from~$v^*$.
Hence,~$\rho'$ visits~$v_\bot$ infinitely often.

Thus,~$\rho'$ is of the form
\[ \rho' = v_\top\pi_0v_\bot \cdot v_\top\pi_1v_\bot \cdot v_\top\pi_2v_\bot  \cdots \enspace , \]
where each~$\pi_j$ starts in~$v^*$ and is consistent with~$\sigma$.
We first argue that, if~$\pi_j$ ends in a vertex of Player~$1$, then we have $0 \geq \weight'(\pi_j) \geq -c^*$:
All weights in~$\game'$ except for those along the edges from~$V_0 \times \set{v_\bot}$ are nonpositive.
Hence, $0 \geq \weight'(\pi_j)$ follows directly from the construction of~$\game'$.
Moreover, $\weight'(\pi_j) < -c^*$ would contradict~$\pi_j$ being consistent with the winning strategy~$\sigma$ for Player~$0$ in~$\game$, since Player~$0$ would be unable to continue the play prefix~$\pi_j$ such that the resulting play is winning for her.
Hence, we have $0 \geq \weight'(\pi_j) \geq -c^*$.
Moreover, since all edges leading from~$v_\top$ and all edges leading to~$v_\bot$ have weight zero, and since~$\weight'(\pi)$ is decreasing for increasing prefixes~$\pi$ of~$\pi_j$ due to construction of~$\game'$, we obtain $\ampl(v_\top \pi_j v_\bot) \le c^*$.

If, however, $\pi_j$ ends in a vertex of Player~$0$, then we have~$\weight'(\pi_j) = -c^*$---and therefore $\ampl(v_\top \pi_j v_\bot) = c^*$---as~$\pi_j$ is consistent with the winning strategy~$\sigma$ for Player~$0$ in~$\game$.
In either case, we obtain that the unique request in~$v_\top \pi_j v_\bot$ posed by visiting~$v_\top$ is answered with cost at most~$c^*$.
Hence,~$\rho'$ has cost at most~$c^*$, which concludes this direction of the proof.

We show the other direction of the statement via contraposition:
Assume Player~$0$ does not win~$\game$ from~$v^*$.
Since~$\game$ is determined due to Remark~\ref{rem:optimality:exptime-hardness:countdown-games-determinacy}, Player~$1$ wins~$\game$ from~$v^*$, say with strategy~$\tau$.
We define a strategy~$\tau'$ for Player~$1$ in~$\game'$ that is winning for her from~$v_\top$ via mimicking moves made by~$\tau$ until the initial credit is used up.
At that point, we define~$\tau'$ to prescribe moving to~$v_\bot$ in order to witness exceeding the initial credit and to restart the play.

Formally, let~$\pi' = v_0\cdots v_j$ be a play prefix in~$\game'$ that starts in~$v_\top$ and ends in some vertex of Player~$1$.
Moreover, let~$j' \leq j$ be the largest position such that $v_{j'} = v_\top$.
If~$\weight(v_{j'} \cdots v_j) \geq -c^*$, we define~$\tau'(\pi') = \tau(v_{j' + 1} \cdots v_j)$.
Otherwise, we define $\tau'(\pi') = v_\bot$.
It remains to show $\Cost_{v_\top}(\tau') > c^*$, which concludes the proof due to Remark~\ref{rem:value}.

To this end, let~$\rho'$ be a play in~$\game'$ starting in~$v_\top$ consistent with~$\tau'$.
Due to the structure of~$\arena'$, every infix of~$\rho'$ that visits neither~$v_\top$ nor~$v_\bot$ traverses edges of weight zero and negative weight in alternation.
Moreover, since~$\tau'$ prescribes moving to~$v_\bot$ once the play infix since the last visit to~$v_\top$ has incurred weight exceeding~$-c^*$, the play~$\rho'$ is of the form
\[ \rho' = v_\top\pi_0v_\bot \cdot v_\top\pi_1v_\bot \cdot v_\top\pi_2v_\bot  \cdots \enspace , \]
where each~$\pi_j$ starts in~$v^*$ and is consistent with~$\tau$.

We aim to show~$\ampl(v_\top\pi_{j}v_\bot) > c^*$ for all~$j$, which implies $\limsup_{k \rightarrow \infty}(\rho', k) > c^*$ due to the construction of~$\game'$ and thus suffices to show the desired statement.
To this end, fix some~$j \in \nats$ and first consider the case that~$\pi_j$ ends in a vertex of Player~$1$.
Then we obtain~$\weight(\pi_j) < -c^*$ by definition of~$\tau'$.
This directly implies the desired statement.

Now consider the case that~$\pi_j$ ends in a vertex of Player~$0$.
We first argue that $\weight'(\pi_j) \neq -c^*$ holds true.
Towards a contradiction, assume $\weight'(\pi_j) = -c^*$ and recall that~$\pi_j$ starts in~$v^*$ and is consistent with the winning strategy~$\tau$ for Player~$1$ from~$v^*$.
Thus the play~$\pi_j{(v_\bot)}^\omega$ in~$\game$ is consistent with~$\tau$.
Hence, $\weight'(\pi_j) = -c^*$ contradicts~$\tau$ being a winning strategy for Player~$1$ from~$v^*$.

It remains to show $\ampl(v_\top\pi_{j}v_\bot) > c^*$ for the case that~$\pi_j$ ends in a vertex of Player~$0$.
If $\weight'(\pi_j) < -c^*$, we directly obtain~$\ampl(v_\top\pi_{j}v_\bot) > c^*$.
If, however, $\weight'(\pi_j) > -c^*$, we have $\weight'(\pi_j  v_\bot) > c^*$, since we defined $\weight'( v, v_\bot ) = 2c^*$ for each vertex~$v$ of Player~$0$.
Thus, we obtain $\ampl(v_\top\pi_{j}v_\bot) > c^*$ for each infix $v_\top\pi_{j}v_\bot$ of~$\rho'$.
Hence, each request posed by visiting~$v_\top$ is answered with cost greater than~$c^*$.
Since we argued above that~$\rho'$ contains infinitely many visits to~$v_\top$, we obtain~$\limsup_{k \rightarrow \infty}(\rho', k) > c^*$.
\end{proof}

Due to Lemma~\ref{lem:optimality:exptime-hardness:correctness} we obtain a polynomial reduction from the problem of solving countdown games to the threshold problem for parity games with weights.
As the former problem is \exptime-hard due to Proposition~\ref{prop:optimality:exptime-hardness:countdown-games-hardness}, this implies \exptime-hardness of the latter problem.

\begin{lem}%
\label{lem:optimality:exptime-hardness}
	The following decision problem is \exptime-hard:
	\begin{quotation}
		Given a parity game with weights~$\game$, some vertex~$v^*$ of~$\game$, and a bound~$b \in \nats$, does Player~$0$ have a strategy~$\sigma$ with~$\Cost_{v^*}(\sigma) \leq b$ in~$\game$?
	\end{quotation}
\end{lem}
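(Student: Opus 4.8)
The plan is to prove Lemma~\ref{lem:optimality:exptime-hardness} by exhibiting a polynomial-time reduction from the problem of solving countdown games, which is \exptime-hard by Proposition~\ref{prop:optimality:exptime-hardness:countdown-games-hardness}. Given a countdown game~$\game = (\arena, \countdown(\weight, c^*))$ and an initial vertex~$v^*$, I would construct the parity game with weights~$\game'$ together with the distinguished vertex~$v_\top$ and the bound~$b = c^*$ exactly as laid out above (new sink~$v_\top$ of color~$1$, the old sink~$v_\bot$ recolored to~$2$, edges~$(v_\bot, v_\top)$ and~$(v_\top, v^*)$, extra edges~$V_1 \times \set{v_\bot}$ of weight~$0$, and weight~$2c^*$ on all edges~$V_0 \times \set{v_\bot}$). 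The size of~$\game'$ is linear in the size of~$\game$, and since countdown games encode weights and~$c^*$ in binary, so does~$\game'$; hence the reduction is polynomial and the bound~$b = c^*$ is given in binary as required. The correctness of this reduction is precisely Lemma~\ref{lem:optimality:exptime-hardness:correctness}, so the proof of the present lemma is essentially a one-line consequence once that equivalence is in place.

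Concretely, the proof body would read: for a countdown game~$\game$ and vertex~$v^*$, build~$\game'$ and~$v_\top$ as above in polynomial time, and observe that by Lemma~\ref{lem:optimality:exptime-hardness:correctness}, Player~$0$ wins~$\game$ from~$v^*$ if and only if Player~$0$ has a strategy~$\sigma$ with~$\Cost_{v_\top}(\sigma) \leq c^*$ in~$\game'$. This is an instance of the threshold problem for parity games with weights with~$v = v_\top$ and~$b = c^*$. Since solving countdown games is \exptime-hard by Proposition~\ref{prop:optimality:exptime-hardness:countdown-games-hardness}, and the reduction is computable in polynomial time, the threshold problem for parity games with weights is \exptime-hard as well.

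The genuine content — which the excerpt has already discharged in Lemma~\ref{lem:optimality:exptime-hardness:correctness} — is the two directions of that equivalence, and I would expect this to be the main obstacle in a from-scratch proof. The forward direction translates a winning strategy~$\sigma$ for Player~$0$ in~$\game$ into~$\sigma'$ in~$\game'$ by replaying~$\sigma$ on the infix since the last visit to~$v_\top$; the key invariant is that each such infix~$\pi_j$ either ends at a Player~$1$ vertex with accumulated weight in~$[-c^*, 0]$ or ends at a Player~$0$ vertex with accumulated weight exactly~$-c^*$, so that the weight~$2c^*$ on the edge to~$v_\bot$ never pushes the amplitude of~$v_\top\pi_j v_\bot$ above~$c^*$. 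The backward direction is by contraposition: from a winning strategy~$\tau$ for Player~$1$ in the countdown game, build~$\tau'$ that mimics~$\tau$ until the credit would be exceeded and then moves to~$v_\bot$; one argues that either~$\pi_j$ ends at a Player~$1$ vertex with weight~$< -c^*$, giving amplitude~$> c^*$ directly, or ends at a Player~$0$ vertex where~$\weight'(\pi_j) \neq -c^*$ (since~$\weight'(\pi_j) = -c^*$ would contradict~$\tau$ being winning), whence either~$\weight'(\pi_j) < -c^*$ or, using the~$2c^*$ jump to~$v_\bot$, the amplitude exceeds~$c^*$ at~$v_\bot$. In both cases infinitely many requests posed at~$v_\top$ are answered with cost~$> c^*$, so~$\Cost_{v_\top}(\tau') > c^*$ and Remark~\ref{rem:value} finishes the argument. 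With Lemma~\ref{lem:optimality:exptime-hardness:correctness} available, the proof of Lemma~\ref{lem:optimality:exptime-hardness} is immediate.

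\begin{proof}
Given a countdown game~$\game = (\arena, \countdown(\weight, c^*))$ and a vertex~$v^*$ of~$\game$, we construct in polynomial time the parity game with weights~$\game'$, the vertex~$v_\top$, and the bound~$b = c^*$ as described above.
The arena of~$\game'$ has one more vertex than~$\arena$, and the weights and~$c^*$ are encoded in binary in both~$\game$ and~$\game'$, so the construction is a polynomial-time reduction.
By Lemma~\ref{lem:optimality:exptime-hardness:correctness}, Player~$0$ wins~$\game$ from~$v^*$ if and only if he has a strategy~$\sigma$ with~$\Cost_{v_\top}(\sigma) \leq c^*$ in~$\game'$, i.e., if and only if the instance~$(\game', v_\top, c^*)$ of the threshold problem for parity games with weights is positive.
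Since solving countdown games is \exptime-hard by Proposition~\ref{prop:optimality:exptime-hardness:countdown-games-hardness}, the threshold problem for parity games with weights is \exptime-hard as well.
\end{proof}
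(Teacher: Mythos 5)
Your proposal is correct and follows essentially the same route as the paper: both reduce from countdown games via the construction of~$\game'$ and~$v_\top$ with threshold~$c^*$, invoke Lemma~\ref{lem:optimality:exptime-hardness:correctness} for the correctness of the reduction, and conclude by Proposition~\ref{prop:optimality:exptime-hardness:countdown-games-hardness}. The additional recapitulation of the two directions of the correctness lemma is accurate but not needed, since that lemma is already established.
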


\begin{proof}
We reduce the problem of solving countdown games to the given problem.
To this end, let~$\game = (\arena, \countdown(\weight, c^*))$ be a countdown game and let~$v^*$ be a vertex of~$\arena$.
We construct the parity game with weights~$\game'$ as described above.
Due to Lemma~\ref{lem:optimality:exptime-hardness:correctness}, Player~$0$ wins~$\game$ from~$v^*$ if and only if he has a strategy~$\sigma$ with $\Cost_{v_\top} (\sigma)\leq c^*$.
As the problem of solving countdown games is known to be \exptime-hard due to Proposition~\ref{prop:optimality:exptime-hardness:countdown-games-hardness}, this implies the desired result.
\end{proof}

In the following section, we consider the memory requirements of both players when playing optimally.


\subsection{Memory Requirements}%
\label{subsec:optimality:memory}
Recall that, if Player~$0$ just aims to win a parity game with weights with~$n$ vertices,~$d$ odd colors, and largest absolute weight~$W$,
then a memory structure of size polynomial in~$n$,~$d$, and~$W$ suffices to implement a winning strategy due to Theorem~\ref{thm:memory}.
Dually, Player~$1$ requires, in general, infinite memory in order to implement a strategy winning for her, again due to Theorem~\ref{thm:memory}.

In this section, we show that these bounds change significantly in the context of the threshold problem: 
In order to implement a strategy that enforces cost of at most~$b$ in a parity game with weights with~$d$ odd colors,
memory of size polynomial in~$b$ and exponential in~$d$ is both necessary and sufficient for Player~$0$.
Dually, if Player~$1$ just aims to enforce a cost of the resulting play larger than some threshold~$b$,
strategies of size exponential in~$d$ suffice for her to do so.
Both of these bounds are tight.

We first argue that exponential memory indeed suffices for both players to satisfy or violate a given threshold in a parity game with weights, respectively.
To this end, recall that, given a parity game with weights~$\game$ and a threshold~$b$, we determined the solution of the threshold problem by solving the~$b$-threshold game~$\game'$ of~$\game$.
This threshold game is a classical parity game whose arena consists of the arena of~$\game$ augmented with request functions and an overflow counter that is bounded from above by~$n$.

Furthermore recall that, in the proof of Lemma~\ref{lem:optimality:threshold-game-equivalence}, we showed how to leverage a winning strategy for either player in~$\thrgame$ in order to construct a winning strategy for them in~$\game$.
To this end, we implemented the strategy for Player~$0$ in~$\game$ using the set~$\set{0,\dots,n} \times R$ as memory states, where the set~$\set{0,\dots,n}$ implements the overflow counter and where~$R$ denotes the set of request functions.

The first component of that memory structure is, however, irrelevant for Player~$0$:
If he has a winning strategy whose behavior is dependent not only on the current vertex and the request function, but also on the value of the overflow counter, then he also has one that only depends on the current vertex and the request function.

\begin{lem}%
\label{lem:optimality:memory:upper-bound:player0}
Let $\game$ be a parity game with weights containing~$d$ odd colors and let~$b \in \nats$.
Moreover, let~$v^*$ be a vertex of~$\game$.
If Player~$0$ has a strategy $\sigma$ in~$\game$ with $\Cost_{v^*}(\sigma) = b$, then he also has a strategy~$\sigma'$ with~$\Cost_{v^*}(\sigma') \leq b$ and $\card{\sigma'} = {(2b^2 + b + 2)}^d$.
\end{lem}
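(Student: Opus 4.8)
## Proof Plan for Lemma~\ref{lem:optimality:memory:upper-bound:player0}

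The plan is to invoke the threshold game construction from this section. Since Player~$0$ has a strategy $\sigma$ in $\game$ with $\Cost_{v^*}(\sigma) \leq b$, Lemma~\ref{lem:optimality:threshold-game-equivalence} tells us that Player~$0$ wins the $b$-threshold game $\thrgame$ from $(v^*, \init(v^*))$. Since $\thrgame$ is a classical parity game, Player~$0$ has a \emph{positional} winning strategy $\sigma'_{\thr}\colon V'_0 \to V'$ from $(v^*, \init(v^*))$. As observed in the proof of Lemma~\ref{lem:optimality:threshold-games:correctness:right-to-left}, this positional strategy, combined with the memory structure $\mem = (M, \init, \update)$ used to build $\thrgame$, induces a finite-state strategy $\sigma'$ for Player~$0$ in $\game$ via the next-move function $\nxt(v, m) = v'$ whenever $\sigma'_{\thr}(v, m) = (v', m')$, and this strategy satisfies $\Cost_{v^*}(\sigma') \leq b$.

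The key remaining point is the bound on $\card{\sigma'}$. A priori this strategy uses memory $M = \set{0, \dots, n} \times R$, giving $(n+1)\card{R}$ states. The first step is to argue that the overflow-counter component $\set{0, \dots, n}$ is irrelevant for Player~$0$. To see this, note that along any play consistent with $\sigma'_{\thr}$ from $(v^*, \init(v^*))$, the overflow counter never saturates (it stays below $n$), since vertices with saturated overflow counter form a losing sink for Player~$0$. But then, on the reachable part of $\thrgame$, the value of the overflow counter is a \emph{function} of the history — and, more to the point, if Player~$0$ wins from $(v, o, r)$ for some $o < n$, then a standard argument shows he wins from $(v, o', r)$ for every $o' < n$, because decreasing the overflow counter only helps Player~$0$ (it weakens the parity obligation by delaying saturation; formally, there is a "monotonicity" simulation mapping plays from $(v, o', r)$ to plays from $(v, o, r)$ that preserves the color sequence as long as saturation is avoided, which it is under a winning strategy). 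Hence Player~$0$ has a winning strategy in $\thrgame$ whose next move at $(v, o, r)$ depends only on $(v, r)$, i.e., a strategy implementable with memory $R$ rather than $\set{0,\dots,n}\times R$.

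It then remains to count $\card{R}$. Recall $R$ is the set of $b$-bounded request functions $r\colon D \to \set{\bot} \cup I$ where $I = \set{(l,h) \mid -b \leq l \leq h \leq b}$. We need $\card{R} \leq (2b^2 + b + 2)^d$. Here I expect a minor discrepancy to be navigated: the excerpt earlier computed $\card{I} = 2b^2 + 3b + 1$, giving $\card{R} = (2b^2 + 3b + 2)^d$, whereas the lemma claims $(2b^2 + b + 2)^d$. The main (and only nonroutine) obstacle is therefore reconciling this count — one must observe that not all pairs $(l,h) \in I$ are actually \emph{reachable} as values of a request function. Specifically, after the \textbf{Request} step of $\update$, an open request for an odd color $c$ always has $l \leq 0 \leq h$ (the request is (re)opened with value $(0,0)$ and the min/max with $0$ is taken), so the reachable intervals are those with $-b \leq l \leq 0 \leq h \leq b$, of which there are $(b+1)^2$; but even this must be intersected carefully with the effect of the \textbf{Weight} step. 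I would carefully recount: intervals with $l \le 0 \le h$, $-b\le l$, $h \le b$ number $(b+1)^2 = b^2 + 2b + 1$; intervals arising only transiently between updates extend this. Tracking precisely which intervals appear as $r(c)$ at vertices of $\thrgame$ — rather than as intermediate values $r'_\I, r'_\II, r'_\III$ — should yield exactly $2b^2 + b + 1$ genuinely reachable intervals, hence $\card{R} = (2b^2 + b + 2)^d$ once $\bot$ is included. With this recount in hand, restricting $\sigma'_{\thr}$ to the reachable vertices of $\thrgame$ and implementing the resulting strategy with memory states drawn only from the reachable request functions gives the claimed bound $\card{\sigma'} = (2b^2 + b + 2)^d$, completing the proof.
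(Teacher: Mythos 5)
Your overall route is the paper's: pass to the $b$-threshold game, take a positional winning strategy $\sigma_b$ there, and argue that the overflow-counter component of the memory can be discarded so that only the request functions remain. However, the step where you discard the counter has a real gap. Your monotonicity claim --- ``if Player~$0$ wins from $(v,o,r)$ for some $o<n$ then he wins from $(v,o',r)$ for every $o'<n$'' --- is false in the direction $o'>o$: a larger counter value saturates sooner, so winning from a low counter value does not imply winning from a higher one. The correct direction (winning from a higher reachable counter value implies winning from lower ones) is what the paper exploits, and it does so constructively: it sets $o_{v,r}=\max(\set{0}\cup\set{o\mid (v,o,r)\in\reach})$, lets the next move at $(v,r)$ be $\sigma_b$'s move at $(v,o_{v,r},r)$, and then proves that along any play consistent with the resulting strategy the values $o_{v_j,r_j}$ are monotonically non-decreasing and bounded by $n$, hence stabilize, after which the tracked play in $\thrgame$ is genuinely consistent with $\sigma_b$ and therefore has cost at most $b$. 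Without this choice of a canonical counter value and the stabilization argument, ``the next move depends only on $(v,r)$'' is an assertion, not a proof.

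The second gap is your recount of $\card{R}$. Your premise that every reachable interval satisfies $l\le 0\le h$ is wrong: the \textbf{Request} step only renormalizes the interval of the color of the vertex just entered; intervals for all \emph{other} open colors are merely shifted by the edge weight in the \textbf{Weight} step and can drift to any position in $[-b,b]$, so intervals with $0<l$ or $h<0$ do occur at vertices of $\thrgame$. Consequently the claimed count of ``exactly $2b^2+b+1$ genuinely reachable intervals'' is unsupported and almost certainly false in general. The paper's own proof makes no such refinement: it uses $M'=R$ with $\card{R}=(\card{I}+1)^d=(2b^2+3b+2)^d$, so the exponent base $2b^2+b+2$ in the lemma statement is best read as a typo for $2b^2+3b+2$ rather than as a sharper bound to be established by restricting to reachable intervals.
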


\begin{proof}
Recall that we argued previously that if Player~$0$ has a strategy~$\sigma$ in~$\game$ with~$\Cost_{v^*}(\sigma) \leq b$, then he also has a winning strategy from~$(v^*, \init(v^*))$ in the threshold game~$\thrgame$ as defined in Section~\ref{subsec:optimality:exptime-membership}.
Moreover, since~$\thrgame$ is a parity game, we obtain that if Player~$0$ wins~$\thrgame$ from~$(v^*, \init(v^*))$, then he also has a positional winning strategy doing so.
Thus, let~$\sigma_b$ be a positional winning strategy for Player~$0$ from~$(v^*, \init(v^*))$ in~$\thrgame$.

Let~$\reach$ be the set of vertices reached by plays starting in~$(v^*, \init(v^*))$ and consistent with~$\sigma_b$.
Since~$\sigma_b$ is positional, and since the parity condition is prefix-independent,~$\sigma_b$ is winning from all vertices in~$\reach$.
Furthermore, for each vertex~$v$ and each request function~$r$, we define
\[
	o_{v,r} = \max(\set{0} \cup \set{o \mid (v, o, r) \in \reach}) \enspace ,
\]
i.e.,~$o_{v,r}$ is the maximal value such that Player~$0$ wins~$\thrgame$ from~$(v, o_{v, r}, r)$ using~$\sigma_b$, or zero, if no such value exists.

We now define the strategy~$\sigma'$ for Player~$0$ in~$\game$ such that it has the above properties.
To this end, recall that we defined the memory structure~$\mem = (M, \init, \update)$ for the construction of~$\thrgame$, where~$M = \set{0,\dots, n} \times R$, and where~$R$ is the set of request functions.
We define~$M' = R$, the update function~$\update'(r, (v, v')) = r'$, if~$\update((o_{v, r}, r), (v, v')) = (o', r')$, as well as the initialization function~$\init'(v) = (o_{v, r_v}, r_v)$, if $\init(v) = (0, r_v)$.
Finally, we define the next-move function~$\nxt'(v, r) = v'$, where~$v'$ is the unique vertex that satisfies $\sigma_b(v, o_{v, r}, r) = (v', o', r')$, and claim that the strategy~$\sigma'$ implemented by~$\mem' = (M', \init', \update')$ and~$\nxt'$ has $\Cost_{v^*}(\sigma') \leq b$, which suffices to show the desired statement.

To prove this claim, let~$\rho = v_0v_1v_2\cdots$ be a play starting in~$v^*$ and consistent with~$\sigma'$ and let~$(v_0, r_0)(v_1, r_1)(v_2, r_2)\cdots$ be the unique play defined via~$r_0 = \init'(v_0)$ and~$r_j = \update'(r_{j-1}, (v_j, v_{j+1}))$ for all~$j > 0$.

A straightforward induction yields $(v_j, o_{v_j, r_j}, r_j) \in \reach$ for all~$j \in \nats$.
We first argue that we have~$o_{v_j, r_j} \leq o_{v_{j+1}, r_{j+1}}$ for all~$j \in \nats$.
To this end, let $(o, r)  = \update((o_{v_j, r_j}, r_j), (v_j, v_{j+1}))$.
By construction of the arena of the threshold game we have~$o \geq o_{v_j, r_j}$ and~$r = r_{j+1}$.
Moreover, since~$(v_j, o_{v_j, r_j}, r_j) \in \reach$ and due to our definition of~$\sigma'$, we obtain $(v_{j+1}, o, r) = (v_{j+1}, o, r_{j+1}) \in \reach$.
Hence,~$o \leq o_{v_{j+1}, r_{j+1}}$, which implies $o_{v_{j+1}, r_{j+1}} \geq o_{v_j, r_j}$.

Thus, the~$o_{v_j, r_j}$ are monotonically increasing.
Furthermore, we easily obtain~$o_{v_j, r_j} < n$ due to all~$(v_j, o_{v_j, r_j}, r_j)$ being in~$\reach$, the definition of~$\reach$, and due to~$\sigma_b$ being winning for Player~$0$ from~$(v^*, \init(v^*))$.
Hence, the sequence of the $o_{v_j, r_j}$ eventually stabilizes, i.e., there exists a~$j \in \nats$ such that~$o_{v_{j'}, r_{j'}} = o_{v_j, r_j}$ for all~$j' \geq j$.

We argue that the play $(v_j, o_{v_j, r_j}, r_j)(v_{j+1}, o_{v_{j+1}, r_{j+1}}, r_{j+1})(v_{j+2}, o_{v_{j+2}, r_{j+2}}, r_{j+2}) \cdots$ is consistent with~$\sigma_b$:
Let~$j' \geq j$ be such that $(v_{j'}, o_{v_{j'}, r_{j'}}, r_{j'}) \in V'_0$ and let $\sigma_b(v_{j'}, o_{v_{j'}, r_{j'}}, r_{j'}) = (v_{j'+1}, o, r_{j'+1})$.
We then clearly obtain~$o \leq o_{v_{j'+1}, r_{j'+1}}$ by definition of the latter.
Furthermore, we have~$o_{v_{j'}, r_{j'}} \leq o$ due to the construction of the arena of the threshold game, which yields $o = o_{v_{j'+1}, r_{j'+1}}$ due to our assumption $o_{v_{j'}, r_{j'}} = o_{v_{j'+1}, r_{j'+1}}$.

Thus, the play $(v_j, o_{v_j, r_j}, r_j)(v_{j+1}, o_{v_{j+1}, r_{j+1}}, r_{j+1})(v_{j+2}, o_{v_{j+2}, r_{j+2}}, r_{j+2}) \cdots$ starts in a vertex from~$\reach$, is consistent with~$\sigma_b$, and shares a color sequence with a suffix of~$\rho$ due to $o_j \leq o_{v_j, r_j} < n$.
The strategy $\sigma_b$ being winning for Player~$0$ from~$(v^*, \init(v^*))$, the construction of~$\thrgame$ and prefix-independence of the parity condition with weights then yield~$\weight(\rho) \leq b$.
\end{proof}

For Player~$1$, in contrast, it is open whether one can omit the overflow counter when implementing a strategy with cost at least~$b$.
Hence, we have to include it in the resulting memory structure.
Recall, however, that we have argued in Section~\ref{subsec:optimality:exptime-membership} that we are able to omit those memory states modeling a saturated overflow counter, thus slightly reducing the size of the resulting strategy in comparison to a naive implementation.
The following upper bound thus results directly from the results of Section~\ref{subsec:optimality:exptime-membership}.

\begin{cor}%
\label{cor:optimality:memory:upper-bound:player1}
Let $\game$ be a parity game with weights with~$n$ vertices and~$d$ odd colors and let~$b \in \nats$.
Moreover, let~$v^*$ be a vertex of~$\game$.
If Player~$1$ has a strategy~$\tau$ in~$\game$ with~$\Cost_{v^*}(\tau) = b$, then she also has a strategy~$\tau'$ with~$\Cost_{v^*}(\tau') \geq b$ and~$\card{\tau'} = n {(2b^2 + 3b + 2)}^d$.
\end{cor}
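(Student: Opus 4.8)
The plan is to derive Corollary~\ref{cor:optimality:memory:upper-bound:player1} directly from the construction carried out in Section~\ref{subsec:optimality:exptime-membership}, essentially by reading off the size of the memory structure~$\mem'$ that was already built there. Recall that the left-to-right direction of Lemma~\ref{lem:optimality:threshold-game-equivalence} proceeds by contraposition: if Player~$0$ has no strategy of cost at most~$b$, then~$\thrgame$ is won by Player~$1$ (it is a parity game, hence determined), and from a \emph{positional} winning strategy~$\tau'$ for her in~$\thrgame$ we constructed a finite-state strategy~$\tau$ for her in~$\game$, implemented by the memory structure~$\mem' = (M', \init', \update')$ with~$M' = \set{0,\dots,n} \times R$, together with the next-move function~$\nxt'$. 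So the first step is to invoke this construction verbatim.

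The second step is to account for the size. By definition~$\card{R} = (\card{I}+1)^d = (2b^2 + 3b + 2)^d$, so a naive count gives~$\card{M'} = (n+1)(2b^2+3b+2)^d$. To get the claimed bound~$n(2b^2+3b+2)^d$, I would re-use the observation already made in the running text of Section~\ref{subsec:optimality:exptime-membership}: along any play in~$\arena \times \mem'$ starting from~$(v^*, \init'(v^*))$, a straightforward induction (the same one used there, leveraging the argument of Weinert and Zimmermann~\cite{WeinertZimmermann17}) shows that the overflow counter never saturates, i.e., the first component stays in~$\set{0,\dots,n-1}$. Hence the memory states with first component equal to~$n$ are never reached and can simply be deleted from~$M'$, yielding a memory structure of size~$n \cdot \card{R} = n(2b^2+3b+2)^d$ that implements the very same strategy~$\tau$.

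The third step is to confirm that this~$\tau$ has the required quality, namely~$\Cost_{v^*}(\tau) \geq b$ whenever Player~$1$ had \emph{some} strategy~$\tau$ with~$\Cost_{v^*}(\tau) = b$. But the hypothesis~$\Cost_{v^*}(\tau) = b$ is exactly what rules out Player~$0$ having a strategy of cost at most~$b-1$ (since~$b$ is a natural number), which is precisely the situation in which Lemma~\ref{lem:optimality:threshold-game-equivalence} (contrapositive) and its proof apply with threshold~$b-1$; this yields a strategy for Player~$1$ enforcing~$\limsup_{j\to\infty}\Cor(\rho,j) > b-1$, i.e.,~$\geq b$, on every consistent play, and the analysis of Section~\ref{subsec:optimality:exptime-membership} shows this strategy is the finite-state one of the stated size. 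Alternatively, and perhaps more cleanly, one simply applies the proof of the left-to-right direction of Lemma~\ref{lem:optimality:threshold-game-equivalence} with bound~$b$ itself to the hypothesis that Player~$0$ has no strategy of cost~$\leq b$ — which again follows from~$\Cost_{v^*}(\tau) = b$ combined with Remark~\ref{rem:value}.

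I expect no genuine obstacle here, since all the heavy lifting is already done; the corollary is a bookkeeping consequence of the existing construction. The only mild subtlety — the one point I would be careful about — is making sure that the ``delete the saturated states'' reduction is stated correctly: one must observe that~$\update'$ never produces a state with first component~$n$ from a reachable state (which is exactly the non-saturation induction already cited), so that the restricted update function is well defined on the reduced set~$M'$, and that~$\init'(v^*)$ lies in the reduced set. Everything else is routine.
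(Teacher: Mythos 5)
Your main argument is correct and is essentially the route the paper takes: the paper gives no separate proof of this corollary, but derives it exactly as you do, from the Player~$1$ construction in the proof of Lemma~\ref{lem:optimality:threshold-game-equivalence} together with the observation that the non-saturation of the overflow counter lets one drop the states with first component~$n$, shrinking $(n+1)\card{R}$ to $n\card{R}$. Your handling of the threshold shift (apply the contrapositive with threshold $b-1$, so that $\Cost_{v^*}(\tau)=b$ and Remark~\ref{rem:value} rule out a Player~$0$ strategy of cost at most $b-1$, and the resulting strategy enforces cost $>b-1$, i.e., $\geq b$) is in fact more careful than the paper's one-line justification, and the size $n{(2(b-1)^2+3(b-1)+2)}^d$ it produces is dominated by the stated bound; the degenerate case $b=0$ is trivial since costs are nonnegative.

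One caveat: your ``alternatively, and perhaps more cleanly'' variant at the end of step three is not sound. From $\Cost_{v^*}(\tau)=b$ and Remark~\ref{rem:value} you only get $\Cost_{v^*}(\sigma)\geq b$ for every Player~$0$ strategy $\sigma$, which does \emph{not} exclude a Player~$0$ strategy with $\Cost_{v^*}(\sigma)=b$, i.e., of cost at most $b$; indeed, when the value is exactly $b$ Player~$0$ typically wins the $b$-threshold game, so that instantiation of the contrapositive is unavailable. Stick with the $b-1$ instantiation.
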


Having argued that exponential memory suffices for both players to implement optimal strategies, we now turn our attention to providing matching lower bounds.
These exponential lower bounds are inherited from the special case of finitary parity games, for which Weinert and Zimmermann~\cite{WeinertZimmermann17} showed that both players require exponential memory in order to implement strategies that ensure or violate a given threshold.
We reprint these results here for the sake of completeness.

\begin{propC}[\cite{WeinertZimmermann17}]\hfill
\begin{enumerate}
\item For every $d \geq 1$ there exists a finitary parity game~$\game_{d}$ with a vertex~$v^*$ such that
	\begin{itemize}

		\item $\game_{d}$ has $d$ odd colors and $\card{\game_{d}} \in \bigo(d^2)$,

		\item Player~$0$ has a strategy~$\sigma$ in~$\game_{d}$ with~$\Cost_{v^*}(\sigma) = d^2 + 2d$,

		\item there exists no strategy~$\sigma'$ for Player~$0$ with~$\Cost_{v^*}(\sigma') < d^2 + 2d$, and

		\item for every strategy~$\sigma$ for Player~$0$ in~$\game_{d}$, $\Cost_{v^*}(\sigma) = d^2 + 2d$ implies $\card{\sigma} \geq 2^{d-1}$.

	\end{itemize}

\item For every $d \geq 1$ there exists a finitary parity game~$\game_d$ with a vertex~$v^*$ such that
	\begin{itemize}

		\item $\game_d$ has $\bigo(d)$ many vertices and $2d$ odd colors,

		\item Player~$1$ has a strategy~$\tau$ in~$\game_d$ with~$\Cost_{v^*}(\tau) = 5(d-1) + 7$,

		\item there exists no strategy~$\tau'$ for Player~$1$ with~$\Cost_{v^*}(\tau') > 5(d-1) + 7$, and

		\item every strategy~$\tau$ for Player~$0$ in~$\game_d$ with~$\Cost_{v^*}(\tau) = 5(d-1) + 7$ has size at least~$2^d$.

	\end{itemize}
\end{enumerate}
\end{propC}


\section{Conclusions and Future Work}%
\label{sec:conclusion}
We have established that parity games with weights and bounded parity games fall into the same complexity class as energy parity games.
This is interesting, because, while solving such games has the signature complexity class~$\np \cap \conp$, they are not yet considered a class in their own right.
It is also interesting because their properties appear to be inherently different:
While they both combine the qualitative parity condition with quantified costs, parity games with weights \emph{combine} these aspects on the property level, whereas energy parity games simply look at the combined---and totally unrelated---properties.
We show the characteristic properties of parity games and of games with combinations of a parity condition with quantitative conditions relevant for this work in Table~\ref{tab:characteristics}.

\begin{table}[h]
\footnotesize
\centering
\begin{tabular}{lccccc}\toprule
	& Complexity & Mem.\ Pl.~$0$/Pl.~$1$ &  Bounds \\
	\midrule
	Parity Games~\cite{CaludeJKLS/17/qp} & quasi-poly. & pos./pos. & -- \\ 
	Energy Parity Games~\cite{ChatterjeeDoyen12,DaviaudJurdzinskiLazic18} &  pseudo-quasi-poly. & $\bigo(ndW)$/pos. & $\bigo(nW)$ \\ \midrule
	Finitary Parity Games~\cite{ChatterjeeHenzingerHorn09} & poly. & pos./inf. & $\bigo(nW)$ \\
	Parity Games with Costs~\cite{FijalkowZimmermann14,MogaveroMS15} &  quasi-poly. & pos./inf. & $\bigo(nW)$ \\
	Parity Games with Weights &  pseudo-quasi-poly. & $\bigo(nd^2W)$/inf. & $\bigo({(ndW)}^2)$ \\ \bottomrule
\end{tabular}
\caption{Characteristic properties of variants of parity games.}%
\label{tab:characteristics}
\end{table}

As future work, we are looking into the natural extensions of parity games with weights to Streett games with weights~\cite{ChatterjeeHenzingerHorn09,FijalkowZimmermann14}, and at the complexity of determining optimal bounds and strategies that obtain them~\cite{WeinertZimmermann17}.
We are also looking at variations of the problem.
The two natural variations are
\begin{itemize}
 \item to use a one-sided definition (instead of the absolute value) for the amplitude of a play, i.e., using $\ampl(\pi) = \sup_{j < \card{\pi}} \weight(v_0 \cdots v_j) \in \nats_\infty$ (instead of $\ampl(\pi) = \sup_{j < \card{\pi}} \abs{\weight(v_0 \cdots v_j)} \in \nats_\infty$), and
 \item to use an arbitrary consecutive subsequence of a play, using the definition $\ampl(\pi) = \sup_{j \leq k < \card{\pi}} \abs{\weight(v_j \cdots v_k)} \in \nats_\infty$.
\end{itemize}
There are good arguments in favor and against using these individual variations---and their combination to $\ampl(\pi) = \sup_{j \leq k < \card{\pi}} \weight(v_j \cdots v_k) \in \nats_\infty$---but we feel that the introduction of parity games with weights benefit from choosing one of the four combinations as \emph{the} parity games with weights.

We expect the complexity to rise when changing from maximizing over the absolute value to maximizing over the value, as this appears to be close to pushdown boundedness games~\cite{ChatterjeeF/13/boundednessGames}, and we conjecture this problem to be \pspace-complete.


\bibliographystyle{alpha}
\bibliography{literature}

\end{document}